\documentclass[aps,pra,reprint,superscriptaddress]{revtex4-1}

\usepackage{adjustbox}
\usepackage{phaistos}
\usepackage{hyperref}      
\usepackage{url}           
\usepackage{booktabs}      
\usepackage{amsfonts}      
\usepackage{microtype}      
\usepackage{natbib}
\usepackage[normalem]{ulem}
\usepackage{bm}
\usepackage{braket}
\usepackage{graphicx}
\usepackage{subfigure} 
\usepackage{amsthm}
\usepackage{algorithmic}
\usepackage[linesnumbered,ruled]{algorithm2e}
\SetKwInOut{Parameter}{parameter}
\newtheorem{thm}{Theorem}
\newtheorem{lem}{Lemma}
\newtheorem{coro}{Corollay}
\newtheorem{definition}{Definition}

\newtheorem{assumption}{Assumption}
\usepackage{mathtools}
 
\usepackage{caption} 
\captionsetup{justification=raggedright,singlelinecheck=false}

\usepackage[table,xcdraw]{xcolor}

\DeclareMathOperator{\btheta}{\bm{\theta}}
\DeclareMathOperator{\Tr}{Tr}

\DeclareMathOperator{\RY}{R_Y}
\DeclareMathOperator{\CRY}{CR_Y}
\DeclareMathOperator{\RZ}{R_Z}

\DeclareMathOperator{\MMD}{MMD}
\newcommand{\Pro}{\mathbb{P}}

\DeclareMathOperator{\PP}{\mathbb{P}}
\newcommand{\rx}{\bm{x}}
\newcommand{\ry}{\bm{y}}
\DeclareMathOperator{\QQ}{\mathbb{Q}}
\DeclareMathOperator{\EE}{\mathbb{E}}
\DeclareMathOperator{\GHZ}{\text{GHZ}} 

\DeclareMathOperator{\Ccal}{\mathcal{C}}

\DeclareMathOperator{\Hcal}{\mathbb{H}}

\newcommand{\pbra}[1]{\left( #1 \right)}

\begin{document}

\title{Power of Quantum Generative Learning}
 
\author{Yuxuan Du}
\affiliation{JD Explore Academy, Beijing 101111, China}

\author{Zhuozhuo Tu}
\affiliation{School of Computer Science, The University of Sydney, Darlington, NSW 2008, Australia}

\author{Bujiao Wu}
\affiliation{Center on Frontiers of Computing Studies, Peking University, Beijing 100871, China}
\affiliation{School of Computer Science, Peking University, Beijing 100871, China}

\author{Xiao Yuan}
\affiliation{Center on Frontiers of Computing Studies, Peking University, Beijing 100871, China}
\affiliation{School of Computer Science, Peking University, Beijing 100871, China}

\author{Dacheng Tao}
\affiliation{JD Explore Academy, Beijing 101111, China}

\begin{abstract}
	The intrinsic probabilistic nature of quantum mechanics invokes endeavors of designing quantum generative learning models (QGLMs).	Despite the empirical achievements, the foundations and the potential advantages of QGLMs remain largely obscure. To narrow this knowledge gap, here we explore the generalization property of QGLMs, the capability to extend the model from learned to unknown data.  We consider two prototypical QGLMs, quantum circuit Born machines and quantum generative adversarial networks, and explicitly give their generalization bounds. The result identifies superiorities of QGLMs over classical methods when quantum devices can directly access the target distribution and quantum kernels are employed. We further employ these generalization bounds to exhibit potential advantages in quantum state preparation and Hamiltonian learning. Numerical results of QGLMs in loading Gaussian distribution and estimating ground states of parameterized Hamiltonians accord with the theoretical analysis.  Our work opens the avenue for quantitatively understanding the power of quantum generative learning models.
\end{abstract}

\maketitle
 
\section{Introduction}
Learning is a generative activity that constructs its own interpretations of information and draws inferences on them \cite{wittrock1989generative}. This comprehensive philosophy sculpts a substantial subject in  artificial intelligence, which is designing powerful generative learning models (GLMs) \cite{lecun2015deep,schmidhuber2015deep} to capture the distribution $\mathbb{Q}$ describing the real-world data (see Fig.~\ref{fig:QCBM_scheme}(a)). Concisely, a fundamental concept behind GLMs is estimating $\mathbb{Q}$ by a tunable probability distribution $\mathbb{P}_{\btheta}$. In the past decades, a plethora of GLMs, e.g., the Helmholtz machine  \cite{dayan1995helmholtz}, variational auto-encoders \cite{Kingma2013VAE}, and generative adversarial networks (GANs) \cite{goodfellow2014generative,zhao2017energy},  have been proposed. Attributed to the efficacy and flexibility of handling  $\mathbb{P}_{\btheta}$, these GLMs have been broadly applied to myriad scientific domains and gained remarkable success, including image synthesis and editing \cite{YiLLR19,karras2019style,park2019semantic}, medical imaging \cite{armanious2020medgan}, molecule optimization \cite{maziarka2020mol,mendez2020novo}, and quantum computing \cite{ahmed2021quantum,carrasquilla2019reconstructing,smith2021efficient,rocchetto2018learning,melko2019restricted,carleo2018constructing}. Despite the  wide success, their limitations have recently been recognized from different perspectives, such as expensive runtime and inferior performance towards complex datasets \cite{Du2019implicit,dosovitskiy2016generating,arora2018do,che2016mode,mescheder2018training,gui2020review}.

Envisioned by the intrinsic probabilistic nature of quantum mechanics and the superior power of quantum computers \cite{feynman2017quantum,biamonte2017quantum,harrow2017quantum},    quantum generative learning models (QGLMs) are widely believed to enhance the ability of GLMs. Concrete evidence has been provided by Refs. \cite{gao2018quantum,lloyd2018quantum}, showing that fault-tolerant based QGLMs could surpass GLMs with  stronger model expressivity and runtime speedups. Since fault-tolerant quantum computing is still in absence, attention has recently shifted to design QGLMs that can be efficiently carried out on noisy intermediate-scale quantum (NISQ) machines \cite{preskill2018quantum,arute2019quantum,wu2021strong} with  advantages on specific tasks \cite{huang2021quantum,huang2021power,Wang2021towards,du2021exploring}. For this purpose, a leading strategy is constructing QGLMs through variational quantum algorithms (VQAs) \cite{cerezo2021variational,bharti2021noisy}. Depending on  \textit{whether the probability distribution $\PP_{\btheta}$ is explicitly formulated or not} \footnote{Implicit probabilistic models do not specify the distribution of the data itself, but rather define a stochastic process that, after training, aims to draw samples from the underlying data distribution. Refer to SM \ref{apped:sum_GLMs} for details}, these QGLMs can be mainly divided into explicit QGLMs and implicit QGLMs. Primary protocols of these two classes are  \textit{quantum circuit Born machines} (QCBMs) \cite{benedetti2019generative,liu2018differentiable,coyle2021quantum,du2018expressive} and \textit{quantum generative adversarial networks} (QGANs) \cite{zoufal2019quantum,zeng2019learning,huang2021experimental,huang2021quantum2,romero2021variational,gili2022evaluating,nakaji2021quantum,braccia2022quantum,anand2021noise,du2022Effcient,huang2021quantum2}.  Experimental studies have demonstrated the feasibility of QGLMs for different learning tasks, e.g., image generation \cite{zhu2019training,huang2021experimental}, state approximation \cite{huang2021quantum,hu2019quantum}, and drug design \cite{li2021quantum,Jin2022Quantum}.  

A crucial vein in quantum machine learning \cite{biamonte2017quantum} is comprehending potential advantages of quantum learners from the perspective of \textit{generalization}, which   quantifies the capability to extend the model from learned to unknown data \cite{arora2017generalization,allen2019learning,du2021learnability,zhao2018bias}. In sharp contrast to quantum discriminative learning    \cite{abbas2020power,banchi2021generalization,caro2021generalization,du2021efficient,huang2021information,Junyu2022dynamic,qian2021dilemma}, prior literature related to the generalization  of QGLMs is \textit{scarce and negative} \cite{sweke2021quantum,hinsche2021learnability}. Concretely, Ref.~\cite{hinsche2021learnability} pointed out the hardness for both QCBM and GLMs to efficiently learn the output distributions of local quantum circuits under the statistical query access. In this regard,  two key questions are naturally invoked: how to quantify the generalization of different QGLMs in a generic way? And is there any potential advantage of QGLMs in solving nontrivial learning problems?

\begin{figure*} 
\centering
\includegraphics[width=0.92\textwidth]{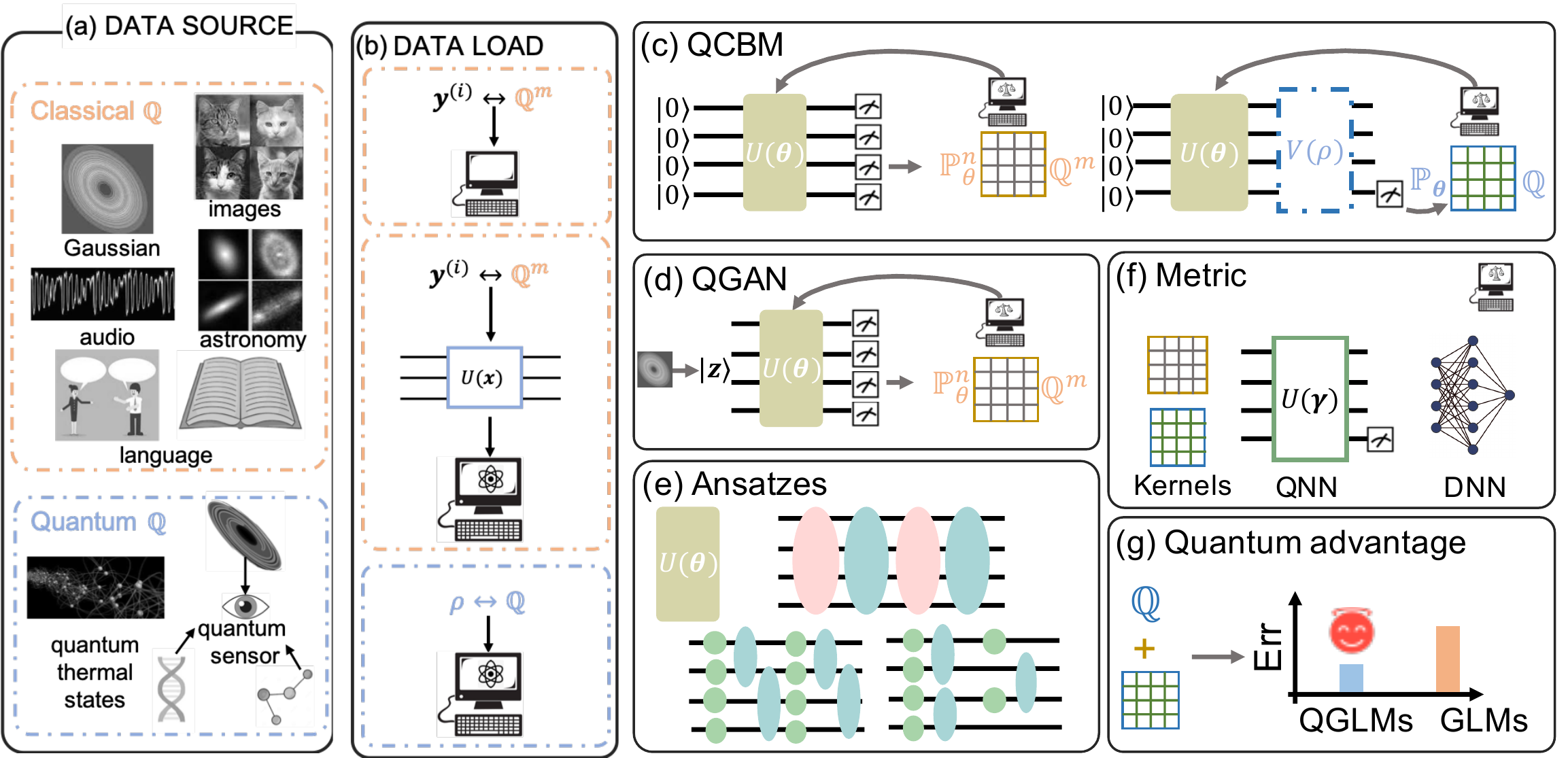}
\caption{\small{\textbf{The paradigm of quantum generative learning models}. (a) The data  explored in  generative learning  includes both classical and quantum scenarios.  (b) The approaches of QGLMs to access target data distribution $\QQ$. When $\QQ$ is classical, QGLMs operate its samples on the classical side or encode its samples into quantum circuits. When $\QQ$ is quantum, QGLMs may directly access it without sampling.  (c) The paradigm of QCBMs with MMD loss. The left and right panels depict the setup of classical and quantum kernels, respectively. (d) The scheme of QGANs with MMD loss for the continuous distribution $\QQ$. (e) QAOA, hardware-efficient, and tensor-network based Ans\"atze are covered by $\hat{U}(\btheta)$ in Eq.~(\ref{eqn:ansatz}). (f) The metrics exploited in QGLMs to measure the discrepancy between the generated and target distributions. (g) When $\QQ$ and $k(\cdot,\cdot)$ are both quantum, QGLMs may attain generalization advantages over classical GLMs.}}	
\label{fig:QCBM_scheme}
\end{figure*}

To shrink the above knowledge gap, here we establish the generalization theory of QGLMs with the maximum mean discrepancy (MMD)  loss \cite{Gretton2012AKernel} and utilize these results to affirm  potential merits of QGLMs in practical applications. The attention on MMD loss is because it is an efficient measure to evaluate the difference between  $\PP_{\btheta}$ and $\QQ$ without the curse of dimensionality issue. Our first result is unveiling the power of QCBMs and QGANs through the lens of the statistical learning theory \cite{vapnik1999nature}. Specifically, when $\QQ$ is discrete, we prove that the  generalization error of QCBMs  scales with $\tilde{O}(\sqrt{1/n + 1/m})$, where $n$ and $m$ refer to the number of examples sampled from $\PP_{\btheta}$ and $\QQ$. In addition, when $\QQ$ can be efficiently prepared by quantum circuits (see Fig.~\ref{fig:QCBM_scheme}(b)), quantum kernels enable QCBMs to attain a strictly lower generalization error than  classical kernels. When $\QQ$ is continuous, we prove that the generalization error of QGAN is upper bounded by  $\tilde{O}(\sqrt{1/n + 1/m} + Nd^{k}\sqrt{N_{gt}N_{ge}}/n)$, where   $d$, $k$, $N_{ge}$, and $N_{gt}$ refer to the dimension of a qudit, the type of quantum gates, the number of encoding gates, and the number of trainable parameters, respectively. This bound explicitly reveals how the encoding strategy and the adopted Ansatz affect the generalization. Taken together, sufficient expressivity and large examples allows $\PP_{\btheta} \approx \QQ$ for QGLMs. In light of this observation, we exhibit the potential advantage of QCBMs and QGANs in  quantum state preparation and parameterized Hamiltonian learning. 
    
The remainder of this study is organized as follows. In Secs.~\ref{sec:gene-QCBM} \& \ref{sec:gene-QGAN}, we theoretically explore the power of QCBM and QGAN through the lens of statistical learning theory, respectively. Then, in Sec.~\ref{sec:application-QGLM}, we elaborate how QGLMs can advance quantum state preparation and Hamiltonian learning. Subsequently, we conduct numerical simulations to validate our claims about QCBMs and QGANs in Sec.~\ref{sec:numerical}.  We conclude this study in Sec.~\ref{sec:discuss}. 
 
\section{Generalization of QCBM}\label{sec:gene-QCBM}
We first study the generalization property of QCBM. As shown in Fig.~\ref{fig:QCBM_scheme}(a), QCBM applies an $N$-qudit Ansatz $\hat{U}(\bm{\theta})$ to a fixed input state $\rho_0 = (\ket{0}\bra{0})^{\otimes N}$ to form the parameterized distribution $\PP_{\btheta}\in \PP_{\bm{\Theta}}$, where $\btheta$ denotes trainable parameters living in the parameter space $\bm{\Theta}$. The probability of sampling $i \in[d^N]$ over the \textit{discrete} distribution $\PP_{\btheta}$ yields
\begin{equation}\label{eqn:QCBM}
\PP_{\btheta}(i)= \Tr(\Pi_i \hat{U}(\bm{\theta})\rho_0 \hat{U}(\bm{\theta})^\dagger),
\end{equation}
where $\Pi_i=\ket{i}\bra{i}$ refers to the projector of the computational basis $i$. Given $n$ reference examples $\{\rx^{(j)}\}_{j=1}^n$ sampled from $\PP_{\btheta}$, its  empirical distribution is defined as $\PP_{\btheta}^n(i)=  \sum_{j=1}^n\delta_{\rx^{(j)}}(i)/n$ with $\delta_{(\cdot)}(\cdot)$ being the indicator. Throughout the whole study, the Ansatz employed in QCBMs takes the generic form
\begin{equation}\label{eqn:ansatz}
	\hat{U}(\bm{\theta})=\prod_{l=1}^{N_g}\hat{U}_l(\bm{\theta}),
\end{equation}
where   $\hat{U}_l(\bm{\theta})\in\mathcal{U}(d^k)$ refers to the $l$-th quantum gate operated with at most $k$-qudits with $k\leq N$, and $\mathcal{U}(d^k)$ denotes the unitary group in dimension $d^k$ ($d=2$ for qubits). Note that for  simplicity, the definition of $N$-qudit Ansatz $\hat{U}(\bm{\theta})$ in Eq.~(\ref{eqn:ansatz}) omits some identity operators. The complete description for the $l$-th layer is $   \mathbb{I}_{d^{N-k}}\otimes \hat{U}_l(\bm{\theta})$. The form of $\hat{U}(\btheta)$  covers almost all Ans\"atze  in VQAs and some  constructions are given in Fig.~\ref{fig:QCBM_scheme}(e).

The training of QCBMs is guided by the maximum mean discrepancy (MMD) loss \cite{Gretton2012AKernel}. Suppose  $\QQ$ and $\PP_{\bm{\theta}}$ live on the same space. The MMD loss measures the difference between $\PP_{\btheta}$ and $\QQ$ with  
\[
	\MMD^2(\PP_{\btheta}||\QQ) = \EE(k(\rx,\rx')) + \EE(k(\ry,\ry'))- 2\EE(k(\rx,\ry)),\] where the expectations are taken over the randomness of examples and $k(\cdot,\cdot)$ denotes a predefined kernel.  QCBMs aim to find an estimator minimizing  MMD loss,  
\begin{equation}\label{eqn:opt_MMD}
	\hat{\btheta} = \arg\min_{\btheta\in \bm{\Theta}} \MMD^2(\PP_{\btheta}||\QQ). 
\end{equation}
If the distribution $\PP_{\btheta}$ and $\QQ$ cannot be accessed directly, the estimator is located by minimizing the empirical MMD loss with finite samples \cite{Gretton2012AKernel}, i.e., 
\begin{equation}\label{eqn:opt_MMD_empirical}
	\hat{\btheta}^{(n,m)} = \arg\min_{\btheta\in \bm{\Theta}} \MMD_U^2(\PP_{\btheta}^n||\QQ^m),
\end{equation}
where $\PP_{\btheta}^n$ and $\QQ^m$ separately refers to the empirical distribution of $\PP_{\btheta}$ and $\QQ$, and 
\begin{align}
	& \MMD^2_U(\PP_{\btheta}^n||\QQ^m) :=\frac{1}{n(n-1)}\sum_{i\neq i'}^n k(\rx^{(i)}, \rx^{(i')}) 
	\nonumber \\
	& + \frac{1}{m(m-1)}\sum_{j\neq j'}^m k(\ry^{(j)}, \ry^{(j')})- \frac{2}{nm}\sum_{i,j} k(\rx^{(i)}, \ry^{(j)}). \nonumber 
\end{align}
More explanations of MMD loss are provided in SM \ref{appendix:MMD_opt}. 

The choice of the kernel $k(\cdot, \cdot)$ in MMD loss is flexible.   When it is specified to be the \textit{quantum} kernel (i.e., the specified kernel can be efficiently computed by quantum algorithms, where  representative examples are linear kernel and polynomial kernels \cite{schuld2019quantum}) and  $\QQ$ can be directly accessed by QGLMs (e.g.,  $\QQ$ can be efficiently prepared by a  quantum circuit), the MMD loss can be efficiently calculated. 
\begin{lem}\label{lem:equi-QCBM-QGAN}
Suppose that the distribution $\QQ$ can be directly accessed by QCBMs. When the quantum kernel is adopted, the MMD loss in Eq.~(\ref{eqn:opt_MMD}) can be  estimated within an error $\epsilon$ in $O({1}/{\epsilon}^2)$ runtime complexity.
\end{lem}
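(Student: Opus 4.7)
The plan is to rewrite each of the three expectations defining $\MMD^2(\PP_{\btheta}||\QQ)$ as an overlap between efficiently preparable quantum states, and then estimate each overlap via a SWAP-test together with Hoeffding's inequality. The key observation is that when the quantum-access assumption on $\QQ$ holds and we use a quantum kernel, both classical sampling loops in the empirical estimator collapse into single state-preparation circuits, and the per-shot cost becomes $O(1)$ rather than $O(n)$ or $O(m)$.

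Step 1 is to invoke the operator form of a quantum kernel. For the kernels listed in the paper (linear and polynomial quantum kernels, cf.~\cite{schuld2019quantum}), the kernel evaluation admits a representation $k(\rx,\ry)=\Tr(\rho_{\phi(\rx)}\,\rho_{\phi(\ry)})$ for a data-encoding feature map $\phi$ implementable by a short quantum circuit. Substituting this into the three terms and using linearity of expectation gives
\[
\MMD^2(\PP_{\btheta}||\QQ) = \Tr(\bar\rho_{\btheta}^2) + \Tr(\bar\rho_{\QQ}^2) - 2\,\Tr(\bar\rho_{\btheta}\bar\rho_{\QQ}),
\]
where $\bar\rho_{\btheta}:=\EE_{\rx\sim\PP_{\btheta}}[\rho_{\phi(\rx)}]$ and $\bar\rho_{\QQ}:=\EE_{\ry\sim\QQ}[\rho_{\phi(\ry)}]$ are bona-fide mixed states. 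The problem thus reduces to estimating three state overlaps.

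Step 2 is to exhibit the preparation circuits. The mixed state $\bar\rho_{\btheta}$ is realized coherently by (i) running $\hat U(\btheta)$ on $\ket{0}^{\otimes N}\otimes\ket{0}_{\mathrm{anc}}$, (ii) applying the feature-map unitary $V:\ket{i}\ket{0}\mapsto\ket{i}\ket{\phi(i)}$, and (iii) discarding the data register; by the direct-access assumption on $\QQ$, the same recipe yields $\bar\rho_{\QQ}$ with the $\QQ$-preparation circuit in place of $\hat U(\btheta)$. Step 3 is the measurement and concentration argument: each of the three traces $\Tr(\bar\rho\,\bar\sigma)$ (the two purities and the cross-overlap) is the overlap of two states we can now prepare, so a SWAP test returns a Bernoulli outcome whose expectation is an affine, $[-1,1]$-valued function of the target trace. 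Hoeffding's inequality then guarantees additive error $\epsilon/3$ per term using $O(1/\epsilon^2)$ shots, and a triangle-inequality union gives an $\epsilon$-accurate estimate of $\MMD^2$ in total runtime $O(1/\epsilon^2)$.

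The most delicate point I anticipate is the strength of the direct-access hypothesis used in Step 2: the argument needs it to license a coherent state-preparation unitary for $\bar\rho_{\QQ}$, because if only classical samples from $\QQ$ were provided, the empirical estimator of Eq.~(\ref{eqn:opt_MMD_empirical}) would incur the usual $O(nm)$ pairwise kernel evaluations, defeating the stated scaling. I would therefore justify explicitly (via the setup in Fig.~\ref{fig:QCBM_scheme}(b)) that ``directly accessed by QCBMs'' means the existence of such a unitary, and check that composing it with the feature-map $V$ keeps the overall circuit polynomially sized (for polynomial kernels of degree $p$ this may require $p$ coherent copies, contributing only a constant prefactor absorbed into the $O(\cdot)$). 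Once the two state preparations are in hand, the Hoeffding step is routine.
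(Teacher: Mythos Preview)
Your proposal is correct and is in fact cleaner and more general than the paper's proof. The paper does not introduce the averaged feature states $\bar\rho_{\btheta}$ and $\bar\rho_{\QQ}$; instead it enumerates two concrete access models for $\QQ$---a diagonal mixed state $\sigma=\sum_{\ry}\QQ(\ry)\ket{\ry}\bra{\ry}$ paired with the linear kernel $k(\bm a,\bm a')=\langle\bm a,\bm a'\rangle$, and a pure state $\ket{\Psi}=\sum_{\ry}\sqrt{\QQ(\ry)}\ket{\ry}$ paired with a distribution-dependent kernel $k(\bm a,\bm a')=\langle \bm a/\sqrt{\PP(\bm a)},\bm a'/\sqrt{\PP(\bm a')}\rangle$---and in each case writes the three MMD terms directly as purities or overlaps of the given states, then invokes the SWAP test with $O(1/\epsilon^{2})$ repetitions. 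Your abstraction via $\bar\rho_{\btheta}=\EE_{\rx\sim\PP_{\btheta}}[\rho_{\phi(\rx)}]$ together with the coherent feature-map-then-partial-trace preparation subsumes the paper's linear-kernel case (taking $\phi(i)=\ket{i}$ recovers exactly their diagonal $\rho$) and extends uniformly to any kernel of the form $\Tr(\rho_{\phi(\rx)}\rho_{\phi(\ry)})$, buying a single argument in place of casework. The one scenario your framework does not literally cover is the paper's pure-state case, because its kernel depends on the sampling distribution and hence is not induced by a fixed feature map $\phi$; there the paper simply observes that the MMD collapses to $2-2\sum_{\rx}\sqrt{\PP_{\btheta}(\rx)\QQ(\rx)}$, a single state fidelity estimable by one SWAP test, so the conclusion is immediate anyway.
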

\begin{proof}[Proof of Lemma \ref{lem:equi-QCBM-QGAN}]
Here we separately elaborate on the calculation of MMD loss when the target distribution $\QQ$ can be efficiently prepared by a diagonalized mixed state $\sigma =\sum_{\ry} \mathbb{Q}(\ry) \ket{\ry}\bra{\ry}$ or a pure quantum state $\ket{\Psi} =\sum_{\ry} \sqrt{\mathbb{Q}(\ry)}\ket{\ry}$.

\medskip
\noindent\textit{\underline{Diagonalized mixed states}}. In this setting, the quantum kernel corresponds to the linear kernel, i.e., \[k(\bm{a},\bm{a}')=\langle \bm{a}, \bm{a}' \rangle.\]  Recall the definition of the MMD loss is $\MMD^2(\PP_{\btheta}||\QQ) = \EE(k(\rx,\rx')) - 2\EE(k(\rx,\ry)) + \EE(k(\ry,\ry'))$. The first term equals to 
\begin{equation}
	\EE\left(k(\rx,\rx')\right)  = \EE_{\PP_{\btheta} , \PP_{\btheta}}\left( \delta_{\rx,\rx'} \right) = \sum_{\rx} \PP_{\btheta}^2(\rx). \nonumber
\end{equation}
Similarly, the second term equals to
\begin{equation}
	\EE\left(k(\rx,\ry)\right) = \sum_{\rx} \PP_{\btheta}(\rx)\QQ(\rx). \nonumber
\end{equation}
And the third term equals to
\begin{equation}
	\EE\left(k(\ry,\ry')\right) = \sum_{\ry} \QQ^2(\ry). \nonumber
\end{equation}
The above three terms can be effectively and analytically evaluated by quantum Swap test when the input state of QCBM in Eq.~(\ref{eqn:QCBM}) is a full rank mixed state, e.g., $\rho_0=\mathbb{I}_{2^N}/2^N$. Denote the output state of QCBM   as $\rho=U(\btheta)\rho_0U(\btheta)^{\dagger}$. This state is also diagonalized and its diagonalized entry records $\PP_{\btheta}$, i.e.,
\begin{equation}
	\rho= \sum_{\rx}\PP_{\btheta}(\rx)\ket{\rx}\bra{\rx}. \nonumber
\end{equation} 
According to \cite{buhrman2001quantum, kobayashi2003quantum}, given two mixed states $\varrho_1$ and $\varrho_2$, the output of Swap test is $1/2+\Tr(\varrho_1\varrho_2)/2$ with an additive error $\epsilon$ in $O(1/\epsilon^2)$ runtime. As such, when $\varrho_1=\varrho_2=\rho$, the first term $\EE\left(k(\rx,\rx')\right)$ can be calculated by Swap test, because  $\Tr(\rho\rho)=\sum_{\rx}\PP_{\btheta}^2(\rx)$. Likewise, through setting $\varrho_1=\rho$, and $\varrho_2=\sigma$ ($\varrho_1=\varrho_2=\sigma$), the second (third) term can be efficiently evaluated by Swap test with an additive error $\epsilon$. In other words, by leveraging Swap test, we can estimate MMD loss with an additive error $\epsilon$ in $O(1/\epsilon^2)$ runtime cost.

\medskip
\noindent\textit{\underline{Pure states}}.  The quantum kernel in this scenario corresponds to a nonlinear kernel, i.e., \[k(\bm{a},\bm{a}')=\langle \frac{\bm{a}}{\sqrt{\PP(\bm{a})}}, \frac{\bm{a}'}{\sqrt{\PP(\bm{a}')}} \rangle,\] where $\PP(\bm{a})$ stands for the probability of sampling $\bm{a}$ and $\sum_{\bm{a}}\PP(\bm{a})=1$. With this regard, the explicit form of MMD loss yields 
\begin{eqnarray}
	&& \MMD(\PP_{\btheta}||\QQ) \nonumber\\\
	= && \EE(k(\rx,\rx')) - 2\EE(k(\rx,\ry)) + \EE(k(\ry,\ry')) \nonumber\\
	= && \sum_{\rx}\sum_{\rx'} \PP_{\btheta}\PP_{\btheta} \left\langle \frac{\rx}{\sqrt{\PP_{\btheta}(\rx)}}, \frac{\rx'}{\sqrt{\PP_{\btheta}(\rx')}} \right \rangle \nonumber\\
	&& -2  \sum_{\rx}\sum_{\ry} \PP_{\btheta}\QQ \left\langle \frac{\rx}{\sqrt{\PP_{\btheta}(\rx)}}, \frac{\ry}{\sqrt{\QQ(\ry)}} \right \rangle \nonumber\\
	&& +  \sum_{\ry}\sum_{\ry'} \QQ\QQ \left\langle \frac{\ry}{\sqrt{\QQ(\ry)}}, \frac{\ry'}{\sqrt{\QQ(\ry')}} \right \rangle \nonumber\\
	= && \sum_{\rx} \PP_{\btheta}(\rx) +  \sum_{\ry} \QQ(\ry) - 2\sum_{\rx} \sqrt{\PP_{\btheta}(\rx)\QQ(\rx)} \nonumber\\
	= && 2 - 2\sum_{\rx}\sqrt{\PP_{\btheta}(\rx)\QQ(\rx)}. \nonumber
\end{eqnarray}

The above results indicate that the evaluation of $\MMD$ loss amounts to calculating $\sum_{\rx}\sqrt{\PP_{\btheta}(\rx)\QQ(\rx)}$. Denote the generated state of QCBM as $\ket{\Phi(\btheta)}=U(\btheta)\ket{0}^{\otimes n}=e^{i\phi}\sum_{\rx}\PP_{\btheta}(\rx)\ket{\rx}$, where $\rho_0=(\ket{0}\bra{0})^{\otimes n}$. When the target distribution $\QQ$ refers to a pure quantum state $\ket{\Psi}=\sum_{\ry}\sqrt{\QQ(\ry)}\ket{\ry}$, the term $\sum_{\rx}\sqrt{\PP_{\btheta}(\rx)\QQ(\rx)}$ can be evaluated by Swap test \cite{buhrman2001quantum}, i.e.,
\begin{equation}
|\braket{\Phi(\btheta)|\Psi}|^2   = \left(\sum_{i=1}^{2^N}  \PP_{\btheta}(\rx) \QQ(\rx) \right)^2.
\end{equation}
According to \cite{buhrman2001quantum}, taking account of the sample error, this term can be estimated within an additive error $\epsilon$ in $O(1/\epsilon^2)$ runtime complexity.  
\end{proof}
\noindent It hints that when both $k(\cdot,\cdot)$ and $\QQ$ are quantum, $\MMD^2(\PP_{\btheta}||\QQ)$ can be efficiently calculated in which the runtime cost is \textit{independent} of the dimension of data space. In  contrast, for GLMs and QCBMs with classical kernels, the runtime cost in computing the MMD loss polynomially scales with $n$ and $m$. Such runtime discrepancy warrants the advantage of QCBMs explained in the subsequent context.

We now analyze the generalization ability of QCBMs. To begin with, let us extend the definition of \textit{generalization error} from the classical learning theory \cite{dziugaite2015training}  to the regime of quantum generative learning. 
\begin{definition}[Generalization error of QGLMs]\label{def:generalization}
	When either the kernel $k(\cdot,\cdot)$ or the target $\QQ$ is \textit{classical}, the generalization error of QGLMs is
\begin{equation}\label{eqn:gene_error_bound-emp}
	\mathfrak{R}^C = \MMD^2 (\PP_{\hat{\btheta}^{(n,m)}}||\QQ) - \inf_{\btheta\in\bm{\Theta}}\MMD^2 (\PP_{\bm{\theta}}||\QQ).
\end{equation}
When the kernel $k(\cdot,\cdot)$ is \textit{quantum} and  $\QQ$ can be efficiently accessed by quantum machines, the generalization error of QGLMs is 
\begin{equation}\label{eqn:gene_error_bound-exac}
	\mathfrak{R}^Q = \MMD^2 (\PP_{\hat{\btheta}}||\QQ) - \inf_{\btheta\in\bm{\Theta}}\MMD^2 (\PP_{\bm{\theta}}||\QQ).
\end{equation}
\end{definition}
\noindent Intuitively, both $\mathfrak{R}^C$ and $\mathfrak{R}^Q$  evaluate the divergence of the estimated and the optimal MMD loss, where a lower $\mathfrak{R}^C$ or $\mathfrak{R}^Q$ suggests a better generalization ability. The following theorem establishes the generalization theory of QCBMs.

\begin{thm}\label{thm:Gene-QCBM}
Assume  $ \max_{\btheta\in\bm{\Theta}}\MMD^2(\PP_{\btheta}||\QQ)\leq C_1$ with $C_1$ being a constant. Define $C_2=\sup_{\rx} k(\rx, \rx)$. Following settings in Lemma \ref{lem:equi-QCBM-QGAN}, with probability at least $1-\delta$, the generalization error of QCBMs yields 
	\begin{equation}\label{eqn:thm1-1}
	 \mathfrak{R}^Q\leq \mathfrak{R}^C \leq  C_1\sqrt{\frac{8}{n}+\frac{8}{m}}\sqrt{C_2}(2+\sqrt{\log \frac{1}{\delta} }).
	\end{equation}  
\end{thm}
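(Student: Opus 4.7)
The plan is the standard excess-risk decomposition from statistical learning theory: reduce $\mathfrak{R}^{C}$ to a uniform concentration of the empirical MMD around its population counterpart, then invoke Lemma~\ref{lem:equi-QCBM-QGAN} to get $\mathfrak{R}^{Q}\le\mathfrak{R}^{C}$ essentially for free. Fix any $\btheta^{\star}\in\arg\min_{\btheta\in\bm\Theta}\MMD^{2}(\PP_{\btheta}||\QQ)$ and add and subtract the empirical MMD at $\hat\btheta^{(n,m)}$ and $\btheta^{\star}$:
\begin{align}
\mathfrak{R}^{C} ={}&\bigl[\MMD^{2}(\PP_{\hat\btheta^{(n,m)}}||\QQ)-\MMD_{U}^{2}(\PP_{\hat\btheta^{(n,m)}}^{n}||\QQ^{m})\bigr] \nonumber\\
&+\bigl[\MMD_{U}^{2}(\PP_{\hat\btheta^{(n,m)}}^{n}||\QQ^{m})-\MMD_{U}^{2}(\PP_{\btheta^{\star}}^{n}||\QQ^{m})\bigr] \nonumber\\
&+\bigl[\MMD_{U}^{2}(\PP_{\btheta^{\star}}^{n}||\QQ^{m})-\MMD^{2}(\PP_{\btheta^{\star}}||\QQ)\bigr]. \nonumber
\end{align}
The middle bracket is non-positive by the empirical-minimizer property~(\ref{eqn:opt_MMD_empirical}), so the problem reduces to bounding $2\sup_{\btheta\in\bm\Theta}|\MMD^{2}(\PP_{\btheta}||\QQ)-\MMD_{U}^{2}(\PP_{\btheta}^{n}||\QQ^{m})|$.

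To control the supremum I pass to the reproducing-kernel mean embedding $\mu_{\PP}=\EE_{\rx\sim\PP}[k(\rx,\cdot)]$ in the RKHS $\Hcal$, for which $\MMD(\PP||\QQ)=\vabs{\mu_{\PP}-\mu_{\QQ}}_{\Hcal}$. Using $|a^{2}-b^{2}|=|a-b||a+b|$ together with the triangle inequality in $\Hcal$, for every $\btheta$
$$\abs{\MMD^{2}(\PP_{\btheta}||\QQ)-\MMD_{U}^{2}(\PP_{\btheta}^{n}||\QQ^{m})}\le 2\sqrt{C_{1}}\,\bigl(\vabs{\hat\mu_{\PP_{\btheta}}^{n}-\mu_{\PP_{\btheta}}}_{\Hcal}+\vabs{\hat\mu_{\QQ}^{m}-\mu_{\QQ}}_{\Hcal}\bigr),$$
where the prefactor $2\sqrt{C_{1}}$ absorbs $\MMD+\MMD_{U}$ using $\MMD^{2}\le C_{1}$ (the biased--unbiased gap between $\MMD_{U}^{2}$ and $\MMD_{B}^{2}=\vabs{\hat\mu^{n}-\hat\mu^{m}}_{\Hcal}^{2}$ is of lower order in $1/n+1/m$ and is swept into the second factor). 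Each mean-embedding deviation is then controlled by McDiarmid's inequality: Cauchy--Schwarz gives $|k(\rx,\ry)|\le C_{2}$, so replacing one sample changes $\vabs{\hat\mu^{n}_{\PP}-\mu_{\PP}}_{\Hcal}$ by at most $2\sqrt{C_{2}}/n$; combined with Jensen's bound $\EE[\vabs{\hat\mu^{n}_{\PP}-\mu_{\PP}}_{\Hcal}]\le\sqrt{C_{2}/n}$, one obtains with probability $\ge 1-\delta/2$
$$\vabs{\hat\mu^{n}_{\PP_{\btheta}}-\mu_{\PP_{\btheta}}}_{\Hcal}\le\sqrt{C_{2}/n}\,\bigl(1+\sqrt{2\log(2/\delta)}\bigr),$$
and the analogous bound holds for $\QQ$.

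Combining the two concentration events through a union bound, applying $\sqrt{1/n}+\sqrt{1/m}\le\sqrt{2(1/n+1/m)}$, and absorbing $\sqrt{C_{1}}$ into $C_{1}$ (cheap once $C_{1}\ge 1$, the interesting regime) recovers Eq.~(\ref{eqn:thm1-1}). The first inequality $\mathfrak{R}^{Q}\le\mathfrak{R}^{C}$ is then immediate from Lemma~\ref{lem:equi-QCBM-QGAN}: the quantum oracle replaces both Monte-Carlo averages by a single Swap-test estimator of $\MMD^{2}$, so the excess risk under this strictly stronger oracle cannot exceed that of the sampling-based one.

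The step I expect to be subtly delicate is making ``uniform over $\btheta$'' honest. It works almost trivially here because $C_{2}$ is a distribution-free kernel bound, so the McDiarmid tail applied to $\vabs{\hat\mu^{n}_{\PP_{\btheta}}-\mu_{\PP_{\btheta}}}_{\Hcal}$ has constants independent of $\btheta$ and the supremum collapses to a pointwise statement---no covering or Rademacher bound is needed. Any $\btheta$-dependence in the feature map, as appears once data-encoding gates are introduced in Sec.~\ref{sec:gene-QGAN}, would instead force a complexity-measure argument over the Ansatz class $\{\hat U(\btheta)\}$, which is where the real work will lie in the QGAN analysis.
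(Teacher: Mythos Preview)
Your approach is correct and close to the paper's. The main structural difference is that the paper works at the $\MMD$ level rather than $\MMD^2$: it writes $\mathfrak{R}^C = (\MMD(\PP_{\hat\btheta^{(n,m)}}||\QQ) - \inf_{\btheta} \MMD)(\MMD(\PP_{\hat\btheta^{(n,m)}}||\QQ) + \inf_{\btheta} \MMD)$, bounds the second factor by $2C_1$, and then imports the bound on the first factor wholesale from Briol et al.\ (stated here as Lemma~\ref{lem:lem2-dist-mmd}). Your add-and-subtract on $\MMD_U^2$ together with McDiarmid on the RKHS mean embedding is essentially that lemma unpacked, so the two routes agree in substance; the paper's factoring just produces the $C_1$ prefactor directly, without your ``absorb $\sqrt{C_1}$ into $C_1$'' manoeuvre. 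One correction: for $\mathfrak{R}^Q \le \mathfrak{R}^C$ the paper does \emph{not} invoke Lemma~\ref{lem:equi-QCBM-QGAN}. It is a one-liner from the definition of $\hat\btheta$ in Eq.~(\ref{eqn:opt_MMD}) as the exact minimizer of $\MMD^2(\PP_{\btheta}||\QQ)$, which immediately gives $\MMD^2(\PP_{\hat\btheta}||\QQ) \le \MMD^2(\PP_{\hat\btheta^{(n,m)}}||\QQ)$ and hence $\mathfrak{R}^Q \le \mathfrak{R}^C$ after subtracting the common infimum. Lemma~\ref{lem:equi-QCBM-QGAN} only explains why such exact minimization is computationally feasible in the quantum-kernel setting; it plays no role in the inequality itself.
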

\begin{proof}[Proof of Theorem \ref{thm:Gene-QCBM}] 

To prove Theorem \ref{thm:Gene-QCBM}, we first derive the upper bound of the generalization error $\mathfrak{R}^C$ under the generic setting. Then, we analyze $\mathfrak{R}^Q$ under the specific setting where the quantum kernel is employed and the target distribution $\QQ$ can be directly accessed by quantum machines. The analysis of $\mathfrak{R}^C$  adopts the following lemma.
\begin{lem}[Adapted from Theorem 1, \cite{briol2019statistical}] \label{lem:lem2-dist-mmd}
Suppose that the kernel $k(\cdot,\cdot)$ is bounded. Following the notations in Theorem \ref{thm:Gene-QCBM}, when the number of examples sampled from $\PP_{\hat{\btheta}^{(n,m)}}$ and $\QQ$ is $n$ and $m$,  with probability $1-\delta$, $
	\MMD(\PP_{\hat{\btheta}^{(n,m)}} ||\QQ)\leq \inf_{\btheta\in\bm{\Theta}} \MMD(\PP_{\btheta}||\QQ) + 2\sqrt{\frac{2}{n}+\frac{2}{m}}\sqrt{\sup_{\bm{x}\in\mathcal{X}}k(\rx, \rx)}(2+\sqrt{\frac{2}{\delta}})$. 
\end{lem}

\noindent\textit{The calculation of $\mathfrak{R}^C$.} Recall the definition of $\hat{\btheta}^{(n,m)}$ in Eq.~(\ref{eqn:opt_MMD_empirical}). Let us first rewrite  the generalization error as   
\begin{eqnarray}\label{eqn:append-thm1-0}
	\mathfrak{R}^C = &&  \MMD^2 (\PP_{\hat{\btheta}^{(n,m)}}||\QQ) - \inf_{\btheta\in\bm{\Theta}}\MMD^2 (\PP_{\btheta}||\QQ) \nonumber \\
	= && \left(\MMD (\PP_{\hat{\btheta}^{(n,m)}}||\QQ) - \inf_{\btheta\in\bm{\Theta}}\MMD (\PP_{\btheta}||\QQ) \right) \nonumber\\
	&& \times \left(\MMD (\PP_{\hat{\btheta}^{(n,m)}}||\QQ) + \inf_{\btheta\in\bm{\Theta}}\MMD (\PP_{\btheta}||\QQ) \right)\nonumber\\
	\leq && 2C_1\left|\MMD (\PP_{\hat{\btheta}^{(n,m)}}||\QQ) - \inf_{\btheta\in\bm{\Theta}}\MMD (\PP_{\btheta}||\QQ) \right|,\nonumber
\end{eqnarray}
where the second equality uses  $\inf_{\btheta\in\bm{\Theta}}\MMD^2 (\PP_{\btheta}||\QQ) = (\inf_{\btheta\in\bm{\Theta}}\MMD (\PP_{\btheta}||\QQ))^2 $  and the inequality employs the $\MMD (\PP_{\hat{\btheta}^{(n,m)}}||\QQ)+\inf_{\btheta\in\bm{\Theta}}\MMD (\PP_{\btheta}||\QQ) \leq 2\MMD (\PP_{\hat{\btheta}^{(n,m)}}||\QQ)\leq 2C_1$.

In conjunction with the above equation with the results of Lemma \ref{lem:lem2-dist-mmd}, we obtain that with probability at least $1-\delta$, the upper bound of the generalization error of QGCM yields 
\begin{equation}\label{eqn:append-thm1-1}
	\mathfrak{R}^C \leq  4C_1\left(\frac{2}{n}+\sqrt{\frac{2}{m}}\right)\sqrt{\sup_{x\in \mathcal{X}}k(\rx, \rx)}\left(2+\sqrt{\log \frac{2}{\delta}}\right).
\end{equation}

\noindent\textit{The calculation of $\mathfrak{R}^Q$.} Recall the definition of $\hat{\btheta}$ in Eq.~(\ref{eqn:opt_MMD}). When QCBM adopts the quantum kernel and the target distribution $\QQ$ can be directly accessed by quantum machines, the minimum argument of the loss function yields $\hat{\btheta} = \arg\min_{\btheta\in \bm{\Theta}}\MMD^2(\PP_{\btheta}||\QQ)$. Following the definition of the generalization error, we obtain
\begin{eqnarray}\label{eqn:append-thm1-2}
	\mathfrak{R}^Q = &&  \MMD^2 (\PP_{\hat{\btheta}}||\QQ) - \inf_{\btheta\in\bm{\Theta}}\MMD^2 (\PP_{\btheta}||\QQ) \nonumber\\
	&&\leq \MMD^2 (\PP_{\hat{\btheta}^{(n,m)}}||\QQ) - \inf_{\btheta\in\bm{\Theta}}\MMD^2 (\PP_{\btheta}||\QQ) \nonumber\\
	&&= \mathfrak{R}^C,
\end{eqnarray}
where the inequality is supported by the definition of $\hat{\btheta}$, i.e,  $\MMD^2 (\PP_{\hat{\btheta}}||\QQ)=\min_{\btheta\in\bm{\Theta}}\MMD^2 (\PP_{\btheta}||\QQ)   \leq \MMD^2 (\PP_{\hat{\btheta}^{(n,m)}}||\QQ)$. 

Combining the results of $\mathfrak{R}^C$ and $\mathfrak{R}^Q$ in Eqs.~(\ref{eqn:append-thm1-1}) and (\ref{eqn:append-thm1-2}), we obtain that with probability at least $1-\delta$,
\begin{equation}
  \mathfrak{R}^Q \leq \mathfrak{R}^C \leq 4C_1\left(\frac{2}{n}+\sqrt{\frac{2}{m}}\right)\sqrt{\sup_{x\in \mathcal{X}}k(\rx, \rx)}\left(2+\sqrt{\log \frac{2}{\delta}}\right). \nonumber
\end{equation}  
\end{proof}
\noindent It indicates that when  $\QQ$ is quantum, quantum kernels promise a \textit{strictly lower} generalization error than classical kernels. Moreover, the decisive factor to improve generalization of QCBMs is to simultaneously increase $n$ and $m$. Note that this phenomenon results into a tradeoff between the generalization and the runtime efficiency for QCBM with classical kernels. That is, increasing $n$ and $m$ enables the reduced generalization error but increases the runtime cost to compute $\MMD_U^2(\PP_{\btheta}^n||\QQ^m)$, which echoes with the claim in \cite{hinsche2021learnability}. Conversely,  QCBM with quantum kernels can simultaneously obtain both the good generalization and runtime efficacy, supported by Lemma~\ref{lem:equi-QCBM-QGAN}. We remark that this statement does not contradict with  the result in \cite{hinsche2021learnability}, since our setting does not require the statistical query access. Instead, our results show that a careful design of learning strategies and model constructions can enhance the power of QCBMs to advance GLMs, especially when $\QQ$ refers to the tasks in quantum many body physics and quantum information processing  \cite{huang2021quantum}.

\textbf{Remark.} In SM~\ref{appd:coro:DNN-QCBM}, we partially address another long standing problem of quantum  probabilistic models, i.e.,  whether QCBMs are superior to classical GLMs. Briefly, for certain $\QQ$, QCBMs can attain a lower $\inf_{\btheta\in\bm{\Theta}}\MMD^2 (\PP_{\bm{\theta}}||\QQ)$ over restricted Boltzmann machine \cite{hinton2012practical}, which may lead to a better learnability.

\section{Generalization of QGAN}\label{sec:gene-QGAN}
We next analyze the generalization of QGANs for continuous $\QQ$. As shown  in Fig.~\ref{fig:QCBM_scheme}(b), QGAN adopts an $N$-qudit Ansatz $\hat{U}(\btheta)$ in Eq.~(\ref{eqn:ansatz}) to build a generator $G_{\bm{\theta}}(\cdot)$, which maps $\bm{z}$ sampled from a prior distribution $\PP_{\mathcal{Z}}$ to a generated example $\rx\sim \PP_{\btheta}$, i.e., $\rx := G_{\bm{\theta}}(\bm{z})\in \mathbb{R}^{d^N}$. The $j$-th entry of $\rx$  is
\begin{equation}\label{eqn:QGAN}
	  \rx_{j}=\Tr(\Pi_j \hat{U}(\bm{\theta})\rho_{\bm{z}}\hat{U}(\bm{\theta})^\dagger), ~\forall j \in [d^N],
\end{equation} 
where $\rho_{\bm{z}}$ refers to the encoded quantum state of  $\bm{z}$. Given $n$ reference examples $\{\rx^{(i)}\}_{i=1}^n$ produced by $G_{\bm{\theta}}(\cdot)$, its  empirical distribution is $\PP_{\btheta}^n(d\rx)=  \sum_{i=1}^n\delta_{\rx^{(i)}}(d\rx)/n$. Similarly, we define $\QQ^m$ as the empirical distribution of $\QQ$ with $m$ true examples. In the training process, QGAN updates $\btheta$ to minimize the empirical MMD loss in Eq.~(\ref{eqn:opt_MMD_empirical}), i.e.,  the optimal solution satisfies $\btheta^*=\arg \min \MMD_U^2(\PP_{\btheta}^n||\QQ^m)$.  Note that QGANs can be applied to estimate both discrete and continuous distributions. When $\QQ$ is discrete, the mechanism of QGANs is \textit{equivalent to} QCBMs (refer to SM \ref{apped:sum_GLMs} \& \ref{appendix:MMD_opt} for more elaborations about QGANs). Due to this reason, here we only focus on applying QGANs to estimate the continuous distribution $\QQ$. 

When QGANs in Eq.~(\ref{eqn:QGAN}) are designed to estimate the continuous distribution $\QQ$, their generalization can only be measured by $\mathfrak{R}^C$ in Definition \ref{def:generalization}. In this scenario, it is of paramount importance of unveiling how the generalization of QGANs depends on uploading methods and the structure information Ansatz. The following theorem makes a concrete step toward this goal.  
\begin{thm}\label{thm:Gene-QGAN}
Assume the kernel $k(\cdot,\cdot)$ is $C_3$-Lipschitz. Suppose that the employed quantum circuit $\hat{U}(\bm{z})$ to prepare $\rho_{\bm{z}}$  containing in total $N_{ge}$ parameterized gates and each gate  acting on at most $k$ qudits. Following notations in Eqs.~(\ref{eqn:opt_MMD_empirical}) and (\ref{eqn:gene_error_bound-emp}), with probability at least $1-\delta$, the generalization error of QGANs,    $\mathfrak{R}^C $ in Eq.~(\ref{eqn:QGAN}) is upper bounded by 
	\begin{equation}\label{eqn:thm2-1}
8\sqrt{ \frac{8C_2^2(n+m)}{nm}\ln \frac{1}{ \delta}  }  +   \frac{48}{n-1} + \frac{144d^{k} \sqrt{N_{gt}+N_{ge}}}{n-1} C_4,
	\end{equation}
where $C_2=\sup_{\rx} k(\rx, \rx)$,  $C_4= N \ln (441d C_3^2nN_{ge}N_{gt}) + 1 $.  
\end{thm}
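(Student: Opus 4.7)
The plan is to follow the classical recipe for minimum-discrepancy estimators: reduce the generalization gap to a uniform deviation, isolate a concentration part from an expected-supremum part, and then control the latter via a covering-number bound tailored to the parameterized quantum circuit. First I would use the basic comparison inequality for empirical risk minimizers: since $\hat{\btheta}^{(n,m)}$ minimizes $\MMD_U^2(\PP_{\btheta}^n||\QQ^m)$ and $\btheta^* := \arg\inf_{\btheta\in\bm{\Theta}} \MMD^2(\PP_{\btheta}||\QQ)$ minimizes the population counterpart, a standard add-and-subtract argument gives
\begin{equation*}
\mathfrak{R}^C \leq 2 \sup_{\btheta \in \bm{\Theta}} \bigl| \MMD_U^2(\PP_{\btheta}^n || \QQ^m) - \MMD^2(\PP_{\btheta} || \QQ) \bigr|.
\end{equation*}
This reduces the theorem to proving a uniform-in-$\btheta$ deviation bound on the two-sample U-statistic representation of the MMD loss.

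Next I would isolate the concentration contribution. Because $k(\cdot,\cdot)$ is bounded by $C_2$ (from $k(\rx,\rx)\leq C_2$ together with Cauchy--Schwarz in the RKHS), swapping any single sample $\rx^{(i)}$ or $\ry^{(j)}$ shifts $\MMD_U^2$ by $O(C_2/n)$ or $O(C_2/m)$, so McDiarmid's bounded-differences inequality yields that with probability at least $1-\delta$,
\begin{equation*}
\sup_{\btheta} \bigl|\MMD_U^2 - \MMD^2\bigr| \;\leq\; \EE\bigl[\sup_{\btheta} |\MMD_U^2 - \MMD^2|\bigr] + 4 \sqrt{\tfrac{8 C_2^2 (n+m)}{nm} \ln \tfrac{1}{\delta}}.
\end{equation*}
Absorbing the factor of $2$ from the preceding display produces exactly the first term of Eq.~(\ref{eqn:thm2-1}). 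The remaining task is to bound the expected supremum.

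For the expected supremum I would use a standard symmetrization/decoupling argument to pass to the Rademacher complexity of the class $\mathcal{F}_{\bm{\Theta}} = \{(\bm{z},\bm{z}') \mapsto k(G_{\btheta}(\bm z), G_{\btheta}(\bm z')) : \btheta \in \bm{\Theta}\}$ together with its cross-companion involving $\QQ$-samples. Dudley's entropy integral then gives a bound of the form
\begin{equation*}
\EE \sup_{\btheta} |\MMD_U^2 - \MMD^2| \;\lesssim\; \frac{1}{n-1} + \frac{1}{n-1}\int_0^{\operatorname{diam}(\mathcal{F}_{\bm{\Theta}})} \sqrt{\log \mathcal{N}(\epsilon, \mathcal{F}_{\bm{\Theta}}, \|\cdot\|_\infty)}\, d\epsilon,
\end{equation*}
where the $1/(n-1)$ scaling reflects the U-statistic structure of $\MMD_U^2$. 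The $48/(n-1)$ term will fall out once I track the numerical constants through symmetrization and chaining.

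The main obstacle, and the step I would devote the bulk of the work to, is bounding the covering number $\mathcal{N}(\epsilon,\mathcal{F}_{\bm{\Theta}},\|\cdot\|_\infty)$ with a sharp $d^{k}$-dependence. The strategy is Lipschitz propagation through the circuit: since each parameterized gate $\hat{U}_l(\btheta)$ acts nontrivially on at most $k$ qudits and lives in $\mathcal{U}(d^k)$, a perturbation $|\delta \theta_l|$ in one angle changes $\hat{U}_l$ by $O(\delta\theta_l)$ in operator norm, which translates into an $O(d^k \delta \theta_l)$ perturbation of $G_{\btheta}(\bm z)$ in the appropriate norm on $\mathbb{R}^{d^N}$ (here the factor $d^k$ arises from the block-structure of the gate embedded in the full $N$-qudit space, not $d^N$). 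Composing the perturbations across the $N_{gt}$ trainable gates in $\hat{U}(\btheta)$ and the $N_{ge}$ gates in the encoding circuit $\hat{U}(\bm z)$ gives a coordinate-wise Lipschitz constant of order $C_3 d^{k}$ for $\btheta \mapsto k(G_{\btheta}(\cdot), G_{\btheta}(\cdot))$. A product net on $\bm{\Theta}$ of mesh $\epsilon/(C_3 d^{k})$ therefore suffices, yielding
\begin{equation*}
\log \mathcal{N}(\epsilon, \mathcal{F}_{\bm{\Theta}}, \|\cdot\|_\infty) \;\lesssim\; (N_{gt} + N_{ge}) \log \!\left( \tfrac{C_3 d^{k}}{\epsilon} \right).
\end{equation*}
Plugging this into the Dudley integral, choosing the integration cutoff $\sim 1/(n N_{ge} N_{gt})$, and carefully tracking the ambient dimension $d^{N}$ of $\rx$ through the $\|\cdot\|_\infty$-to-$\|\cdot\|_2$ conversion (this is the source of the $N \ln d$ piece inside $C_4$) produces the factor $d^{k} \sqrt{N_{gt} + N_{ge}}/(n-1) \cdot C_4$. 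The delicate part is extracting the precise logarithmic constant $441 d C_3^2 n N_{ge} N_{gt}$: this requires balancing the diameter bound, the mesh size, and the low-end cutoff of the Dudley integral, and is where I expect most bookkeeping errors to arise. Combining all three pieces and applying a union bound over the event from McDiarmid's inequality yields Eq.~(\ref{eqn:thm2-1}).
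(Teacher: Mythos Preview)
Your overall strategy coincides with the paper's: reduce $\mathfrak{R}^C$ to $2\sup_{\btheta}|\MMD_U^2-\MMD^2|$, split off a concentration term via McDiarmid, and control the residual expected supremum via symmetrization, Rademacher complexity, and Dudley's entropy integral driven by a covering-number bound for the circuit class. The paper organizes this slightly differently (it first separates the three kernel sums $T_1,T_2,T_3$ and applies McDiarmid to each, then invokes a U-statistic Rademacher lemma for $\EE[T_2],\EE[T_3]$), but that is cosmetic.

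The substantive gap is in your covering-number step, and it affects the final bound. In the setup of Eq.~(\ref{eqn:ansatz}) each gate $\hat U_l$ is an \emph{arbitrary} element of $\mathcal{U}(d^k)$, so a single gate already carries $d^{2k}$ real parameters; with $N_{gt}+N_{ge}$ such gates the metric entropy is
\[
\log \mathcal{N}(\epsilon,\mathcal{F}_{\bm\Theta},\|\cdot\|)\;\lesssim\; d^{2k}(N_{gt}+N_{ge})\log(C/\epsilon),
\]
and it is $\sqrt{d^{2k}(N_{gt}+N_{ge})}=d^{k}\sqrt{N_{gt}+N_{ge}}$ that emerges from the Dudley integral as the prefactor in Eq.~(\ref{eqn:thm2-1}). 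Your argument treats each gate as carrying one angle and posits a per-gate Lipschitz factor $d^k$, giving $\log\mathcal{N}\lesssim(N_{gt}+N_{ge})\log(C_3 d^k/\epsilon)$; with that exponent, Dudley can only deliver $\sqrt{N_{gt}+N_{ge}}$ times logarithmic factors, and the $d^k$ trapped inside the logarithm cannot be promoted to a multiplicative prefactor. In the paper the Lipschitz blow-up from operator norm on gates to the $\ell_2$ norm on $G_{\btheta}(\bm z)\in\mathbb{R}^{d^N}$ is $d^{N}$ (not $d^{k}$), and that is what generates the $N\ln d$ contribution inside $C_4$; the $d^k$ outside has a different origin, namely the intrinsic dimension of $\mathcal{U}(d^k)$. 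So to recover Eq.~(\ref{eqn:thm2-1}) you need to redo the covering argument with the correct gate-parameter dimension.
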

\begin{proof}[Proof sketch]
Here we only deliver the central idea and postpone the whole proof to SM \ref{appendix:gene_QGAN}.
	
For clearness, we denote $\hat{\btheta}^{(n,m)}$ as $\hat{\btheta}$. Define $\mathcal{E}(\btheta)=\MMD_U^2(\PP_{\btheta}^n||\QQ^m)$ and $\mathcal{T}(\btheta)=\MMD^2(\PP_{\btheta}||\QQ)$. Then the generalization error  is upper bounded by
\begin{eqnarray}
	\mathfrak{R}^C  = && \mathcal{E}(\hat{\btheta})-\mathcal{E}( \hat{\btheta}) +\mathcal{T}(\hat{\btheta})-\mathcal{T}( \btheta^*) \nonumber\\
	 \leq && |\mathcal{T}( \hat{\btheta}) -\mathcal{E}( \hat{\btheta})| + |\mathcal{T}(\btheta^*) -\mathcal{E}(\btheta^*)|,
\end{eqnarray} 
where the first inequality employs the definition of $\hat{\btheta}$ with $\mathcal{E}(\btheta^*)\geq \mathcal{E}(\hat{\btheta})$ and the second inequality uses the property of absolute value function. In this regard, we derive the probability for $\sup_{\btheta \in \Theta} |\mathcal{T}(\btheta) -\mathcal{E}(\btheta)| < \epsilon$,  which in turn can achieve the upper bound of $\mathfrak{R}^C$. 

According to the explicit form of the MMD loss and Jensen inequality, $\sup_{\btheta\in \bm{\Theta}}|\mathcal{E}(\btheta)  - \mathcal{T}(\btheta)|$ satisfies
\begin{align} 
	& \sup_{\btheta\in \bm{\Theta}}\left|\MMD_U^2(\PP_{\btheta}^n||\QQ^m) - \MMD^2(\PP_{\btheta}||\QQ)\right| \nonumber\\
	\leq & \underbrace{\sup_{\btheta\in \bm{\Theta}}\Big|\EE_{\ry,\ry'}(k(\ry,\ry')) - \frac{\sum_{j\neq j'}k(\ry^{(j)}, \ry^{(j')})}{m(m-1)}\Big|}_{T1} \nonumber\\
	 +  &\underbrace{\sup_{\btheta\in \bm{\Theta}}\Big| \EE_{\bm{z},\bm{z}'}(k(G_{\btheta}(\bm{z}),G_{\btheta}(\bm{z}')))- \frac{\sum_{i\neq i'}k(G_{\btheta}(\bm{z}^{(i)}), G_{\btheta}(\bm{z}^{(i')}))}{n(n-1)}\Big|}_{T2} \nonumber\\
	 + & 2\underbrace{\sup_{\btheta\in \bm{\Theta}}\Big|\EE_{z, y}(k(G_{\btheta}(\bm{z}),\ry))- \frac{\sum_{i\in[n],j\in[m]}k(G_{\btheta}(\bm{z}^{(i)}), \ry^{(j)})}{mn}\Big|}_{T3}. \nonumber
\end{align} 
In other words, to upper bound $\mathfrak{R}^C$, we should separately derive the upper bounds of the terms $T1$, $T2$, and $T3$.  Specifically, supported by concentration  inequality \cite{mendelson2003few}, we obtain with probability at least $1- 3 \delta_{T3}$, 
\begin{equation}
\mathfrak{R}^C \leq 2 (\epsilon_1 + 2\epsilon_2 + 4\epsilon),
\end{equation}
where $\epsilon = \sqrt{\frac{8C^2(n+m)}{nm\log 1/ \delta_{T3}}}$, $\epsilon_1=\EE(T2)$, and $\epsilon_2=\EE(T3)$.

The above inequality indicates that the generalization error of QGAN depends on $\epsilon_1$ and $\epsilon_2$.  Namely, by leveraging the covering number---a tool developed in statistical learning theory, we prove $\epsilon_1 \leq \frac{8}{n-1} + \frac{24\sqrt{d^{2k}(N_{ge}+N_{gt})}}{n-1}\left(1+  N\ln (441d C_3^2(n-1)N_{ge}N_{gt}) \right)$  and $\epsilon_2\leq   \frac{8}{n} + \frac{24\sqrt{d^{2k}(N_{ge}+N_{gt})}}{n}\left(1+  N\ln (441d C_3^2nN_{ge}N_{gt}) \right)$. Taken together, with probability $1-3\delta_{T3}$, the generalization error of QGANs is upper bounded by $8\sqrt{ \frac{8C_2^2(n+m)}{nm}\ln \frac{1}{ \delta}  }  +   \frac{48}{n-1} + \frac{144d^{k} \sqrt{N_{gt}+N_{ge}}}{n-1} C_4$. 

\end{proof}
\noindent The results of Theorem \ref{thm:Gene-QGAN} convey four-fold implications. First, according to the first term in the right hand-side of Eq.~(\ref{eqn:thm2-1}),  to achieve  a tight upper bound of  $\mathfrak{R}^C$,  the ratio between the number of examples sampled from $\QQ$ and $\PP$ should satisfy $m/n=1$. In this case, $\mathfrak{R}^C$  linearly decreases and finally converges to zero with the increased $n$ or $m$. Second, $\mathfrak{R}^C$ \textit{linearly} depends on the kernel term $C_2$,  \textit{exponentially} depends on $k$ in Eq.~(\ref{eqn:ansatz}), and  \textit{sublinearly} depends on the number of trainable quantum gates $N_{gt}$. These observations underpin the importance of controlling the expressivity of the adopted Ansatz and selecting proper kernels to ensure both the good learning performance and generalization of QGANs. Third, the way of preparing $\rho_{\bm{z}}$ is a determinant factor in generalization of QGANs, which implies the prior distribution $\PP_{\mathcal{Z}}$ and the number of encoding gates $N_{ge}$ should be carefully designed.  Last, the explicit dependence on the architecture of Ansatz connects the generalization of QGANs with their trainability, i.e.,    a large $N_{gt}$ or $k$ may induce barren plateaus in training QGLMs \cite{mcclean2018barren,kieferova2021quantum} and results in an inferior learning performance. Meanwhile, it also leads to a degraded generalization error bound.

\textbf{Remark.} Most kernels such as RBF, linear, and Mat\'{e}rn kernels satisfy the Lipchitz condition  \cite{Lederer2019}. Meanwhile, Theorem \ref{thm:Gene-QGAN} can be efficiently extended to noisy settings by using the contraction properties of noisy channels explained in Ref.~\cite{du2021efficient}.

The derived bound in Theorem \ref{thm:Gene-QGAN} is succinct and can be directly employed to quantify the generalization of QGANs with the specified Ansatz. The following corollary quantifies $\mathfrak{R}^C$ of QGANs with two typical Ans\"atze in  VQAs, i.e., hardware-efficient Ansatz and quantum approximate optimization Ansatz.
 
\begin{coro}\label{coro:gene_ansatz}
Following notations in Theorem \ref{thm:Gene-QGAN}, when QGAN is realized by the hardware-efficient Ansatz and quantum approximate optimization Ansatz with $L$ layers, $\mathfrak{R}^C $ is upper bounded by \[\tilde{\mathcal{O}}(C_2\sqrt{\frac{n+m}{(nm)}}+ \frac{1}{n}+\frac{\sqrt{N(L_E+3L)}(N+1)}{n-1}),\] and 
\[\tilde{\mathcal{O}}(C_2\sqrt{\frac{(n+m)}{(nm)}}   +   \frac{1}{n-1} + \frac{2^N \sqrt{(N+1)(L+L_E)}(N + 1)}{n-1}).\]  
\end{coro}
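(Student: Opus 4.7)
The plan is to specialize Theorem~\ref{thm:Gene-QGAN} by substituting the structural parameters $(d,k,N_{gt},N_{ge})$ corresponding to the two concrete Ans\"atze, and then absorbing the sub-dominant logarithmic factor $C_4 = N\ln(441 d C_3^2 n N_{ge}N_{gt})+1$ into the $\tilde{\mathcal{O}}(\cdot)$ notation as a factor of order $N+1$ (up to logarithms). The first term of the bound, $8\sqrt{8C_2^2(n+m)/(nm)\ln(1/\delta)}$, is structure-independent and is carried over unchanged as $\tilde{\mathcal{O}}(C_2\sqrt{(n+m)/(nm)})$; the second term $48/(n-1)$ contributes the $1/n$ (resp.\ $1/(n-1)$) term; the whole task therefore reduces to evaluating $d^{k}\sqrt{N_{gt}+N_{ge}}\cdot C_4$ for each Ansatz.

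For the hardware-efficient Ansatz on $N$ qubits with $L$ layers, I would use $d=2$ and $k=O(1)$ (two-qubit entangling gates), so that $d^k$ is a constant absorbable into $\tilde{\mathcal{O}}$. Each layer contains $O(N)$ single-qubit parameterized rotations (e.g.\ the standard three-rotation block gives $3N$ trainable gates per layer), yielding $N_{gt}=3NL$; if the encoding circuit has $L_E$ layers of the same width, then $N_{ge}=NL_E$. Plugging in gives $\sqrt{N_{gt}+N_{ge}} = \sqrt{N(3L+L_E)}$ and $C_4 = \tilde{\mathcal{O}}(N+1)$, which immediately reproduces the stated bound $\tilde{\mathcal{O}}\!\bigl(C_2\sqrt{(n+m)/(nm)} + 1/n + \sqrt{N(L_E+3L)}(N+1)/(n-1)\bigr)$.

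For QAOA, the key structural difference is that each phase-separator layer $e^{-i\gamma H_C}$ is treated as a single $N$-qudit (global) parameterized gate, so $k=N$ and $d^{k}=2^N$. Each layer contributes the mixer $e^{-i\beta\sum_j X_j}$ (which, together with the separator, can be counted as $N+1$ parameterized gates in the sense of the Ansatz in Eq.~(\ref{eqn:ansatz})), giving $N_{gt}=(N+1)L$, and analogously $N_{ge}=(N+1)L_E$ for an encoding block of the same form. Then $\sqrt{N_{gt}+N_{ge}} = \sqrt{(N+1)(L+L_E)}$ and $d^{k}C_4 = 2^N\cdot\tilde{\mathcal{O}}(N+1)$, which produces the second bound $\tilde{\mathcal{O}}\!\bigl(C_2\sqrt{(n+m)/(nm)} + 1/(n-1) + 2^N\sqrt{(N+1)(L+L_E)}(N+1)/(n-1)\bigr)$.

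The step that requires the most care (and that is the only non-mechanical part) is the correct accounting of $k$, $N_{gt}$, and $N_{ge}$ for each Ansatz: in particular, justifying that for QAOA one must take $k=N$ because the cost-Hamiltonian evolution acts globally, which is what makes the generalization bound exponentially worse than the hardware-efficient case, while for the hardware-efficient Ansatz the locality of the two-qubit gates keeps $d^{k}$ a constant. Once these identifications are fixed, the remainder is a direct substitution into Eq.~(\ref{eqn:thm2-1}) and suppression of the $\ln(\cdot)$ factors inside $\tilde{\mathcal{O}}$.
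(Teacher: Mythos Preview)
Your proposal is correct and follows essentially the same route as the paper: both proofs simply instantiate Theorem~\ref{thm:Gene-QGAN} with the appropriate values of $(d,k,N_{gt},N_{ge})$ for each Ansatz and absorb the logarithmic factor $C_4$ into $\tilde{\mathcal{O}}(N+1)$. The one small discrepancy is that for the QAOA case the paper takes the \emph{encoding} block to be hardware-efficient (so $N_{ge}=L_E N$ rather than your $(N+1)L_E$), but since $N$ and $N+1$ agree up to constants this does not alter the stated $\tilde{\mathcal{O}}$ bound.
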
 
\begin{proof}[Proof of Corollary \ref{coro:gene_ansatz}]
	
Here we separately analyze the generalization ability of QGANs with two typical classes of Ans\"atze, i.e., the hardware-efficient Ans\"atze and the quantum approximation optimization Ans\"atze.

\textit{Hardware-efficient Ans\"atze.}  An $N$-qubits hardware-efficient Ansatz is composed of $L$ layers, i.e., $U(\bm{\theta})=\prod_{l=1}^LU(\bm{\theta}^l)$ with $L\sim poly(N)$, where $U(\bm{\theta}^l)$ is composed of parameterized single-qubit gates and fixed two-qubit gates. In general, the topology of $U(\bm{\theta}^l)$ for any $l\in[L]$ is the same and each qubit interacts with at least one parameterized single-qubit gate and two qubits gates.  Mathematically, we have $U(\bm{\theta}^l)=(\otimes_{i=1}^N U_s)U_{eng}$ with $U_s=R_Z(\beta)R_Y(\gamma)R_Z(\nu)$ being realized by three rotational qubit gates and $ \gamma, \beta, \nu\in[0, 2\pi)$. The number of two-qubit gates in each layer is set as $N$ and the connectively of two-qubit gates aims to adapt to the topology restriction of quantum hardware. The entangled layer $U_{eng}$ contains two-qubit gates, i.e., CNOT gates, whose connectivity adapts the topology of the quantum hardware.  

An example of $4$-qubit QGAN with  hardware-efficient Ansatz is illustrated in the upper panel of Fig.~\ref{fig:Ansatz-QNN}. Under this setting, when both the encoding unitary and the trainable unitary adopt the hardware-efficient Ansatz, we have $k=2$, $d=2$, $N_{ge}=L_{E}N$, $N_{gt}=L(3N)$, and $N_{g}=L(3N+N) = 4L$.  Based on the above settings, we achieve the generalization error of ab $N$-qubit QGAN  with the hardware-efficient Ans\"atze, supported by Theorem \ref{thm:Gene-QGAN}, i.e., with probability at least $1-\delta$,      
	\begin{eqnarray}
		\mathfrak{R}^C  \leq && 
8\sqrt{ \frac{8C_2^2(n+m)}{nm}\ln \frac{1}{ \delta}  }  +   \frac{48}{n-1} \nonumber\\
&& + \frac{576 \sqrt{N(L_E+3L)}}{n-1}(N\ln(1323d C_3^2nN^2L_EL)+1), \nonumber
	\end{eqnarray}  
where $C_2=\sup_{\rx} k(\rx, \rx)$.

  \begin{figure}
 	\centering
 	\includegraphics[width=0.43\textwidth]{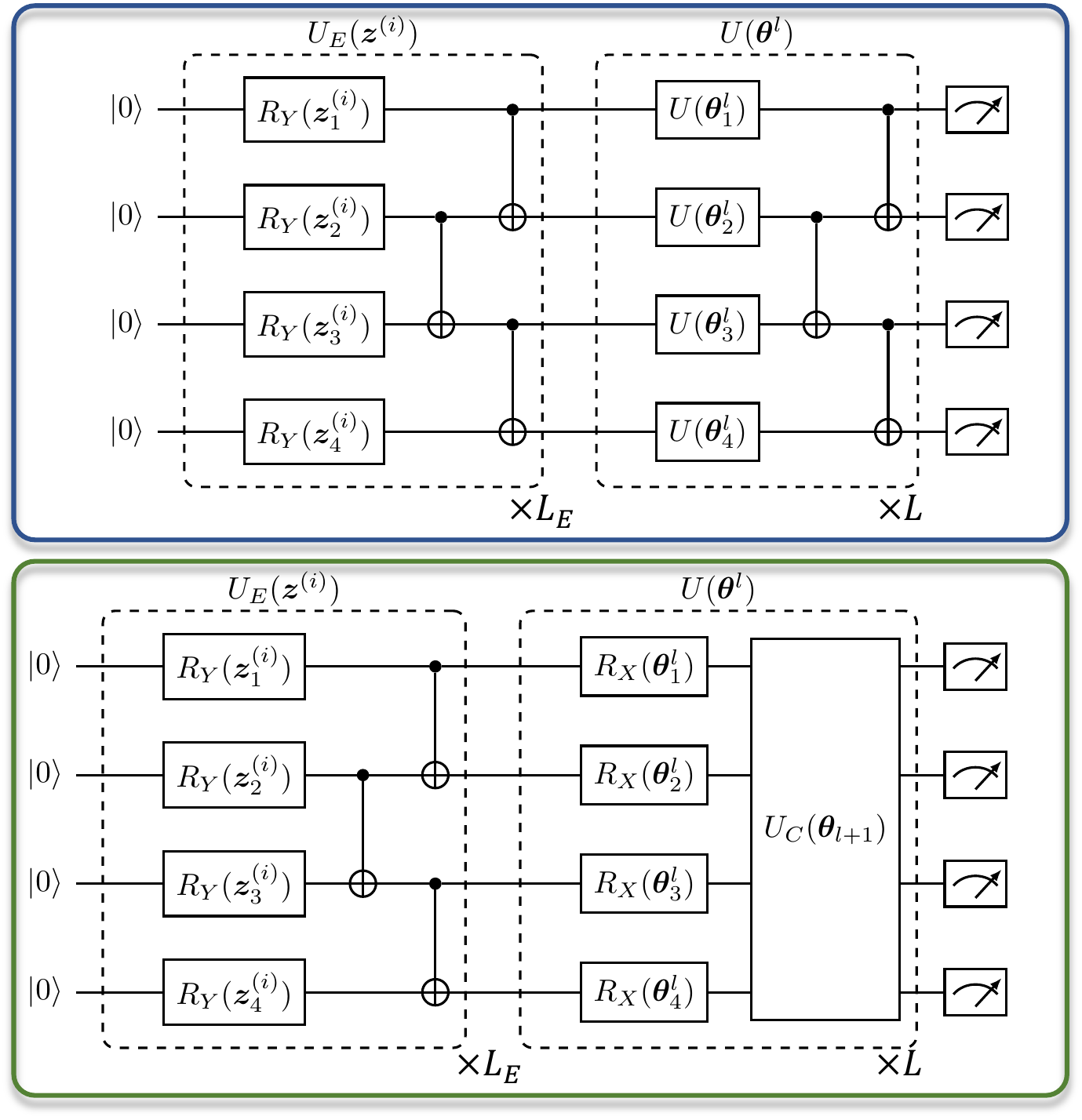}
 	\caption{\small{\textbf{Illustration of QGANs with different Ans\"atze.} The upper panel presents that both the encoding method and the trainable unitary of QGANs employ the  hardware-efficient  Ans\"atze. The lower panel presents a class of QGANs such that the encoding method uses the  hardware-efficient  Ans\"atze and the trainable unitary is implemented by the quantum approximate optimization Ans\"atze. }}
 	\label{fig:Ansatz-QNN}
 \end{figure} 
  
\textit{Quantum approximate optimization Ans\"atze.}  The lower panel of Fig.~\ref{fig:Ansatz-QNN} plots the quantum approximate optimization Ans\"atze. The mathematical expression of this Ansatz takes the form $U(\bm{\theta})=\prod_{l=1}^LU(\bm{\theta}^l)$, where the $l$-th layer $U(\bm{\theta}^l)=U_B(\bm{\theta}^l)U_C(\bm{\theta}^l)$ is implemented by the driven Hamiltonian $U_B(\bm{\theta}^l)=\otimes_{i=1}^N R_X(\bm{\theta}^l_{i})$ and the target Hamiltonian $U_C(\bm{\theta}^l)=\exp(-i\btheta^l_{l+1}H_C)$ with $H_C$ being a specified Hamiltonian. Under this setting, when the encoding unitary is constructed by the hardware-efficient Ansatz and the trainable unitary is realized by the quantum approximate optimization Ansatze, we have $k=N$, $d=2$, $N_{ge}=L_{E}N$, and $N_{gt}=L(N+1)$.  Based on the above settings, we achieve the generalization error of an $N$-qubit QGAN with the quantum approximate optimization Ans\"atze, supported by Theorem \ref{thm:Gene-QGAN}, i.e., with probability at least $1-\delta$, 
  	\begin{eqnarray}
\mathfrak{R}^C\leq  && 8\sqrt{ \frac{8C_2^2(n+m)}{nm}\ln \frac{1}{ \delta}  }  +   \frac{48}{n-1} \nonumber\\ 
&&+ \frac{144\times 2^N \sqrt{(N+1)(L+L_E)}}{n-1} \nonumber\\
&& \times (N \ln (441d C_3^2nLL_EN(N+1)) + 1), \nonumber
  	\end{eqnarray} 
where $C_2=\sup_{\rx} k(\rx, \rx)$.  
\end{proof}

\section{Applications of QGLMs with merits}\label{sec:application-QGLM}
A crucial message conveyed by Theorems \ref{thm:Gene-QCBM} and \ref{thm:Gene-QGAN} is that when $\QQ \in \PP_{\bm{\Theta}}$, the   distance between the learned $\PP_{\btheta}$ and $\QQ$ is continuously decreased by increasing $n$ and $m$. This  evidence contributes to theoretically study the merits of QGLMs in many  practical problems such as quantum state preparation and parameterized Hamiltonian learning. 

\subsection{Quantum state preparation}

Quantum state preparation is a crucial subroutine in quantum computing and quantum sensing, since its efficiency determines the achievable runtime speedups of quantum machine learning and Hamiltonian simulation algorithms, and improves the precision of the device. Nevertheless, theoretical results indicate that exactly loading the generic discrete distribution with $d^N$ dimensions to an $N$-qudit state requests $O(d^N)$ gates \cite{plesch2011quantum,zhang2022quantum}, which prevents any merits. Moreover, there is no deterministic method of encoding continuous distribution into the multi-qudit system. QGLMs open the door to efficiently encode both discrete and continuous distributions into quantum circuits with certain estimation error \cite{zoufal2019quantum}.  Theorems \ref{thm:Gene-QCBM} and \ref{thm:Gene-QGAN} establish the theoretical foundation about such an error. Specifically, for $\QQ \in \PP_{\Theta}$, large examples promise a low estimation error and the  learned QGLM can be easily realized and reused. For $\QQ \notin \PP_{\Theta}$, the estimation error could be considerable, even though $n$ and $m$ are infinite. In this view, the power of QGLMs in quantum state preparation highly depends on their expressivity. 

\subsection{Parameterized Hamiltonian Learning}
QGLMs may advance GLMs in  parameterized Hamiltonian learning (PHL). Define an $N$-qubit parameterized Hamiltonian as $H(\bm{a})$ and the corresponding ground state as $\ket{\phi(\bm{a})}$, where $\bm{a}$ is the interaction parameter sampled from a prior   distribution $\mathbb{D}$. PHL aims to use $m$ training samples $\{\bm{a}^{(i)}, \ket{\phi(\bm{a}^{(i)})}\}_{i=1}^m$ to approximate the distribution of the ground states for $H(\bm{a})$ with $\bm{a}\sim \mathbb{D}$, i.e., $\ket{\phi(\bm{a})}\sim \QQ$. If the estimation error is low, then the trained model can prepare the ground state of $H(\bm{a}')$ for an unseen parameter $\bm{a}'\sim \mathbb{D}$. This property can be used to explore many crucial behaviors in condensed-matter systems.   Concisely, there exists an $N$-qubit Hamiltonian $H(\bm{a})$ and a distribution $\mathbb{D}$ such that the formed ground state distribution $\QQ$ can be efficiently estimated by QGLMs with $O(poly(N))$ trainable parameters but is computational hardness for GLMs with $O(poly(N))$ trainable parameters.

 \begin{figure*}[!htp]
	\centering
\includegraphics[width=0.98\textwidth]{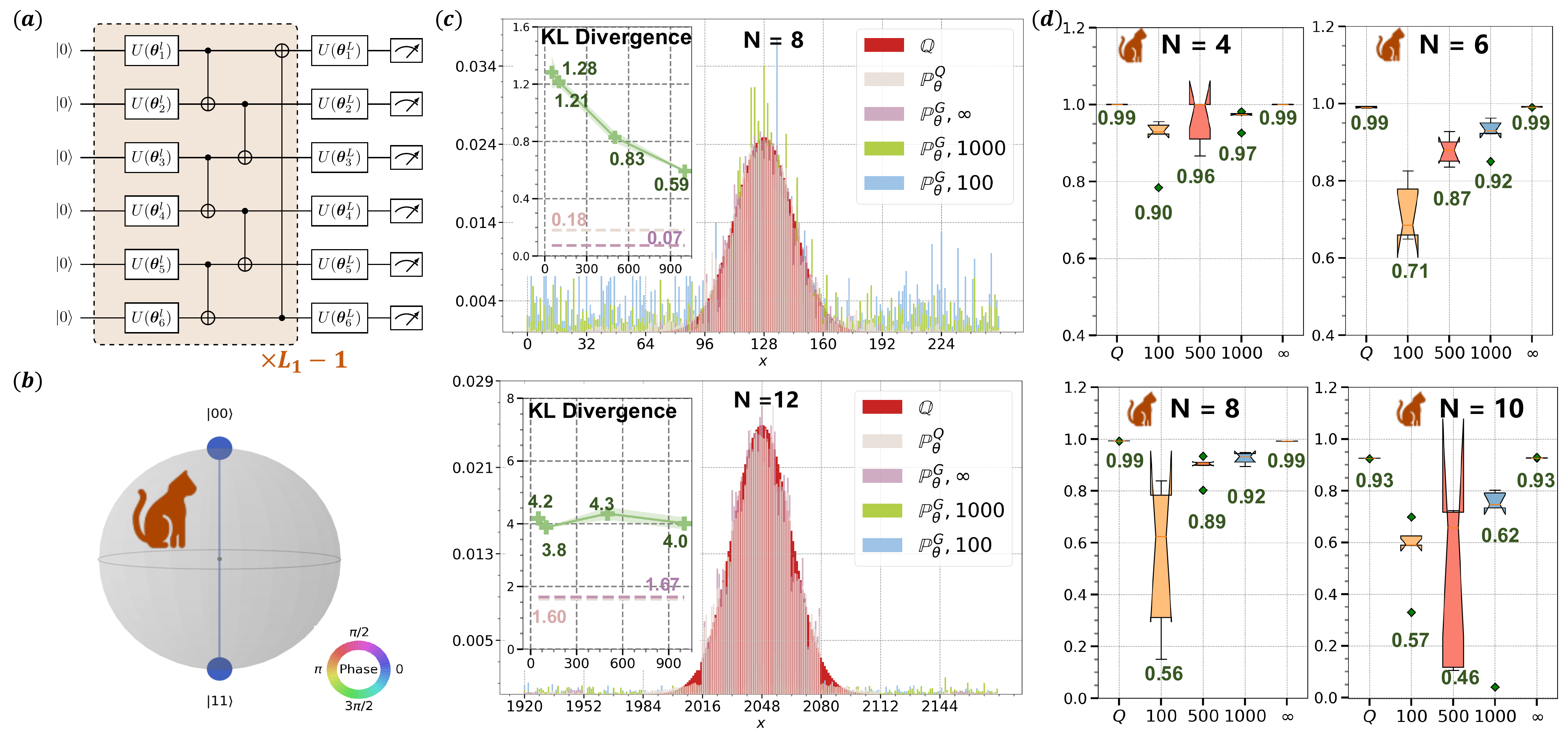}
\caption{\small{\textbf{Simulation results of QCBM.} (a) The implementation of QCBMs when $N=6$. The label `$L_1-1$' refers to repeating the architecture highlighted by the brown color $L_1-1$ times. The gate $U(\btheta_i^l)$ refers to the $\RY$ gate applied on the $i$-th qubit in the $l$-th  layer of Ansatz $\hat{U}(\btheta)$.  (b) The visualization of a two-qubit GHZ state. (c) The upper and lower panels separately show the simulation results of QCBMs in the task of estimating the discrete Gaussian distribution when $N=8$ and $N=12$. The labels `$\QQ$', $\PP_{\btheta}^Q$, `$\PP_{\btheta}^G,n$', stand for the target distribution, the output of QCBM with the quantum kernel, and the output of QCBM with the RBF kernel and  $n$ samples, respectively.   The inner plots evaluate the statistical performance of QCBM through KL divergence, where the x-axis labels the number of examples $n$. (d) The four box-plots separately show the simulation results of QCBM in the task of approximating $N$-qubit GHZ state with $N=4, 6, 8, 10$. The y-axis refers to the fidelity. The x-axis refers to the applied kernels in QCBM, where the label `$Q$' represents the quantum kernel and the rest four labels refers to the RBF kernel with  $n$ samples.     }}
\label{fig:dist_main_QCBM}
\end{figure*} 

 The separated power between QGLMs and GLMs relies on the following lemma.
\begin{lem}\label{lem:append:11}
	Suppose that $\QQ$ refers to the distribution of the   ground states for  parameterized Hamiltonians   $H(\bm{a})$ with $\bm{a}\sim \mathbb{D}$. Under the quantum threshold assumption, there exists a  distribution $\QQ$ that can be efficiently represented by QGLMs but is computationally hard for GLMs.     
\end{lem}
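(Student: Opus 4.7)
The plan is to exhibit an explicit family of parameterized Hamiltonians whose ground states can be prepared by a short quantum circuit but whose induced output distribution is classically hard to sample from under the quantum threshold assumption. Once such a family is in hand, the expressivity component of Theorems~\ref{thm:Gene-QCBM}--\ref{thm:Gene-QGAN} transfers efficient preparability into efficient QGLM representability, while the hardness assumption precludes any polynomial-size GLM from matching it.

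First, I would pick a family of $N$-qubit parameterized circuits $\{V(\bm{a})\}$ with $O(\mathrm{poly}(N))$ gates whose output distributions $p_{\bm{a}}(x)=|\langle x|V(\bm{a})|0\rangle^{\otimes N}|^2$ are classically hard to sample for $\bm{a}\sim\mathbb{D}$. Natural candidates include IQP ensembles, constant-depth random Clifford$+T$ circuits, or discrete-log/Shor-type circuits, all of which have hardness results tied to non-collapse of the polynomial hierarchy or to standard cryptographic assumptions---precisely what the quantum threshold assumption formalizes. I would then define $H(\bm{a}) = V(\bm{a}) H_0 V(\bm{a})^{\dagger}$ with $H_0 = \sum_{i=1}^N |1\rangle\langle 1|_i$, so that $H(\bm{a})$ is frustration-free with unique zero-energy ground state $|\phi(\bm{a})\rangle = V(\bm{a})|0\rangle^{\otimes N}$; the induced distribution $\QQ$ of ground-state measurement outcomes over $\bm{a}\sim\mathbb{D}$ is exactly the mixture $\EE_{\bm{a}\sim\mathbb{D}}[p_{\bm{a}}]$.

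On the positive side, embedding $V(\bm{a})$ directly inside the Ansatz $\hat{U}(\btheta)$---with $\bm{a}$ loaded through a standard encoding block and $\btheta$ absorbing the gate parameters of $V$---yields a QGLM with $O(\mathrm{poly}(N))$ trainable parameters for which $\inf_{\btheta\in\bm{\Theta}}\MMD^2(\PP_{\btheta}||\QQ) = 0$. Theorems~\ref{thm:Gene-QCBM} and \ref{thm:Gene-QGAN} then imply that polynomially many samples suffice to drive $\PP_{\hat{\btheta}}$ to $\QQ$ in MMD. For the negative side, I would argue by contradiction: if a classical GLM with $O(\mathrm{poly}(N))$ parameters represented $\QQ$ to inverse-polynomial MMD accuracy, then treating its generator as a conditional sampler in $\bm{a}$ would yield a polynomial-time classical algorithm for approximate sampling of $p_{\bm{a}}$ on a non-negligible fraction of $\bm{a}\sim\mathbb{D}$, contradicting the quantum threshold assumption.

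The main obstacle will be reconciling the two notions of approximation. Classical sampling hardness is typically stated in total-variation distance, whereas QGLM representability is measured through MMD; bridging the gap requires either a kernel choice for which MMD lower-bounds total variation (for discrete $\QQ$ the indicator/$\delta$ kernel suffices, while for smoother kernels one needs additional moment control), or an average-case-to-worst-case amplification on the circuit family---e.g.\ the anti-concentration plus Stockmeyer reduction used in random-circuit-sampling supremacy arguments---so that even approximate, average-case GLM representability collapses the invoked complexity-theoretic assumption. I anticipate that the cleanest route is to specialise to discrete $\QQ$ with the linear/indicator kernel, where MMD${}^2$ equals the collision probability gap and total-variation control follows directly, thereby sidestepping the subtleties that would otherwise complicate the reduction.
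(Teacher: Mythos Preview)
Your proposal follows the same conceptual arc as the paper: exhibit a Hamiltonian whose ground state is the output of a quantum circuit that is efficiently preparable on a quantum device but classically hard to sample, and then identify $\QQ$ with that output distribution. The paper's version is in fact simpler than yours on two counts. First, it does not work with a genuine family over $\bm{a}\sim\mathbb{D}$; it collapses $\mathbb{D}$ to a delta measure on a single $\bm{a}^*$ corresponding to one fixed random circuit $\mathcal{C}\sim\Hcal(N,2N^2)$, so the PHL task degenerates to preparing a single state $\ket{\Phi}=\mathcal{C}\ket{0}^{\otimes N}$. Second, instead of your explicit parent-Hamiltonian construction $V(\bm{a})H_0V(\bm{a})^\dagger$, the paper simply invokes universality of adiabatic quantum computation with $2$-local Hamiltonians to assert that $\ket{\Phi}$ is the ground state of \emph{some} $H(\bm{a}^*)$. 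Your construction is more concrete and arguably cleaner; the paper's has the advantage that the resulting Hamiltonian is $2$-local, whereas $V H_0 V^\dagger$ generically is not.

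One point to tighten: the ``quantum threshold assumption'' invoked in the lemma is specifically Aaronson--Chen's HOG assumption about random $2$D circuits (Assumption~\ref{assump:1} in the paper), not PH non-collapse or cryptographic hardness. So your candidate families---IQP, Clifford$+T$, Shor-type---do not literally sit under that assumption; to match the hypothesis you should specialise $V(\bm{a})$ to random circuits from $\Hcal(N,2N^2)$. With that substitution your argument goes through, and your discussion of the MMD-versus-total-variation gap is actually more careful than the paper's, which does not address that point at all.
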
 
\begin{proof}[Proof sketch]
The proof is provided in SM \ref{append:strm}. Conceptually, the proof is established on the results of quantum random circuits, which is widely believed to be classically computationally hard and in turn can be used to demonstrate quantum advantages on NISQ devices \cite{boixo2018characterizing,zhu2021quantum}. Namely, Ref.~\cite{aaronson2017complexity} proposed a Heavy Output Generation (HOG) problem to separate the power between classical and quantum computers. That is,  under the quantum threshold assumption, there do not exist classical samples that can spoof  the $k$ samples output by a random quantum circuit with success probability at least $0.99$ \cite{aaronson2017complexity}, while   quantum computers can solve the HOG problem with high success probability.   

Based on this observation, we connect $\ket{\phi(\bm a)}$  with the output state of random quantum circuits. To this end, we prove that there exists a ground state $\ket{\phi(\bm a)}$ of a Hamiltonian $H(\bm a)$, which can be efficiently prepared by quantum computers but is computationally hard for classical algorithms.  
\end{proof}
\noindent It indicates that in PHL, the expressivity of QGLMs outperforms GLMs with  $\QQ\in \PP_{\bm{\Theta}}^Q$ and $\QQ \notin \PP_{\bm{\Theta}}^C$. When the number of trainable parameters of QGLMs scales with  $O(poly(N))$,  there may exist a kernel leading to $\inf_{\bm\theta\in \bm\Theta} \MMD^2 (\PP_{\bm\theta}^Q ||\QQ)\rightarrow 0$. Considering that $\PP_{\bm\theta}^Q$ and $\QQ$ are intractable, the training of  QGANs amounts to minimizing $\MMD_U^2 (\PP_{\bm\theta}^Q ||\QQ)$.  Theorems \ref{thm:Gene-QCBM} and \ref{thm:Gene-QGAN} imply that when the empirical loss tends to zero and $n$ and $m$ are sufficient large, the generalization error vanishes and thus the estimated distribution recovers the target distribution with $ \PP_{\bm\theta^{(n,m)}}^Q \approx \PP_{\bm\theta}^Q\approx \QQ$. Conversely, the result $\QQ \notin \PP_{\bm{\Theta}}^C$ for GLMs means that even though the empirical loss and the generalization error are zero, the estimated distribution $ \PP_{\bm\theta^{(n,m)}}^C$ fails to recover $\QQ$.

\section{Numerical simulations}\label{sec:numerical}
In this section, we apply QCBMs and QGANs to accomplish the distribution preparation and quantum state approximation tasks. More implementation details and simulation results are deferred to  SM \ref{appendix:sim}.

\subsection{Gaussian distribution preparation by QCBM}
The first task is applying QCBMs to prepare the discrete Gaussian distribution $\mathsf{N}(N, \mu, \sigma)$, where $N$ specifies the range of events with $\rx\in [2^N]$, and $\mu$ and $\sigma$ refer to the mean and variance, respectively. An intuition of $\mathsf{N}(N, \mu, \sigma)$  is shown in Fig.~\ref{fig:dist_main_QCBM}(c), labeled by $\QQ$. The hyper-parameter settings are as follows. For all simulations, we fix $\mu=1$ and $\sigma=8$. The qubit count is set as $N=12$.  The hardware-efficient Ansatze is employed to construct QCBM with $L_1=8$ for $N=8$ ($L_1=12$ for $N=12$). An intuition is  in illustrated in Fig.~\ref{fig:dist_main_QCBM}(a). The quantum kernel and RBF kernel are adopted to compute the MMD loss. For RBF kernel, the number of samples is set as $n=100$, $1000$, and $\infty$. The maximum number of iterations is $T=50$.  For each setting, we repeat the training $5$ times to collect the statistical results.

\begin{figure*}[!htp]
	\centering
\includegraphics[width=0.98\textwidth]{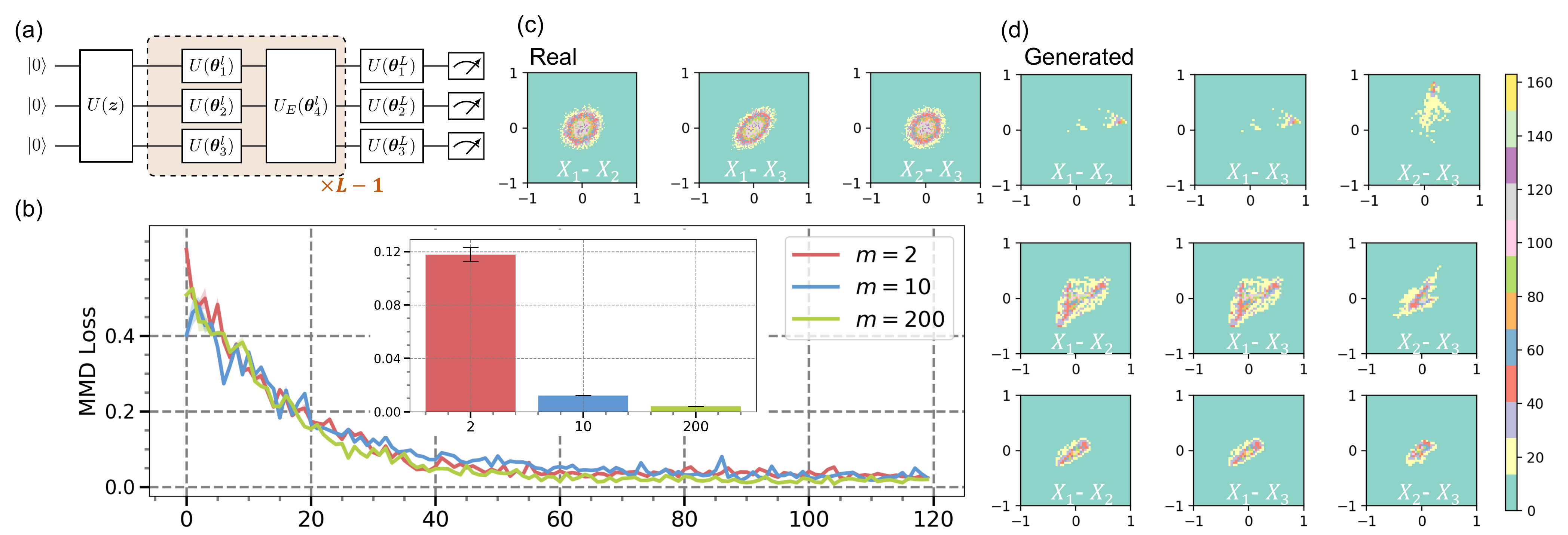}
\caption{\small{\textbf{Simulation results of QGANs.}   (a) The implementation of QGANs when the number of qubits is $N=3$. $U(\bm{z})$ refers to the encoding circuit to load the example $\bm{z}$. The meaning of `$L_1-1$' is identical to the one explained in Fig.~\ref{fig:QCBM_scheme}. The gate $U(\btheta_i^l)$ refers to the $\RY\RZ$ gates applied on the $i$-th qubit in the $l$-th  layer of $\hat{U}(\btheta)$. (b) The outer plot shows the training loss of QGANs with varied settings of $m$. The x-axis refers to the number of iterations. The inner plot shows the generalization property of trained QGANs by evaluating MMD loss. (c) The visualization of the exploited 3D Gaussian distribution. The label `$X_a$-$X_b$' means projects the 3D Gaussian into the $X_a$-$X_b$ plane with $a,b$ belonging to x, y, z-axis. (d) The generated data sampled from the trained QGAN with varied settings of $m$. From upper to lower panel, $m$ equals to $2$, $10$, and $200$, respectively. }}
\label{fig:3D-gau}
\end{figure*}

The simulation results of QCBMs are illustrated in Fig.~\ref{fig:dist_main_QCBM}(c). The two outer plots exhibit the approximated distributions under different settings. In particular, for both $N=8, 12$, the approximated distribution generated by QCBM with the quantum kernel well approximates $\QQ$. In the measure of KL divergence, the similarity of these two distributions is $0.18$ and $1.6$ for $N=8, 12$, respectively. In contrast, when the adopted kernels are classical and the number of measurements is finite, QCBMs encounter the inferior performance. Namely, by increasing $n$ and $m$ from $50$ to $1000$, the KL divergence between the approximated distribution and the target distribution only decreases from $1.28$ to $0.59$ in the case of $N=8$. Moreover, under the same setting, the KL divergence does not manifestly decrease when $N=12$, which requires a larger $n$ and $m$ to attain a good approximation  as suggested by Theorem \ref{thm:Gene-QCBM}. This argument is warranted by the numerical results with the setting $n=m\rightarrow \infty$, where the achieved KL divergence is comparable with QCBM with the quantum kernel. Nevertheless,  the runtime complexity of QCBMs with the classical kernel polynomially scales with $n$ and $m$. According to Lemma \ref{lem:equi-QCBM-QGAN}, under this scenario, QCBMs with the quantum kernel embraces the runtime advantages.

 \subsection{GHZ state approximation by QCBM}
We apply QCBMs to accomplish the task of preparing GHZ states, a.k.a., ``cat states'' \cite{nielsen2010quantum}.   An intuition is depicted in Fig.~\ref{fig:dist_main_QCBM}(b).  The choice of GHZ states is motivated by their importance in quantum information.   The formal expression of an $N$-qubit GHZ state is $\ket{\GHZ}=(\ket{0}^{\otimes N} + \ket{1}^{\otimes N})/\sqrt{2}$. The hyper-parameter settings are as follows. The number of qubits is set as $N= 6, 8, 10$ and the corresponding depth is $L_1= 4, 6, 8, 10$. The quantum kernel and RBF kernel are adopted to compute the MMD loss. For RBF kernel, the number of samples is set as $n=100$, $1000$, and $\infty$.  The maximum number of iterations is $T=50$.  For each setting, we repeat the training $5$ times to get a better understanding of the robustness of the results.   The other settings are identical to those used in the task of preparing Gaussian distributions. 
 
The simulation results, as illustrated in Fig.~\ref{fig:dist_main_QCBM}(d),  indicate that QCBMs with quantum kernels outperforms RBF kernels when $n$ and $m$ are finite. This observation becomes apparent with an increased $N$. For all settings of $N$, the averaged fidelity between the generated states of QCBMs with the quantum kernel and the target $\ket{\GHZ}$ is above $0.99$, whereas the obtained averaged fidelity for QCBMs with the RBF kernel is $0.46$ for $N=10$ and $n=m=100$.  Meanwhile, as with the prior task, RBF kernel attain a competitive performance with the quantum kernel unless $n=m\rightarrow \infty$, while the price to pay is an unaffordable computational overhead.

\subsection{3D-correlated Gaussian preparation by QGAN} 
The last tasks is using QGANs to prepare  3D correlated Gaussian distributions with varied settings. The target distribution $\QQ$ is a 3D correlated Gaussian distribution centered at $\mu = (0, 0, 0)$ with covariance matrix $\sigma=\big(\begin{smallmatrix}
  0.5 & 1 & 0.25\\
  0.1 & 0.5 & 0.1\\
  0.25 & 0.1 & 0.5
\end{smallmatrix}\big)$. The sampled examples from $\QQ$ are visualized in Fig.~\ref{fig:3D-gau}(a).  Here we use a variant of the style-QGAN proposed by \cite{bravo2021style} to accomplish the task. Two key modifications in our protocol are  constructing the quantum generator and replacing the trainable discriminator by MMD loss. Different from the original proposal applying the re-uploading method, the modified quantum generator first uploads the prior example $\bm{z}$ using $U(\bm{z})$ followed by the Ansatz $U(\btheta)$. Such a modification facilitates the analysis of the generalization behavior of QGANs as claimed in Theorem \ref{thm:Gene-QGAN}.  The hyper-parameter settings are as follows.  The number of reference samples $n$ ranges from $2$ to $200$ and we keep $n=m$. The layer depth of $G_{\btheta}(\cdot)$ is set as $L\in\{2,4, 6, 8\}$. Each setting is repeated with $5$ times to collect the statistical results. 

 The simulation results are exhibited in Figs.~\ref{fig:3D-gau}(b)-(c). For illustration, we depicts the generated distribution of the trained QGANs in Fig.~\ref{fig:3D-gau}(b). With increasing $m$, the learned distribution is close to the real distribution. The outer plot in Fig.~\ref{fig:3D-gau}(c) shows that for all settings of $m$, the   empirical $\MMD$ loss, i.e., $\MMD_U(\PP_{\hat{\btheta}^{(n,m)}}^n||\QQ^m)$, converges after $60$ iterations, where the averaged loss is $0.0114$, $0.0077$, and $0.0054$ for $m=2, 10, 200$, respectively. The inner plot measures the expected $\MMD$ loss, i.e., the trained QGANs are employed to generate new $10000$ examples and then evaluate $\MMD_U$ to estimate $\MMD$. The averaged expected $\MMD$ loss for $m=2, 10, 200$ is $0.1178$, $0.0122$, and $0.0041$ respectively.  These observations echo with Theorem \ref{thm:Gene-QGAN}, where a large $m$ allows a better generalization ability.

 \section{Discussions}\label{sec:discuss}
 We provide a succinct and direct way to compare generalization of QGLMs with different Ans\"atze, which deepens our understanding about the capabilities of QGLMs and benefit the design of advanced QGLMs. For QCBMs, Theorem \ref{thm:Gene-QCBM}  unveils that quantum kernels can greatly benefit their generalization and reduce computational overhead over classical kernels when the target distribution is quantum. Theorem \ref{thm:Gene-QGAN} hints that the generalization error of QGANs has the explicit dependence  on the qudits count, the structure information of the employed Ans\"atze,  the adopted encoding method, and the choice of prior distribution. These   results suggest that a possible way to enhance the learning performance of QGLMs is devising novel Ans\"atze  via quantum circuit architecture design techniques \cite{amaro2022filtering,bilkis2021semi,du2020quantum,linghu2022quantum,kuo2021quantum,zhang2020differentiable}.      Although the attained theoretical results do not exhibit the generic exponential advantages of QGLMs, we clearly show that their potentials in quantum state preparation and parameterized Hamiltonian learning.

The developed techniques in this study are general and provide a novel approach to theoretically investigate  the power of QGLMs. For instance, a promising direction is uncovering the generalization of other QGLMs. Furthermore, it is intriguing to explore the generalization of QGLMs with other loss functions  \cite{briol2019statistical,chakrabarti2019quantum}. Another  important future direction will be identifying how to use QGLMs to gain substantial quantum advantages for practical applications, e.g.,  quantum many body physics, quantum sensing, and quantum information processing. Last, the entangled relation between expressivity and generalization in QGLMs queries a deeper understanding from each side.

\newpage   
\clearpage 
\renewcommand{\appendixname}{SM}
\appendix 
\onecolumngrid

\begin{center}
\large{\textbf{Supplementary Material: ``Power of Quantum Generative Learning''}}	
\end{center}

\section{Schematic of QGANs in the discrete and continuous settings}\label{apped:sum_GLMs}

For the purpose of elucidating, in this section, we first introduce the basic theory of GANs and QGANs, especially for QGANs with MMD loss, and then demonstrate the equivalence of QCBMs and QGANs in the discrete setting when the loss function is specified to be MMD.

\medskip 
\textbf{Basic theory of (classical) GANs and QGANs when $\QQ$ is continuous.}  
  The fundamental mechanism of GAN \cite{goodfellow2014generative} and its variations \cite{arjovsky2017wasserstein, mirza2014conditional,zhang2017stackgan,makhzani2015adversarial} is as follows. GAN sets up a  two-players game: the generator  $G$ creates data that pretends to come from the real data distribution $\QQ$ to fool the discriminator $D$, while $D$ tries to distinguish the fake generated data from the real training data. Mathematically,  $G$ and $D$ corresponds to two a differentiable functions. In particular, the input of $G$ is a latent variable $\bm{z}$ and its output is $\rx$, i.e.,  $G:G(\bm{z}, \bm{\theta})\rightarrow \rx$ with $\bm{\theta}$ being trainable parameters for $G$. The role of the latent variable $\bm{z}$ is  ensuring  GAN to be a structured probabilistic model \cite{goodfellow2016deep}. The input of $D$ can either be the generated data $\rx$ or the real data $\ry\sim \QQ$ and its output corresponds to the binary classification result (real or fake), respectively. The mathematical expression of $D$ yields $D : D(\rx, \ry, \bm{\gamma})\rightarrow (0,1) $ with $\bm{\gamma}$ being trainable parameters for $D$. If the distribution  learned by $G$ equals to the real data distribution, i.e., $\PP_{\btheta} = \QQ$, then $D$ can never discriminate between the generated data and the real data and this unique solution    is called Nash equilibrium \cite{goodfellow2014generative}.

To reach the Nash equilibrium, the training process of GANs corresponds to the minimax optimization. Namely, the discriminator $D$ updates  $\bm{\gamma}$  to maximize the classification accuracy, while the generator $G$ updates  $\btheta$ to minimize the classification accuracy. With this regard, the optimization of GAN  follows
\begin{equation}\label{eqn:loss}
\min_{\bm{\theta}}\max_{\bm{\gamma}} \mathcal{L}(D_{\bm{\gamma}}(G_{\bm{\theta}}(\bm{z})),D_{\bm{\gamma}}(\bm{x})):=  \mathbb{E}_{\bm{x}\sim \QQ} [ D_{\bm{\gamma}}( \bm{x})] +\mathbb{E}_{\bm{z}\sim \Pro_{\mathcal{Z}}} [(1-D_{\bm{\gamma}}(G_{\bm{\theta}}(\bm{z}))],
\end{equation}
where  $\QQ$ is the distribution of training dataset, and $\Pro_{\mathcal{Z}}$ is the probability distribution of the latent variable $\bm{z}$. In general, $G_{\bm{\theta}}$ and $D_{\bm{\gamma}}$ are constructed by deep neural networks, and their parameters are updated iteratively using  gradient descent methods \cite{boyd2004convex}.    
 
The key difference between GANs and QGANs is the way of implementing $G_{\btheta}$ and $D_{\bm{\gamma}}$. Particularly, in  QGANs, either $G$, $D$, or both can be realized by variational quantum circuits instead of deep neural networks. The training strategy of QGANs is similar to classical GANs. In this study, we focus on QGANs with MMD loss, which can be treated as the quantum extension of MMD-GAN \cite{dziugaite2015training}. Unlike conventional GANs and QGANs,  MMD-GAN and QGAN  replace a trainable discriminator with   MMD. In this way,  the family of discriminators is substituted with a family $\mathcal{H}$ of test functions, closed under negation, where the optimization of $D$ can be completed with the analytical form. Therefore, the goal of QGANs is finding an estimator minimizing an unbiased  MMD loss, i.e.,
\begin{equation}\label{append:eqn:unbia-MMD}
	\MMD^2_U(\PP_{\btheta}||\QQ):=\frac{1}{n(n-1)}\sum_{i\neq i'}^n k(\rx^{(i)}, \rx^{(i')}) + \frac{1}{m(m-1)}\sum_{j\neq j'}^m k(\ry^{(j)}, \ry^{(j')})- \frac{2}{nm}\sum_{i,j} k(\rx^{(i)}, \ry^{(j)}), 
\end{equation}
where $\rx^{(i)}\sim \PP_{\btheta}$ and $\ry^{(j)}\sim \QQ$.

\medskip
\textbf{Equivalence between QCBMs and QGANs when $\QQ$ is discrete.} In accordance with the explanations in Refs.~\cite{benedetti2019adversarial,zoufal2019quantum}, when QGAN is applied to estimate a discrete distribution $\QQ$ (e.g., quantum state approximation), the quantum generator aims to directly capture the distribution of the data itself. This violates the criteria of implicit generative models, where a stochastic process is employed to draw samples from the underlying data distribution after training. More specifically, when $\QQ$ is discrete, the output of the quantum circuit for both QCBM and QGAN takes the form $\PP_{\btheta}(i)= \Tr(\Pi_i \hat{U}(\bm{\theta})\rho_0 \hat{U}(\bm{\theta})^\dagger)$ in Eq.~(\ref{eqn:QCBM}). The concept `adversarial' originates from the way of optimizing $\btheta$. Instead of using a deterministic distance measure (e.g., KL divergence) as in QCBMs,  QGANs utilize a discriminator $D_{\bm{\gamma}}$, implemented by either trainable parameterized quantum circuit or a neural network, to maximally separate  $\PP_{\btheta}(i)$ from $\QQ$. The behavior of simultaneously update $\btheta$ (to minimize the loss) and $\bm{\gamma}$ (to maximize the loss) is termed as quantum generative adversarial learning. With this regard, when we replace the trainable $D_{\bm{\gamma}}$ by  the deterministic measure MMD, QGAN takes an equivalent mathematical form with QCBM.

\section{Optimization of QGANs with MMD loss}\label{appendix:MMD_opt}
For self-consistency, in this section, we introduce the elementary backgrounds of the optimization of QGLMs with MMD loss. See Ref.~\cite{Gretton2012AKernel} for elaborations. A central concept used in MMD loss is kernels.
\begin{definition}[Definition 2, \cite{schuld2019quantum}]
	Let $\mathcal{X}$ be a nonempty set, called the input set. A function $k:\mathcal{X}\times \mathcal{X}\rightarrow \mathbb{C}$ is called kernel if the Gram matrix $\mathcal{K}$ with entries $\mathcal{K}_{m,m'}=k(\rx^m,\rx^{m'})$ is positive semi-definite. 
\end{definition}

\medskip
\textbf{MMD loss.} Let $k: \mathcal{X} \times  \mathcal{X} \rightarrow \mathbb{R}$ be a Borel measurable kernel on $\mathcal{X}$, and consider the reproducing kernel Hilbert space $\mathcal{H}_k$ associated with k (see Berlinet and Thomas-Agnan [2004]), equipped with inner product $\langle \cdot, \cdot \rangle_{\mathcal{H}_k}$.  Let $\mathcal{P}_k(\mathcal{X})$ be the set of Borel probability measures $\mu$ such that $\int_{\mathcal{X}}\sqrt{k(\rx, \rx)}\mu(d\rx)< \infty$. The kernel mean embedding $\Pi_k(\mu) = \int k(\cdot,y)\mu(dy)$, interpreted as a Bochner integral, defines a continuous embedding from  $\mathcal{P}_k(\mathcal{X})$ into $\mathcal{H}_k$. The mean embedding pulls-back the metric on $\mathcal{H}_k$ generated by the inner product to define a pseudo- metric on $\mathcal{P}_k(\mathcal{X})$ called the maximum mean discrepancy MMD: $\mathcal{P}_k(\mathcal{X})\times \mathcal{P}_k(\mathcal{X})\rightarrow \mathbb{R}_+$, i.e.,
 \begin{equation}
 	\MMD(\PP_1||\PP_2)=\|\Pi_k(\PP_1)-\Pi_k(\PP_2)\|_{\mathcal{H}_k}.
 \end{equation}
 
 The MMD loss has a particularly simple expression that can be derived through an application of the reproducing property ($f(x)=\langle f, k(\cdot, \rx)\rangle_{\mathcal{H}_k}$), i.e.,
\begin{eqnarray}\label{append:eqn:mmd_loss}
	&& \MMD^2(\PP_1||\PP_2):=\left\|\int_{\mathcal{X}}k(\cdot, \rx)\PP_1(d\rx)- \int_{\mathcal{X}}k(\cdot, \rx)\PP_2(d\rx)\right\|_{\mathcal{H}_k}^2 \nonumber\\
	=&& \int_{\mathcal{X}}\int_{\mathcal{X}} k(\rx, \ry) \PP_1(d\rx)\PP_1(dy) -2\int_{\mathcal{X}}\int_{\mathcal{X}} k(\rx, \ry) \PP_1(d\rx)\PP_2(dy)   +\int_{\mathcal{X}}\int_{\mathcal{X}} k(\rx, \ry) \PP_2(d\rx)\PP_2(dy) \nonumber\\
	= && \EE_{\rx, \ry\sim \PP_1}(k(\rx, \ry)) - 2\EE_{x\sim \PP_1, y\sim \PP_2}(k(\rx, \ry)) + \EE_{\rx, \ry\sim \PP_2}(k(\rx, \ry)),
\end{eqnarray}
which provides a closed form expression up to calculation of expectations.

\medskip
\textbf{Optimization of QCBMs with MMD loss.} The goal of QCBMs is finding an estimator minimizing the loss function $\MMD^2(\PP_{\btheta}||\QQ)$ in Eq.~(\ref{append:eqn:mmd_loss}), where $\PP_{\btheta}$ is defined in Eq.~(\ref{eqn:QCBM}). The optimization is completed by the   gradient based descent optimizer. The updating rule satisfies  $\btheta^{(t+1)}=\btheta^{(t)} - \eta \nabla_{\btheta} \MMD^2(\PP_{\btheta}||\QQ)$ and $\eta$ is the learning rate. Concretely, the partial derivative of the $j$-th entry satisfies
\begin{eqnarray}\label{append:eqn:grad_mmd_qcbm}
	&& \frac{\partial \MMD^2(\PP_{\btheta}||\QQ)}{\partial \btheta_j} \nonumber\\
	= && \frac{\partial  \EE_{\rx, \ry\sim \PP_{\btheta}}(k(\rx, \ry)) - 2\EE_{x\sim \PP_{\btheta}, y\sim \QQ}(k(\rx, \ry)) + \EE_{\rx, \ry\sim \QQ}(k(\rx, \ry))}{\partial \btheta_j} \nonumber\\
	= && \sum_{\rx,\ry} k(\rx, \ry)\left(\PP_{\btheta}(\ry)\frac{\partial  \PP_{\btheta}(\rx)}{\partial \btheta_j} + \PP_{\btheta}(\rx)\frac{\partial  \PP_{\btheta}(\ry)}{\partial \btheta_j} \right) - 2\sum_{\rx, \ry}k(\rx, \ry) \frac{\partial \PP_{\btheta}(\rx)}{\partial \btheta_j} \QQ(\ry) \nonumber\\
	= && \sum_{\rx,\ry} k(\rx, \ry)\left(\PP_{\btheta}(\ry)\left(\PP_{\btheta+\frac{\pi}{2}\bm{e}_j}(\rx) - \PP_{\btheta - \frac{\pi}{2}\bm{e}_j}(\rx) \right)   + \PP_{\btheta}(\rx)\left(\PP_{\btheta+\frac{\pi}{2}\bm{e}_j}(\ry) - \PP_{\btheta - \frac{\pi}{2}\bm{e}_j}(\ry) \right) \right) \nonumber\\
	&& - 2\sum_{\rx, \ry}k(\rx, \ry) \left(\PP_{\btheta+\frac{\pi}{2}\bm{e}_j}(\rx) - \PP_{\btheta - \frac{\pi}{2}\bm{e}_j}(\rx) \right) \QQ(\ry) \nonumber\\
	= &&  \EE_{\rx\sim \PP_{\btheta+\frac{\pi}{2}\bm{e}_j}, \ry\sim \PP_{\btheta}}(k(\rx, \ry)) - \EE_{\rx\sim \PP_{\btheta-\frac{\pi}{2}\bm{e}_j}, \ry\sim \PP_{\btheta}}(k(\rx, \ry)) + \EE_{\rx\sim \PP_{\btheta}, \ry\sim \PP_{\btheta+\frac{\pi}{2}\bm{e}_j}}(k(\rx, \ry)) - \EE_{\rx\sim \PP_{\btheta}, \ry\sim \PP_{\btheta-\frac{\pi}{2}\bm{e}_j}}(k(\rx, \ry))\nonumber\\
	&& - 2\EE_{\rx\sim \PP_{\btheta+\frac{\pi}{2}\bm{e}_j}, \ry\sim \QQ}(k(\rx, \ry)) +  2\EE_{\rx\sim \PP_{\btheta-\frac{\pi}{2}\bm{e}_j}, \ry\sim \QQ}(k(\rx, \ry)) , 
\end{eqnarray} 
where the last second equality employs  the parameter shift rule \cite{schuld2019evaluating} to calculate the partial derivative ${\partial  \PP_{\btheta}(\ry)}/{\partial \btheta_j}$ and ${\partial  \PP_{\btheta}(\rx)}/{\partial \btheta_j}$. According to Lemma \ref{lem:equi-QCBM-QGAN}, the six expectation terms in Eq.~(\ref{append:eqn:grad_mmd_qcbm}) can be analytically and efficiently calculated when the $k(\cdot, \cdot)$ is quantum. In the case of classical kernels,  the six expectation terms in Eq.~(\ref{append:eqn:grad_mmd_qcbm})  are estimated by the sample mean.

\medskip
\textbf{Optimization of QGANs with MMD loss.} We next derive the gradients of QGANs with respect to the $\ell$-th entry. Since the evaluation of expectation is runtime expensive when $\QQ$ is continuous, QGANs employ an unbiased estimator of the MMD loss in Eq.~(\ref{append:eqn:unbia-MMD}) to update $\btheta$. The updating rule at the $t$-th iteration is $\btheta^{(t+1)}=\btheta^{(t)}-\eta\nabla_{\btheta} \MMD^2_U(\PP_{\btheta}||\QQ)$.   According to the chain rule, we have
\begin{eqnarray}
	&& \frac{\partial \MMD_U^2(\PP_{\btheta}||\QQ)}{\partial \btheta_\ell} \nonumber\\
	= && \frac{\partial  \frac{1}{n(n-1)}\sum_{i\neq i'}^n k(G_{\btheta}(\bm{z}^{(i)}), G_{\btheta}(\bm{z}^{(i')}))  - \frac{2}{nm}\sum_{i,j} k(G_{\btheta}(\bm{z}^{(i)}), \ry^{(j)})}{\partial \btheta_\ell} \nonumber\\
	= && \frac{1}{n(n-1)}\sum_{i\neq i'}^n  \frac{\partial  k(G_{\btheta}(\bm{z}^{(i)}), G_{\btheta}(\bm{z}^{(i')}))}{\partial G_{\btheta}(\bm{z}^{(i)})}\frac{\partial G_{\btheta}(\bm{z}^{(i)})}{\partial \btheta_\ell}+\frac{\partial  k(G_{\btheta}(\bm{z}^{(i)}), G_{\btheta}(\bm{z}^{(i')}))}{\partial G_{\btheta}(\bm{z}^{(i')})}\frac{\partial G_{\btheta}(\bm{z}^{(i')})}{\partial \btheta_\ell} -\\ &&\frac{2}{nm}\sum_{i,j} \frac{\partial   k(G_{\btheta}(\bm{z}^{(i)}), \ry^{(j)})}{\partial G_{\btheta}(\bm{z}^{(i)})} \frac{\partial G_{\btheta}(\bm{z}^{(i)})}{\partial \btheta_\ell}
\end{eqnarray}
where the first equality uses $ \partial \frac{1}{m(m-1)}\sum_{j\neq j'}^m k(\ry^{(j)}, \ry^{(j')})/\partial \btheta_\ell = 0$, each derivative ${\partial  k(G_{\btheta}(\bm{z}^{(i)}), G_{\btheta}(\bm{z}^{(i')}))}/{\partial G_{\btheta}(\bm{z}^{(i)})}$ can be easily computed for standard kernels, and the derivative ${\partial G_{\btheta}(\bm{z}^{(i)})}/{\partial \btheta_\ell}$ for $\forall i\in n, j\in [m]$ can be computed via the parameter shift rule. Therefore, the gradients of QGANs with MMD loss can be achieved.

\section{The comparison between QCBMs and GLMs for estimating discrete distributions}\label{appd:coro:DNN-QCBM}
In this section, we emphasize a central question in QGLMs, i.e., when both variational quantum circuits and neural networks are used to implement $\PP_{\btheta}$, which one can attain a lower $\inf_{\btheta\in\bm{\Theta}}\MMD^2 (\PP_{\bm{\theta}}||\QQ)$. The importance of this issue comes from  Eq.~(\ref{eqn:gene_error_bound-exac}), where the generalization error bound becomes meaningful when $\inf_{\btheta\in\bm{\Theta}}\MMD^2 (\PP_{\bm{\theta}}||\QQ)$ is small. In this respect, it is necessary to understand whether QCBMs allow a lower $\inf_{\btheta\in\bm{\Theta}}\MMD^2 (\PP_{\bm{\theta}}||\QQ)$ over (classical) GLMs.

In what follows, we analyze when QCBMs promise a lower $\inf_{\btheta\in\bm{\Theta}}\MMD^2 (\PP_{\bm{\theta}}||\QQ)$ over a typical GLM---restricted Boltzmann machine (RBM) \cite{hinton2012practical}. To be more specific, consider that both QCBMs and RBMs are universal approximators with an exponential number of trainable parameters \cite{biamonte2021universal,le2008representational} with $\inf_{\btheta\in\bm{\Theta}}\MMD^2 (\PP_{\bm{\theta}}||\QQ)\rightarrow 0$, we focus on the more practical scenario in which the number of parameters polynomially scales with the feature dimension, the qudit count, and the number of visible neurons. Denote the space of the parameterized distributions formed by QCBMs (or RBM) as $\PP^{\text{QCBM}}_{\bm{\Theta}}$ (or $\PP^{\text{RBM}}_{\bm{\Theta}}$). The superiority of QCBMs can be identified by showing $\inf_{\btheta\in\bm{\Theta}}\MMD^2 (\PP_{\bm{\theta}}^{\text{QCBM}}||\QQ)\leq \inf_{\btheta\in\bm{\Theta}}\MMD^2 (\PP_{\bm{\theta}}^{\text{RBM}}||\QQ)$. This amounts to finding a distribution $\QQ$ satisfying 
\begin{equation}
	\left(\QQ\in \PP^{\text{QCBM}}_{\bm{\Theta}}\right) \wedge \left(\QQ\notin \PP^{\text{RBM}}_{\bm{\Theta}}\right). 
\end{equation}
According to the results in \cite{gao2017efficient1}, there is a large class of quantum states meeting the above requirement. Representative examples include projected entangled pair states and ground states of $k$-local Hamiltonians.

\section{Proof of Theorem \ref{thm:Gene-QGAN} (generalization of QGANs) }\label{appendix:gene_QGAN}

The proof of Theorem \ref{thm:Gene-QGAN} utilizes the following three lemmas. For clearness, we defer the proof of  Lemmas \ref{lem:cov-T2} and \ref{lem:cov-T3}  to SM  \ref{appendix:lemcov-T2} and \ref{appendix:lemcov-T3}, respectively.  
\begin{lem}[McDiarmids inequality, \cite{mendelson2003few}]\label{lem:mc-ineql}
	Let $f : \mathcal{X}_1\times \mathcal{X}_2 \times ... \times \mathcal{X}_N\rightarrow \mathbb{R}$ and assume there exists $c_1,...,c_2\geq 0$ such that, for all $k \in\{1,...,N\}$, we have
	\begin{equation}
\sup_{\rx_1,...,\rx_k,\tilde{\rx}_k,...\rx_N} |f(\rx_1,...,\rx_k,...,\rx_N)-f(\rx_1,...,\tilde{\rx}_k,...,\rx_N)|\leq c_k.
	\end{equation}
Then for all $\epsilon\geq 0$ and independent random variables $\xi_1,...\xi_N$ in $\mathcal{X}$, 
\begin{equation}
	\Pr(|f(\xi_1,...\xi_N)-\EE(f(\xi_1,...\xi_N))|\geq \epsilon)\leq \exp\left(\frac{-2\epsilon^2}{\sum_{n=1}^Nc_n^2}\right).
\end{equation} 
\end{lem}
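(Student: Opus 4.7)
The plan is to apply the classical Doob-martingale approach: decompose the centered statistic into martingale differences, show that each difference has conditionally bounded range, and then combine Hoeffding's lemma with a Chernoff bound. Let $Z := f(\xi_1,\ldots,\xi_N) - \mathbb{E}[f(\xi_1,\ldots,\xi_N)]$ and introduce the filtration $\mathcal{F}_k := \sigma(\xi_1,\ldots,\xi_k)$ with $\mathcal{F}_0$ trivial. Define the Doob martingale differences $V_k := \mathbb{E}[f\mid \mathcal{F}_k] - \mathbb{E}[f\mid \mathcal{F}_{k-1}]$ for $k=1,\ldots,N$, so that $Z = \sum_{k=1}^N V_k$ telescopes and $\mathbb{E}[V_k\mid \mathcal{F}_{k-1}] = 0$.

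The first key step is to bound the conditional range of each $V_k$. Writing $g_k(\rx_1,\ldots,\rx_k) := \mathbb{E}[f(\rx_1,\ldots,\rx_k,\xi_{k+1},\ldots,\xi_N)]$, which exists because the $\xi_j$ are independent, one has $V_k = g_k(\xi_1,\ldots,\xi_k) - \mathbb{E}_{\xi_k}[g_k(\xi_1,\ldots,\xi_{k-1},\xi_k)]$. The bounded-differences hypothesis, applied inside the expectation defining $g_k$, shows that the map $g_k(\xi_1,\ldots,\xi_{k-1},\cdot)$ has oscillation at most $c_k$ for any fixed $\xi_1,\ldots,\xi_{k-1}$. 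Hence, conditionally on $\mathcal{F}_{k-1}$, the random variable $V_k$ is zero-mean and supported on an interval of length at most $c_k$.

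Next, I would apply Hoeffding's lemma conditionally to each $V_k$: any zero-mean random variable supported on an interval of length $L$ satisfies $\mathbb{E}[e^{\lambda X}] \leq \exp(\lambda^2 L^2/8)$. Iterating the tower property across $k=N,N-1,\ldots,1$ then yields $\mathbb{E}[e^{\lambda Z}] \leq \exp\bigl(\lambda^2 \sum_{k=1}^N c_k^2/8\bigr)$. A standard Chernoff step, optimizing in $\lambda>0$, produces the one-sided tail $\Pr(Z\geq \epsilon) \leq \exp\bigl(-2\epsilon^2/\sum_k c_k^2\bigr)$. Applying the identical argument to $-f$, which satisfies the same bounded-differences hypothesis with the same constants $c_k$, and combining via a union bound yields the two-sided bound on $|Z|$ as stated.

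The main obstacle is the verification of the conditional range of $V_k$: the argument requires some care to legitimately exchange the supremum in the bounded-differences hypothesis with the expectation over $\xi_{k+1},\ldots,\xi_N$ used to define $g_k$. Once this is in place, Hoeffding's lemma and the Chernoff optimization are entirely routine. A minor bookkeeping point is that the two-sided bound strictly gives a prefactor of $2$ in front of the exponential, which is typically absorbed into the statement or simply noted; the rest of the proof is unaffected.
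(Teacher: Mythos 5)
Your proof is the standard and correct derivation of McDiarmid's inequality (Doob martingale decomposition, conditional Hoeffding's lemma, Chernoff optimization, and a union bound for the two-sided form); the paper itself offers no proof of this lemma, importing it verbatim from the cited reference, so there is nothing to compare against. Your conditional-range argument is sound: by independence, $\mathbb{E}[f\mid\mathcal{F}_k]$ is literally the integral of $f$ over $\xi_{k+1},\ldots,\xi_N$ with the first $k$ coordinates frozen, so the bounded-differences bound passes through the expectation pointwise and no interchange of supremum and integral is actually needed. Your closing remark is also correct and worth noting: the two-sided tail genuinely carries a prefactor of $2$, so the lemma as stated in the paper is off by that constant (it is stated in the one-sided form but applied to an absolute deviation); this is harmless for the downstream union bounds but is a real discrepancy in the quoted statement.
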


\begin{lem}\label{lem:cv-QGAN}\label{lem:cov-T2}
Following the notations in Theorem \ref{thm:Gene-QGAN},  define $\mathcal{G}=\{k(G_{\bm{\theta}}(\cdot), G_{\bm{\theta}}(\cdot))|\bm{\theta}\in \bm{\Theta}\}$ and $\mathcal{G}_+=\{k(G_{\bm{\theta}}(\bm{z}), G_{\bm{\theta}}(\cdot))|\bm{\theta}\in \bm{\Theta}, \bm{z}\in \mathcal{Z}\}$. Given the set $\mathcal{S}=\{\bm{z}^{(i)}\}_{i=1}^n$, we have
\begin{eqnarray}
	&& \EE\large(\sup_{\btheta\in \bm{\Theta}}\large| \EE_{\bm{z},\bm{z}'}(k(G_{\btheta}(\bm{z}),G_{\btheta}(\bm{z}')))- \frac{1}{n(n-1)}\sum_{i\neq i'}k(G_{\btheta}(\bm{z}^{(i)}), G_{\btheta}(\bm{z}^{(i')}))\large|\large)\nonumber\\
	 \leq && \frac{8}{n-1} + \frac{24\sqrt{d^{2k}(N_{ge}+N_{gt})}}{n-1}\left(1+  N\ln (441d C_3^2(n-1)N_{ge}N_{gt}) \right).
\end{eqnarray}
\end{lem}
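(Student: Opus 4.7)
My plan is to (i) reduce the second-order $U$-process $\sup_{\btheta}|\cdot|$ to an i.i.d.\ empirical process via the standard Hoeffding pairing trick, (ii) apply symmetrization and Dudley's entropy integral to pass to a covering number of the function class $\mathcal G_+$, and (iii) bound that covering number by propagating Lipschitz constants through the quantum circuit.

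For step (i), I would invoke the identity
\[
\frac{1}{n(n-1)}\sum_{i\neq i'}k(G_\btheta(\bm z^{(i)}),G_\btheta(\bm z^{(i')})) \,=\, \frac{1}{n!}\sum_{\pi\in S_n}\frac{1}{\lfloor n/2\rfloor}\sum_{j=1}^{\lfloor n/2\rfloor}k(G_\btheta(\bm z^{(\pi(2j-1))}),G_\btheta(\bm z^{(\pi(2j))})),
\]
valid for the symmetric kernel $k$, then take $\sup_\btheta$ and expectations and apply Jensen's inequality to the convex functional $\sup$; this bounds the $U$-process supremum by the supremum of an i.i.d.\ empirical process on $m \ge (n-1)/2$ independent pair-samples $(\bm z,\bm z')$. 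For step (ii), the standard symmetrization inequality then passes to $2R_m(\mathcal G)$, and Dudley's chaining gives $R_m(\mathcal G)\le C m^{-1/2}\int_0^D \sqrt{\log N(\epsilon,\mathcal G,\|\cdot\|_\infty)}\,d\epsilon$ with diameter $D\le C_2=\sup_{\rx}k(\rx,\rx)$; the additive $8/(n-1)$ term in the conclusion will arise from truncating the Dudley integral at a small scale of order $1/(n-1)$.

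The core step is (iii). Writing each parameterized gate of $\hat U(\btheta)$ and of the encoding $\hat U(\bm z)$ as $e^{-i\alpha_\ell H_\ell}$ with $H_\ell$ supported on at most $k$ qudits, so $\|H_\ell\|_{op}\le d^{k/2}$, and using a telescoping estimate across the $N_{gt}$ trainable and $N_{ge}$ encoding gates, I would show
\[
\|G_\btheta(\bm z)-G_{\btheta'}(\bm z')\|_1 \,\le\, c\, d^k (N_{gt}+N_{ge})\,\|(\btheta,\bm z)-(\btheta',\bm z')\|_\infty.
\]
Combined with the $C_3$-Lipschitz kernel, an $\epsilon/(cC_3 d^k(N_{gt}+N_{ge}))$-net of the joint parameter box of effective dimension $N_{gt}+N_{ge}$ induces an $\epsilon$-cover of $\mathcal G_+$ in sup-norm; accounting for the conversion between metrics over the $N$-qudit ambient space produces $\log N(\epsilon,\mathcal G,\|\cdot\|_\infty) \le (N_{gt}+N_{ge})\bigl[\,1 + N\ln(441 d C_3^2(n-1)N_{gt}N_{ge}/\epsilon)\,\bigr]$. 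Substituting into the Dudley integral with the above cutoff and using that $\sqrt{a+b}\le\sqrt a+\sqrt b$ to split off the constant and log contributions yields exactly the stated inequality.

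The main obstacle is step (iii): carefully propagating the $d^k$ factor from each local gate's generator, together with the $N$ factor inside the logarithm from the ambient $d^N$-dimensional probability simplex, without inadvertently picking up a catastrophic $d^N$ multiplicatively. The key observation enabling tightness is that $G_\btheta(\bm z)$ is an $\ell_1$-normalized probability vector, so the Lipschitz estimates should be performed in trace norm for density operators and $\ell_1$ for their diagonals---both contractive under partial trace and under the commutator with a $k$-local generator---and only lifted to $\ell_\infty$ at the final step, which places the $N$-dependence inside a logarithm rather than outside it.
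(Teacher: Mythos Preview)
Your high-level structure---reduce the $U$-process via Hoeffding pairing, symmetrize, apply Dudley's entropy integral, then bound the covering number through circuit Lipschitz estimates---matches the paper's route (steps (i)--(ii) are quoted from \cite{dziugaite2015training} as Lemmas~\ref{lem:rad-cov-dudley} and~\ref{lem:T2-T3-rademacher}). The gap is in step (iii).

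You model each gate as $e^{-i\alpha_\ell H_\ell}$ with a \emph{single} scalar parameter, so your joint parameter box has real dimension $N_{gt}+N_{ge}$ and your metric-entropy estimate is $\log\mathcal N(\epsilon)\lesssim (N_{gt}+N_{ge})\log(\,\cdot\,/\epsilon)$; Dudley then delivers a prefactor $\sqrt{N_{gt}+N_{ge}}/(n-1)$. But the lemma is stated under the paper's convention that each trainable or encoding gate is an \emph{arbitrary} element of $\mathcal U(d^k)$ (Eq.~(\ref{eqn:ansatz})), not a one-parameter subgroup. Accordingly the paper does not cover a parameter box at all: it covers the operator group $\mathcal H_{\mathrm{circ}}=\{\hat U(\btheta)\Pi_j\hat U(\btheta)^\dagger\}$ and the input-state family $\mathcal B=\{\rho_{\bm z}\}$ directly in operator norm, invoking $\mathcal N(\mathcal H_{\mathrm{circ}},\epsilon,\|\cdot\|)\le (7N_{gt}/\epsilon)^{d^{2k}N_{gt}}$ from \cite{du2021efficient} (Lemma~\ref{lem:cov-cirt-ansa}) and the analogous Lemma~\ref{append:lem_cover_input_state} for $\mathcal B$. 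The exponent $d^{2k}=\dim_{\mathbb R}\mathcal U(d^k)$ is exactly what produces the factor $\sqrt{d^{2k}(N_{gt}+N_{ge})}=d^k\sqrt{N_{gt}+N_{ge}}$ in the stated bound; your smaller exponent cannot recover it, so the assertion that your substitution ``yields exactly the stated inequality'' is false. Two smaller points: the claim $\|H_\ell\|_{\mathrm{op}}\le d^{k/2}$ is unjustified absent a normalization convention; and the multiplicative $d^N$ you worry about is handled in the paper simply by letting it enter the cover \emph{radius} (through the $d^N$ coordinates of $G_\btheta(\bm z)\in\mathbb R^{d^N}$), where it becomes the factor $N$ inside the logarithm---no trace-norm contractivity argument is needed.
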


\begin{lem}\label{lem:cov-T3}
Following the notations in Theorem \ref{thm:Gene-QGAN},  define $\mathcal{W}=\{k(G_{\bm{\theta}}(\cdot), \cdot)|\bm{\theta}\in \bm{\Theta}\}$ and $\mathcal{W}_+=\{k(G_{\bm{\theta}}(\cdot), \ry)|\bm{\theta}\in \bm{\Theta}, \bm{y}\in \mathcal{Y}\}$. Given the set $\mathcal{S}=\{\bm{z}^{(i)}\}_{i=1}^n$ and the set $\{\ry^{(j)}\}_{j=1}^m$, we have
\begin{eqnarray}
	&& \EE\left(\sup_{\btheta\in \bm{\Theta}}\Big|\EE_{\bm{z},\ry}(k(G_{\btheta}(\bm{z}),\ry))- \frac{1}{mn}\sum_{i\in[n],j\in[m]}k(G_{\btheta}(\bm{z}^{(i)}), \ry^{(j)})\Big| \right) \nonumber\\
	\leq &&  \frac{8}{n} + \frac{24\sqrt{d^{2k}(N_{ge}+N_{gt})}}{n}\left(1+  N\ln (441dC_3^2nN_{ge}N_{gt}) \right).
\end{eqnarray}
\end{lem}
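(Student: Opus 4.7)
The plan is to follow the same uniform-convergence template as in the proof of Lemma \ref{lem:cov-T2}, but adapted to the off-diagonal term in which each summand depends on both a generated example $G_{\btheta}(\bm{z}^{(i)})$ and a real example $\ry^{(j)}$. First I would condition on $\{\ry^{(j)}\}_{j=1}^m$ and rewrite the supremum as a supremum over the index class $\mathcal{W}_+=\{k(G_{\btheta}(\cdot),\ry):\btheta\in\bm{\Theta},\ry\in\mathcal{Y}\}$: since the $\ry^{(j)}$ are independent of the $\bm{z}^{(i)}$, the inner empirical average decouples into an empirical process indexed by $\mathcal{W}_+$ evaluated on the $\bm{z}$-sample. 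Applying the standard symmetrization inequality to this empirical process upper bounds the quantity of interest by $2\,\EE_{\bm{z},\sigma}\!\sup_{h\in\mathcal{W}_+}\!\big|\tfrac{1}{n}\sum_{i}\sigma_i h(\bm{z}^{(i)})\big|$ with i.i.d.\ Rademacher variables $\sigma_i$.

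Next I would control this Rademacher complexity using a Dudley-type chaining argument at the level of covering numbers of $\mathcal{W}_+$ in the uniform (or empirical $L_2$) metric. The key tool is the Lipschitz decomposition $|k(G_{\btheta}(\bm{z}),\ry)-k(G_{\btheta'}(\bm{z}),\ry')|\le C_3(\vabs{G_{\btheta}(\bm{z})-G_{\btheta'}(\bm{z})}+\vabs{\ry-\ry'})$, which reduces covering $\mathcal{W}_+$ to covering $\bm{\Theta}$ (for the generator) and $\mathcal{Y}$ (for the real datum). For the generator piece, I would bound the Lipschitz constant of $\btheta\mapsto G_{\btheta}(\bm{z})$ by differentiating the circuit $\hat U(\btheta)$ gate-by-gate: each trainable gate acts on $d^k$ dimensions and has operator-norm derivative bounded by an $O(1)$ multiple of $d^k$ (the Hermitian generator is bounded on its support), and the telescoping bound over $N_{gt}$ gates together with the unitary-invariance of the spectral norm gives a Lipschitz constant of order $d^k\sqrt{N_{gt}}$ in Frobenius norm (and analogously for the encoding unitary in $\bm{z}$, contributing $N_{ge}$). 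This yields a covering-number estimate of the form $\log N(\epsilon,\mathcal{W}_+)\lesssim (N_{gt}+N_{ge})\,N\ln(d\,C_3\,N_{gt}N_{ge}/\epsilon)$, where the extra factor $N$ comes from the qudit count entering through the dimension of the parameter space (roughly $d^{2k}$ real parameters per gate block).

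Plugging this covering number into the Dudley entropy integral, with the radius $\epsilon$ tuned to scale like $1/n$ (so as to match the first $8/n$ term coming from the ``trivial'' endpoint contribution of the integral), produces a bound of the claimed form
\[
\frac{8}{n}+\frac{24\sqrt{d^{2k}(N_{ge}+N_{gt})}}{n}\Bigl(1+N\ln(441\,d\,C_3^2\,n\,N_{ge}N_{gt})\Bigr).
\]
The constants $24$, $441$, and the $1+N\ln(\cdot)$ form are the standard outcomes of bounding $\int_0^{D}\sqrt{\log N(\epsilon,\mathcal{W}_+)}\,d\epsilon$ with $D=\sup_{h\in\mathcal{W}_+}\|h\|_\infty\le C_2$ and optimizing the cutoff; they will appear automatically from the chaining calculation once the covering estimate is in place.

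The hardest step will be the covering-number bound for $\mathcal{W}_+$, and in particular obtaining the sharp $d^{2k}(N_{ge}+N_{gt})$ effective dimension rather than the naive $d^{2N}N_g$ one. This requires carefully exploiting that the $l$-th gate acts on only $k$ qudits, so that its parameterization lives in a space of dimension at most $d^{2k}$, and that the telescoping Lipschitz bound propagates through the circuit without accumulating extra factors of $d^{N-k}$. A secondary subtlety is to ensure that the $\ry$-component of $\mathcal{W}_+$ contributes only a benign additive term (absorbed into the same $N\ln(\cdot)$ factor), which follows from the fact that $\mathcal{Y}\subseteq\mathbb{R}^{d^N}$ has bounded diameter and contributes $N\ln(1/\epsilon)$ to the log-covering number. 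The remaining steps (symmetrization, Dudley integration, constant tracking) are routine once the covering estimate is in place, and the argument parallels the proof of Lemma \ref{lem:cov-T2} with the single modification that the $\ry^{(j)}$ play the role of an extra, unparameterized index.
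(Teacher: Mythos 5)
Your overall template---symmetrization to a Rademacher complexity of $\mathcal{W}_+$, Dudley's entropy integral, and a covering-number bound obtained from the $C_3$-Lipschitzness of the kernel together with the gate-level structure of the circuit---is exactly the route the paper takes for the companion term $T2$ (Lemma \ref{lem:cov-T2}), and the paper's own proof of Lemma \ref{lem:cov-T3} is nothing more than the remark that the argument is analogous. So the approach is the right one.

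The genuine gap is in your treatment of the $\ry$-component of the index set $\mathcal{W}_+$. You assert that because $\mathcal{Y}\subseteq\mathbb{R}^{d^N}$ has bounded diameter it ``contributes $N\ln(1/\epsilon)$ to the log-covering number.'' A bounded subset of $\mathbb{R}^{d^N}$ with no further structure has $\epsilon$-log-covering number of order $d^N\ln(1/\epsilon)$, not $N\ln(1/\epsilon)$; fed into the Dudley integral this produces a $\sqrt{d^N}$ factor that destroys the claimed $\sqrt{d^{2k}(N_{ge}+N_{gt})}$ scaling. To recover the stated bound you must either (i) exploit that $\mathcal{Y}$ is itself the image of an $N_{ge}$-gate encoding family, so that the covering bound of Lemma \ref{append:lem_cover_input_state} applies with effective dimension $d^{2k}N_{ge}$ (this is precisely how the $N_{ge}$ term arises in the $T2$ proof, where the extra index is $\bm{z}\in\mathcal{Z}$ entering through $\rho_{\bm{z}}$), or (ii) avoid covering $\mathcal{Y}$ altogether by conditioning on $\{\ry^{(j)}\}_{j=1}^m$ and splitting $T3$ into two one-sample empirical processes, each indexed by $\btheta$ alone. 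As written, your justification does not yield the lemma. Two secondary points: the factor $N$ inside the logarithm does not come from ``the dimension of the parameter space'' but from converting an operator-norm cover of $\{\hat U(\btheta)\Pi_j\hat U(\btheta)^{\dagger}\}$ into an $\ell_2$ cover of the $d^N$-dimensional vector $G_{\btheta}(\bm{z})$, which costs a multiplicative $d^N$ inside the log; and the symmetrization step for the cross term needs the two-sample version of the Rademacher bound (the paper's Lemma \ref{lem:T2-T3-rademacher}), since $\frac{1}{m}\sum_j k(G_{\btheta}(\cdot),\ry^{(j)})$ is an average over $\mathcal{W}_+$ rather than an element of it.
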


\medskip
We are now ready to prove Theorem \ref{thm:Gene-QGAN}.
\begin{proof}[Proof of Theorem \ref{thm:Gene-QGAN}]

Let $\mathcal{E}(\btheta)=\MMD_U^2(\PP_{\btheta}^n||\QQ^m)$ and $\mathcal{T}(\btheta)=\MMD^2(\PP_{\btheta}||\QQ)$. Note that the generalization error equals to
\begin{equation}
	\mathfrak{R}^C= \mathcal{T}(\hat{\btheta}^{(n,m)})-\mathcal{T}( \btheta^*).
\end{equation}
In the remainder of the proof, when no confusion occurs, we abbreviate $\hat{\btheta}^{(n,m)}$ as $\hat{\btheta}$ for clearness. The above equation is upper bounded by
\begin{equation}
	\mathfrak{R}^C = \mathcal{E}(\hat{\btheta})-\mathcal{E}( \hat{\btheta}) +\mathcal{T}(\hat{\btheta})-\mathcal{T}( \btheta^*) \leq \mathcal{E}(\btheta^*)-\mathcal{E}( \hat{\btheta}) +\mathcal{T}(\hat{\btheta})-\mathcal{T}( \btheta^*)\leq |\mathcal{T}( \hat{\btheta}) -\mathcal{E}( \hat{\btheta})| + |\mathcal{T}(\btheta^*) -\mathcal{E}(\btheta^*)|,
\end{equation} 
where the first inequality employs the definition of $\hat{\btheta}$ with $\mathcal{E}(\btheta^*)\geq \mathcal{E}(\hat{\btheta})$ and the second inequality uses the property of absolute value function. In the following, we derive the probability for $\sup_{\btheta \in \Theta} |\mathcal{T}(\btheta) -\mathcal{E}(\btheta)| < \epsilon$,  which in turn can achieve the upper bound of $\mathfrak{R}^C$. 

\medskip 
According to the explicit form of the MMD loss, $\sup_{\btheta\in \bm{\Theta}}|\mathcal{E}(\btheta)  - \mathcal{T}(\btheta)|$ satisfies
\begin{eqnarray}\label{eqn:proof_thm3_0}
	&& \sup_{\btheta\in \bm{\Theta}}\left|\MMD_U^2(\PP_{\btheta}^n||\QQ^m) - \MMD^2(\PP_{\btheta}||\QQ)\right| \nonumber\\
	= && \sup_{\btheta\in \bm{\Theta}}\Big|\EE_{\bm{z},\bm{z}'}(k(G_{\btheta}(\bm{z}),G_{\btheta}(\bm{z}'))) - 2\EE_{\bm{z}, y}(k(G_{\btheta}(\bm{z}),\ry)) + \EE_{\ry,\ry'}(k(\ry,\ry')) - \frac{1}{n(n-1)}\sum_{i\neq i'}k(G_{\btheta}(\bm{z}^{(i)}), G_{\btheta}(\bm{z}^{(i')}))\nonumber\\
	&& + \frac{2}{mn}\sum_{i\in[n],j\in[m]}k(G_{\btheta}(\bm{z}^{(i)}), \ry^{(j)}) - \frac{1}{m(m-1)}\sum_{j\neq j'}k(\ry^{(j)}, \ry^{(j')})\Big|\nonumber\\
	\leq && \underbrace{\sup_{\btheta\in \bm{\Theta}}\Big|\EE_{\ry,\ry'}(k(\ry,\ry')) - \frac{1}{m(m-1)}\sum_{j\neq j'}k(\ry^{(j)}, \ry^{(j')})\Big|}_{T1} + \underbrace{\sup_{\btheta\in \bm{\Theta}}\Big| \EE_{\bm{z},\bm{z}'}(k(G_{\btheta}(\bm{z}),G_{\btheta}(\bm{z}')))- \frac{1}{n(n-1)}\sum_{i\neq i'}k(G_{\btheta}(\bm{z}^{(i)}), G_{\btheta}(\bm{z}^{(i')}))\Big|}_{T2} \nonumber\\
	&& + 2\underbrace{\sup_{\btheta\in \bm{\Theta}}\Big|\EE_{z, y}(k(G_{\btheta}(\bm{z}),\ry))- \frac{1}{mn}\sum_{i\in[n],j\in[m]}k(G_{\btheta}(\bm{z}^{(i)}), \ry^{(j)})\Big|}_{T3}
\end{eqnarray} 
where the inequality comes from the Jensen inequality.

\medskip
We next separately derive the upper bounds of the terms $T1$, $T2$, and $T3$ in Eq.~(\ref{eqn:proof_thm3_0}).
	
\medskip	
\noindent\textit{\underline{Upper bound of $T1$.}} The upper bound of $T1$ only depends on the examples sampled from the target distribution $\QQ$, which is independent of $\btheta\in \bm{\Theta}$. With this regard, T1 can be taken out of the supremum and we apply the concentration inequality in  Lemma \ref{lem:mc-ineql} to derive the upper bound of $|\EE_{\ry,\ry'}(k(\ry,\ry')) - \frac{1}{m(m-1)}\sum_{j\neq j'}k(\ry^{(j)}, \ry^{(j')})|$. Recall the precondition of employing Lemma \ref{lem:mc-ineql} is finding the upper bound on $f(\cdot)$. Let the function $f(\cdot)$ be $\frac{1}{m(m-1)}\sum_{j\neq j'}k(\ry^{(j)}, \ry^{(j')})$. For each $\ell\in\{1,...,m\}$, the desired upper bound yields 
\begin{eqnarray}
	&& \left| - \frac{1}{m(m-1)}(\sum_{j\neq j', j\neq \ell}k(\ry^{(j)}, \ry^{(j')}) +  \sum_{j'\neq \ell}k(\ry^{(\ell)}, \ry^{(j')}))+ \frac{1}{m(m-1)}(\sum_{j\neq j', j\neq \ell}k(\ry^{(j)}, \ry^{(j')}) +  \sum_{j'\neq \ell}k(\tilde{\ry}^{(\ell)}, \ry^{(j')}))\right|\nonumber\\
	= && \left| \frac{1}{m(m-1)}\sum_{j'\neq \ell}\left(k(\ry^{(\ell)}, \ry^{(j')})) - k(\tilde{\ry}^{(\ell)}, \ry^{(j')}))\right) \right| \nonumber\\
	\leq && \frac{2C_2}{m},
\end{eqnarray} 
where the inequality leverages the assumption that the kernel $k(\cdot, \cdot)$ is upper bounded by $C_2$.

Given this upper bound, we obtain
\begin{eqnarray}\label{eqn:thm3-T1-up}
	&& \Pr(T1\geq \epsilon) \nonumber\\
= && \Pr\left(\left|\EE_{\ry,\ry'}(k(\ry,\ry')) - \frac{1}{m(m-1)}\sum_{j\neq j'}k(\ry^{(j)}, \ry^{(j')})\right|\geq \epsilon \right) \nonumber\\
\leq && \exp\left(-\frac{\epsilon^2}{8C_2^2}m \right)=\delta_{T1},
\end{eqnarray} 
 where the inequality exploits the results in  Lemma \ref{lem:mc-ineql}. 

\noindent\textit{\underline{Upper bound of $T2$.}} 
We next use the concentration inequality to quantify the upper bound of $T2$. The derivation is similar to that of $T1$. In particular, supported by the results of Lemma \ref{lem:mc-ineql}, we have
\begin{equation}\label{eqn:thm2-5}
	\Pr(|T2-\EE(T2) |\geq \epsilon )\leq \exp\left(-\frac{\epsilon^2}{8C_2^2}n \right)=\delta_{T2}.
\end{equation}
Suppose that $\EE(T2)\leq \epsilon_1$, an immediate observation is that
\begin{equation}\label{eqn:thm3-T2-up}
	\Pr(T2 \geq \epsilon_1 + \epsilon )\leq \exp\left(-\frac{\epsilon^2}{8C_2^2}n \right)=\delta_{T2}.
\end{equation}
In other words, the derivation of the upper bound of T2 amounts to analyzing the upper bound $\epsilon_1$. 

\noindent\textit{\underline{Upper bound of $T3$.}} Following the same routine with the derivation of the upper bound of T2, we obtain
\begin{equation}
	\Pr(|T3 -\EE(T3)| \geq  \epsilon )\leq \exp\left(-\frac{\epsilon^2}{8C_2^2}\frac{nm}{n+m} \right)=\delta_{T3}.
\end{equation}
Suppose that $\EE(T3)\leq \epsilon_2$. The above result hints that
\begin{equation}\label{eqn:thm3-T3-up}
	\Pr(T3 \geq \epsilon_2 + \epsilon )\leq \exp\left(-\frac{\epsilon^2}{8C_2^2}\frac{nm}{n+m} \right)=\delta_{T3}.
\end{equation}

Summing up Eqs.~(\ref{eqn:thm3-T1-up}), (\ref{eqn:thm3-T2-up}), and (\ref{eqn:thm3-T3-up}), the union bound gives 
\begin{eqnarray}
	&& \Pr\left(\sup_{\btheta\in \bm{\Theta}}|\mathcal{E}(\btheta)  - \mathcal{T}(\btheta)|\geq \epsilon_1 + 2\epsilon_2 + 4\epsilon \right)  
	\leq   \delta_{T1} + \delta_{T2}  +\delta_{T3}   \nonumber\\
\Rightarrow && \Pr\left(\sup_{\btheta\in \bm{\Theta}}|\mathcal{E}(\btheta)  - \mathcal{T}(\btheta)|\geq \epsilon_1 + 2\epsilon_2 + 4\epsilon \right)  
	\leq   3\delta_{T3}  \nonumber\\	
\Rightarrow && 	\Pr\left(|\mathcal{E}(\hat{\btheta})  - \mathcal{T}(\hat{\btheta})|\geq \epsilon_1 + 2\epsilon_2 + 4\epsilon \right)  
	\leq   3\delta_{T3}. 
\end{eqnarray}
This yields that with probability at least $1- 3 \delta_{T3}$, 
\begin{equation}
	2 (\epsilon_1 + 2\epsilon_2 + 4\epsilon) \geq |\mathcal{E}(\hat{\btheta})  - \mathcal{T}(\hat{\btheta})| + |\mathcal{E}(\btheta^*)  - \mathcal{T}( \btheta^*)| \geq |\mathcal{E}(\hat{\btheta})  - \mathcal{T}(\hat{\btheta}) - \mathcal{E}(\btheta^*)  +  \mathcal{T}(\btheta^*)| \geq |\mathcal{T}(\hat{\btheta})  -\mathcal{T}(\btheta^*)|=\mathfrak{R}^C. 
\end{equation}
According to the explicit forms of $\epsilon_1$ and $\epsilon_2$ achieved in Lemmas \ref{lem:cov-T2} and \ref{lem:cov-T3}, with probability $1-3\delta_{T3}$, the generalization error of QGANs is upper bounded by
\begin{eqnarray}
	&&\mathfrak{R}^C \nonumber\\
	\leq &&  8\epsilon + 2 \left( \frac{8}{n-1} + \frac{24\sqrt{d^{2k}(N_{ge}+N_{gt})}}{n-1}\left(1+  N\ln (1764C_3^2(n-1)N_{ge}N_{gt}) \right) \right)  \nonumber\\
	&& + 4\left(\frac{8}{n} + \frac{24\sqrt{d^{2k}(N_{ge}+N_{gt})}}{n}\left(1+  N\ln (1764C_3^2nN_{ge}N_{gt}) \right) \right) \nonumber\\
	\leq && 8\sqrt{ \frac{8C_2^2(n+m)}{nm}\ln(\frac{1}{3\delta_{T3}}) }  + 6 \left(\frac{8}{n-1} + \frac{24 \sqrt{d^{2k}(N_{gt}+N_{ge})}}{n-1} \left(1+  N\ln (441dC_3^2nN_{ge}N_{gt}) \right)\right).
\end{eqnarray}
where the last inequality uses the relation between $\delta_{T3}$ and $\epsilon$ in Eq.~(\ref{eqn:thm3-T3-up}) with $\epsilon=\sqrt{8C_2^2(n+m)\ln(1/\delta_{T3})/(nm)}$, $1/n<1/(n-1)$, and $n-1<n$.

\end{proof}
			
\subsection{Proof of Lemma \ref{lem:cov-T2}}\label{appendix:lemcov-T2}

Recall Lemma \ref{lem:cov-T2} aims to derive the upper bound of $\EE(T2)$ in Eq.~(\ref{eqn:proof_thm3_0}). In Ref.~\cite{dziugaite2015training}, the authors utilize a statistical measure named Rademacher complexity to quantify these two terms. Different from the classical counterpart, here we adopt an another statistical measure, i.e., covering number, to derive the upper bound of $\EE(T2)$. This measure allows us to identify how $\EE(T2)$  scales with the qudit count $N$ and the architecture of the employed Ansatz such as the trainable parameters $N_{gt}$ and the types of the quantum gates. For self-consistency, we provide the formal definition of covering number and Rademacher as follows.
\begin{definition}[Covering number, \cite{MohriRostamizadehTalwalkar18}]\label{def:cov-num}
	The covering number $\mathcal{N}(\mathcal{U}, \epsilon, \|\cdot \|)$ denotes the least cardinality of any subset $\mathcal{V} \subset \mathcal{U}$ that covers $\mathcal{U}$ at scale $\epsilon$ with a norm $ \|\cdot \|$, i.e.,
	\begin{equation}
		\sup_{A\in \mathcal{U}} \min_{B\in \mathcal{V}} \|A - B \|\leq \epsilon.
	\end{equation}
\end{definition} 
\begin{definition}[Rademacher, \cite{MohriRostamizadehTalwalkar18}]\label{def:radem}
	Let $\mu$ be a probability measure on $\mathcal{X}$, and let $\mathcal{F}$ be a class of uniformly bounded functions on $\mathcal{X}$. Then the Rademacher complexity of $\mathcal{F}$ is 
	\begin{equation}
		\mathfrak{R}_{n}(\mathcal{F})=\mathbb{E}_{\mu}\mathbb{E}_{\sigma_1,...,\sigma_n}\left(\frac{1}{\sqrt{n}}\sup_{f\in\mathcal{F}}\left|\sum_{i=1}^n \sigma_if(\rx^{(i)})\right| \right),
	\end{equation}
	where $\bm{\sigma}=(\sigma_1,...,\sigma_n)$ is a sequence of independent Rademacher variables taking values in $\{-1, 1\}$ and each with probability $1/2$, and $\rx_1,...,\rx_n \in\mathcal{X}$ are independent, $\mu$-distributed random variables.
\end{definition}
Intuitively, the covering number concerns the minimum number of spherical balls with radius $\varepsilon$ that occupies the whole space; the Rademacher complexity measures the ability of functions from $\mathcal{F}$ to fit random noise. The relation between Rademacher complexity and covering number is established by the following Dudley entropy integral bound.
\begin{lem}[Adapted from \cite{dziugaite2015training, dudley1967sizes}]\label{lem:rad-cov-dudley} 
Let $\mathcal{F}=\{f: \mathcal{X}\times \mathcal{X} \rightarrow \mathbb{R} \}$ and $\mathcal{F}_+=\{h=f(\rx,\cdot):f\in \mathcal{F}, \rx\in\mathcal{X} \}$ and $\mathcal{F}_+\subset B(L_{\infty}(\mathcal{X}))$. Given the set $\mathcal{S}=\{\rx^{(1)},...,\rx^{(n)}\}\in\mathcal{X}$, denote the   Rademacher complexity of $\mathcal{F}_+$ as $\mathfrak{R}_n(\mathcal{F}_+)$, it satisfies 
\begin{equation}
	\mathfrak{R}_n(\mathcal{F}_+) \leq \inf_{\alpha>0}\left(4\alpha + \frac{12}{\sqrt{n}}\int_{\alpha}^{1}\sqrt{\ln(\mathcal{N}((\mathcal{F}_+)_{|\mathcal{S}}, \epsilon, \|\cdot \|_2)}  d\epsilon\right), 
\end{equation}
where $(\mathcal{F}_+)_{|\mathcal{S}}=\{[f(\bm{x}, \bm{x}^{(i)})]_{i=1:n}:f\in \mathcal{F}, \rx\in\mathcal{X}\}$ denotes the set of vectors formed by the hypothesis with $\mathcal{S}$.    
\end{lem}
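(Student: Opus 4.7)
\textbf{Proof proposal for Lemma \ref{lem:rad-cov-dudley}.} The plan is to prove the stated Dudley entropy integral bound by the standard dyadic chaining argument, carried out conditionally on the sample $\mathcal{S}$ and then combined with the outer expectation in Definition~\ref{def:radem}. First I would fix $\mathcal{S}=\{\rx^{(1)},\dots,\rx^{(n)}\}$ and work entirely with the empirical object $\hat{\mathfrak{R}}_n(\mathcal{F}_+\mid \mathcal{S}) := \tfrac{1}{\sqrt{n}}\mathbb{E}_{\bm{\sigma}}\sup_{h\in\mathcal{F}_+}|\sum_{i=1}^n\sigma_i h(\rx^{(i)})|$; since the Dudley bound I derive will be uniform in $\mathcal{S}$, taking $\mathbb{E}_\mu$ at the end yields the claim. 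The assumption $\mathcal{F}_+\subset B(L_\infty(\mathcal{X}))$ gives $\|h(\rx^{(\cdot)})\|_\infty\le 1$, hence the sample-restricted set $(\mathcal{F}_+)_{|\mathcal{S}}\subset\mathbb{R}^n$ has $\|\cdot\|_2$-diameter at most $2\sqrt{n}$; after normalization this justifies upper-limiting the entropy integral at $1$ (the coordinate rescaling is absorbed into the $1/\sqrt{n}$ in Definition~\ref{def:radem}).

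Next, I would set up the chain. Choose dyadic scales $\epsilon_k = 2^{-k}$ for $k=0,1,2,\dots$, and for each $k$ pick a minimal $\epsilon_k$-cover $\mathcal{V}_k\subset\mathbb{R}^n$ of $(\mathcal{F}_+)_{|\mathcal{S}}$ in $\|\cdot\|_2$ with cardinality $\mathcal{N}_k:=\mathcal{N}((\mathcal{F}_+)_{|\mathcal{S}},\epsilon_k,\|\cdot\|_2)$. For each $h\in\mathcal{F}_+$, let $h^{(k)}\in\mathcal{V}_k$ be a nearest element to the sample vector of $h$, so $\|h-h^{(k)}\|_2\le\epsilon_k$. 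Fix an integer $K$ with $\epsilon_K\le \alpha<\epsilon_{K-1}$ (this is how the free parameter $\alpha$ enters), and write the telescoping identity
\begin{equation}
\sum_{i=1}^n \sigma_i h(\rx^{(i)}) \;=\; \sum_i \sigma_i h^{(K)}_i \;+\; \sum_{k>K}\sum_i \sigma_i\bigl(h^{(k)}-h^{(k-1)}\bigr)_i.
\end{equation}
By Cauchy--Schwarz the tail past $K$ contributes at most $\sqrt{n}\,\epsilon_K\le\sqrt{n}\,\alpha$ deterministically, and after re-organization the head term decomposes similarly into a sum of increment terms $(h^{(k)}-h^{(k-1)})$ for $1\le k\le K$, each satisfying $\|h^{(k)}-h^{(k-1)}\|_2\le\epsilon_k+\epsilon_{k-1}=3\epsilon_k$ by the triangle inequality.

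Then I would bound each increment's Rademacher supremum via Massart's finite-class lemma: for a finite set $A\subset\mathbb{R}^n$ with $\max_{v\in A}\|v\|_2\le R$, $\mathbb{E}_{\bm\sigma}\sup_{v\in A}|\sum \sigma_i v_i|\le R\sqrt{2\ln(2|A|)}$. Applied to the set of pairs $(h^{(k)},h^{(k-1)})$, which has cardinality at most $\mathcal{N}_k\mathcal{N}_{k-1}\le \mathcal{N}_k^2$, this yields a bound of $3\epsilon_k\sqrt{6\ln\mathcal{N}_k}$ per level. Summing over $k=1,\dots,K$ and using the standard comparison of the dyadic sum with a Riemann integral,
\begin{equation}
\sum_{k\ge 1}\epsilon_k\sqrt{\ln\mathcal{N}_k}\;\le\; 2\int_0^{1}\sqrt{\ln\mathcal{N}((\mathcal{F}_+)_{|\mathcal{S}},\epsilon,\|\cdot\|_2)}\,d\epsilon,
\end{equation}
and truncating the dyadic sum at the level corresponding to $\alpha$ (so the integral starts at $\alpha$), I obtain, after dividing by $\sqrt{n}$ and consolidating constants, the desired bound $\hat{\mathfrak{R}}_n(\mathcal{F}_+\mid\mathcal{S})\le 4\alpha+\tfrac{12}{\sqrt{n}}\int_\alpha^1\sqrt{\ln\mathcal{N}((\mathcal{F}_+)_{|\mathcal{S}},\epsilon,\|\cdot\|_2)}\,d\epsilon$. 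Taking $\mathbb{E}_\mu$ and then infimum over $\alpha>0$ finishes the proof.

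The main obstacle I expect is constant-tracking and normalization: the Rademacher complexity in Definition~\ref{def:radem} uses a $1/\sqrt{n}$ normalization rather than the more common $1/n$, and the covering norm in the lemma is the unnormalized Euclidean $\|\cdot\|_2$ on $\mathbb{R}^n$ rather than the empirical $L_2(\widehat{P}_n)$ norm. Aligning these so that the coefficient in front of $\alpha$ comes out to exactly $4$ and the prefactor on the integral to exactly $\tfrac{12}{\sqrt{n}}$ requires careful choice of the cutoff index $K$ (so that $\epsilon_K\in[\alpha/2,\alpha]$), a clean application of Massart with the constant $\sqrt{2}$, and the factor $3$ from triangle-inequality increments; these multiply to absorb into the stated $4$ and $12$ only after the substitution $\epsilon_k=2^{-k}$ and the dyadic-to-integral conversion. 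Once these bookkeeping details are handled, the rest is a faithful transcription of the Dudley chaining proof as in \cite{dudley1967sizes,dziugaite2015training}.
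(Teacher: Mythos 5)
The paper does not actually prove Lemma \ref{lem:rad-cov-dudley}: it imports the Dudley entropy-integral bound as a known result, citing \cite{dudley1967sizes} for the chaining bound and \cite{dziugaite2015training} for the particular form used here. Your proof---condition on $\mathcal{S}$, cover $(\mathcal{F}_+)_{|\mathcal{S}}$ at dyadic scales $2^{-k}$, telescope, apply Massart's finite-class lemma to the increment sets of cardinality at most $\mathcal{N}_k^2$, convert the dyadic sum to an integral, and truncate at the level matching $\alpha$ to produce the $4\alpha$ term---is precisely the standard proof of this result, so you have supplied what the paper omits, and the outline is sound.

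One point needs more than the assertion you give it, and it is a looseness you inherit from the lemma statement rather than introduce. The chain must terminate at a coarsest level: writing $h^{(K)}=h^{(0)}+\sum_{k=1}^{K}\bigl(h^{(k)}-h^{(k-1)}\bigr)$, you still owe a bound on $\mathbb{E}_{\bm{\sigma}}\sup\bigl|\sum_i\sigma_i h^{(0)}_i\bigr|$. Under the \emph{unnormalized} Euclidean norm appearing in the statement, $(\mathcal{F}_+)_{|\mathcal{S}}$ lies in a ball of radius $\sqrt{n}$, so $\mathcal{N}\bigl((\mathcal{F}_+)_{|\mathcal{S}},1,\|\cdot\|_2\bigr)$ need not equal $1$, and the base term is not controlled by $\int_\alpha^1$. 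Your remark that ``after normalization this justifies upper-limiting the entropy integral at $1$'' conflates the two conventions: either the covering norm is the empirical $L_2(\widehat{P}_n)$ norm (radius at most $1$, integral to $1$, but then the $12/\sqrt{n}$ prefactor only emerges after rescaling $\epsilon$ by $\sqrt{n}$), or the norm is unnormalized and the integral must run to the radius $\sqrt{n}$, with the portion above $1$ either absorbed or shown to vanish. Since you yourself flagged normalization as the main obstacle, this is consistent with your own assessment; it simply needs to be resolved explicitly rather than asserted. The remaining bookkeeping---the $3\epsilon_k$ increment radius, the Cauchy--Schwarz tail bound of $\sqrt{n}\,\epsilon_K$, and the choice of cutoff producing the constants $4$ and $12$---is standard and goes through as you describe.
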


Ref.~\cite{dziugaite2015training} hinges on the term $\mathbb{E}(T2)$ with the Rademacher complexity, as stated in the following lemma.
\begin{lem}[Adapted from Lemma 1, \cite{dziugaite2015training}]\label{lem:T2-T3-rademacher}
Following notations in Theorem \ref{thm:Gene-QGAN} and Lemma \ref{lem:rad-cov-dudley},  define $\mathcal{G}=\{k(G_{\bm{\theta}}(\cdot), G_{\bm{\theta}}(\cdot))|\bm{\theta}\in \bm{\Theta}\}$ and $\mathcal{G}_+=\{k(G_{\bm{\theta}}(\bm{z}), G_{\bm{\theta}}(\cdot))|\bm{\theta}\in \bm{\Theta}, \bm{z}\in \mathcal{Z}\}$. Given the set $\mathcal{S}=\{\bm{z}^{(i)}\}_{i=1}^n$, we have  
\[\EE(T2)\leq \frac{2}{\sqrt{n-1}}\mathfrak{R}_{n-1}(\mathcal{G}_+),\]
where $\mathfrak{R}_{n-1}(\mathcal{G}_+)$ refers to the Rademacher's complexity of $\mathcal{G}_+$.  
\end{lem}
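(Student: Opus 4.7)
The plan is to prove the bound via the three-step chain that is natural given the two lemmas already at hand: (i) turn the expected supremum into a Rademacher complexity by invoking Lemma~\ref{lem:T2-T3-rademacher}, (ii) turn that Rademacher complexity into a covering-number integral by invoking Dudley's bound (Lemma~\ref{lem:rad-cov-dudley}), and (iii) control the covering number of $(\mathcal{G}_+)_{|\mathcal{S}}$ in $\|\cdot\|_2$ using the parametric structure of the quantum generator $G_{\btheta}$. Steps (i) and (ii) are plug-and-play; the real work is step (iii), whose answer is then inserted into Dudley's integral and optimized over the cutoff $\alpha$.

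For step (iii), the plan is to build a cover of $(\mathcal{G}_+)_{|\mathcal{S}}$ by covering the parameter/encoding data $(\btheta,\bm{z})\in\bm{\Theta}\times\mathcal{Z}$ and pushing the cover forward. Two Lipschitz ingredients drive the bound: the $C_3$-Lipschitzness of $k$ by assumption, and a Lipschitz estimate for $G_{\btheta}(\bm{z})$ with respect to its parameters. The latter is obtained by the standard hybrid argument on the circuit $\hat{U}(\btheta)$ in Eq.~(\ref{eqn:ansatz}): swap one $k$-qudit gate at a time, use unitary invariance of the spectral norm, and control the induced deviation of the probability vector $[\Tr(\Pi_j\hat{U}\rho_z\hat{U}^{\dagger})]_j\in\mathbb{R}^{d^N}$ in $\|\cdot\|_2$. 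This gives a per-gate Lipschitz constant of order $d^{k}$ (it is the gate's $k$-qudit block that contributes the $d^{k}$; squaring it in the $\|\cdot\|_2^2$ geometry used by Dudley's bound is what ultimately produces the $d^{2k}$ under the square root). The total number of free directions is $N_{gt}+N_{ge}$ (trainable angles plus encoding angles absorbed into $\bm{z}$), so a $\delta$-net in the bounded parameter box requires $(\mathrm{const}/\delta)^{N_{gt}+N_{ge}}$ points, and composing with the $C_3$-Lipschitz kernel and with $\sqrt{n-1}$ from the $\|\cdot\|_2$ over the $n-1$ sample points gives
\[
\ln\mathcal{N}\bigl((\mathcal{G}_+)_{|\mathcal{S}},\epsilon,\|\cdot\|_2\bigr)\;\lesssim\; d^{2k}(N_{gt}+N_{ge})\bigl(1+N\ln(c\,d\,C_3^{2}(n-1)N_{ge}N_{gt}/\epsilon)\bigr)^{2},
\]
where the $N$ inside the logarithm reflects the $N\log d$ diameter of the ambient $d^{N}$-dimensional simplex that $G_{\btheta}(\bm{z})$ inhabits, and the absolute constant $c$ is tuned so that the numerical constant inside the log comes out to $441$ in the final line.

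For step (iv), I plug this into Dudley's integral. Taking square roots, $\sqrt{\ln\mathcal{N}(\epsilon)}$ becomes a linear-in-$\log(1/\epsilon)$ function with coefficient $\sqrt{d^{2k}(N_{gt}+N_{ge})}\cdot N$, so $\int_{\alpha}^{1}\sqrt{\ln\mathcal{N}(\epsilon)}\,d\epsilon$ is controlled by $\sqrt{d^{2k}(N_{gt}+N_{ge})}\,(1+N\ln(\cdot/\alpha))$. Feeding this into Lemma~\ref{lem:rad-cov-dudley} yields
\[
\mathfrak{R}_{n-1}(\mathcal{G}_+)\;\leq\; 4\alpha+\frac{12\sqrt{d^{2k}(N_{gt}+N_{ge})}}{\sqrt{n-1}}\bigl(1+N\ln(c\,d\,C_3^{2}(n-1)N_{ge}N_{gt}/\alpha)\bigr),
\]
and then the prefactor $\tfrac{2}{\sqrt{n-1}}$ from Lemma~\ref{lem:T2-T3-rademacher} gives $\mathbb{E}(T2)\leq \tfrac{8\alpha}{\sqrt{n-1}}+\tfrac{24}{n-1}\sqrt{d^{2k}(N_{gt}+N_{ge})}(1+N\ln(\cdot/\alpha))$. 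Choosing $\alpha=1/\sqrt{n-1}$ collapses the first term to $\tfrac{8}{n-1}$ and substitutes $\alpha$ in the log to finalise the constant $441$ inside the $\ln$, yielding exactly the claimed bound.

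The step I expect to be the main obstacle is the quantitative covering-number estimate in step (iii): in particular, pinning down the correct dependence on $d^{2k}$ (per-gate Lipschitz constant squared, not $d^{k}$) and justifying the appearance of $N$ inside the logarithm rather than outside of it. This requires being careful about (a) which norm the generator is Lipschitz in, (b) how the cover in parameter space transfers to a cover on $n-1$ sample-indexed coordinates after composition with $k$, and (c) tracking all absolute constants so that the final numerical prefactors $8$, $24$, and $441$ emerge exactly. The other two steps are essentially bookkeeping once the covering number is in hand.
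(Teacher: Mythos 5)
Your proposal does not prove the stated lemma; it is circular at the decisive step and then proves a different result. The statement you were asked to establish is precisely the symmetrization inequality $\EE(T2)\leq \frac{2}{\sqrt{n-1}}\mathfrak{R}_{n-1}(\mathcal{G}_+)$, i.e., the passage from the expected supremum of the centered U-statistic $T2$ to the Rademacher complexity of the partially-applied kernel class $\mathcal{G}_+$. Yet your step (i) reads ``turn the expected supremum into a Rademacher complexity by invoking Lemma~\ref{lem:T2-T3-rademacher}'' --- you are invoking the very lemma you are supposed to prove. Everything after that (the Dudley integral, the covering-number estimate for $(\mathcal{G}_+)_{|\mathcal{S}}$ via the gate-by-gate hybrid argument, the choice $\alpha=1/\sqrt{n-1}$, the constants $8$, $24$, $441$) is a sketch of the proof of the \emph{downstream} result, Lemma~\ref{lem:cov-T2}, whose conclusion your final bound reproduces verbatim. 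None of it bears on the inequality actually claimed, which involves no covering numbers, no circuit structure, and no constants beyond $2/\sqrt{n-1}$.

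What the lemma requires is a U-statistic symmetrization argument (this is the content of Lemma~1 of \cite{dziugaite2015training}, which the paper imports rather than reproves): rewrite the U-statistic $\frac{1}{n(n-1)}\sum_{i\neq i'}k(G_{\btheta}(\bm{z}^{(i)}),G_{\btheta}(\bm{z}^{(i')}))$ as an average over $i$ of the conditional empirical means $\frac{1}{n-1}\sum_{i'\neq i}k(G_{\btheta}(\bm{z}^{(i)}),G_{\btheta}(\bm{z}^{(i')}))$; use Jensen's inequality and exchangeability to reduce to a single index; observe that, conditionally on $\bm{z}^{(i)}$, the inner sum is an empirical mean over $n-1$ i.i.d.\ points of a function lying in $\mathcal{G}_+$ (this is exactly why $\mathcal{G}_+$ is defined with a free first argument $\bm{z}\in\mathcal{Z}$, so that the supremum over $\btheta$ and over the conditioning point is absorbed into a supremum over the class); and finally apply the standard ghost-sample-plus-Rademacher symmetrization to the resulting i.i.d.\ empirical process, which yields the factor $2$ and, under the $1/\sqrt{\cdot}$ normalization of Definition~\ref{def:radem}, the prefactor $2/\sqrt{n-1}$. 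Your proposal contains none of these ingredients and, in particular, never explains where $n-1$ (rather than $n$) or the class $\mathcal{G}_+$ (rather than $\mathcal{G}$) comes from.
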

In conjunction with the above two lemmas, the term $\mathbb{E}(T2)$ is upper bounded by the covering number of $\mathcal{G}_+$. As such, the proof of Lemma \ref{lem:cov-T2} utilizes the following three lemmas, which are used to formalize the relation of covering number of two metric spaces, quantify the covering number of variational quantum circuits, and evaluate the covering number of  the space living in $N$-qudit quantum states, respectively.
\begin{lem}[Lemma 5, \cite{Barthel2018fundamental}]\label{lem:cov-num-two-space}
	Let $(\mathcal{H}_1, d_1)$ and $(\mathcal{H}_2, d_2)$ be metric spaces and $f:\mathcal{H}_1 \rightarrow  \mathcal{H}_2 $ be bi-Lipschitz such that 
	\begin{equation}\label{eqn:lem-cov-num-two-space-1}
		d_2(f(\bm{x}), f(\bm{y})) \leq K d_1(\bm{x}, \bm{y}), ~\forall \bm{x}, \bm{y} \in \mathcal{H}_1,
	\end{equation}
	and 
	\begin{equation}\label{eqn:lem-cov-num-two-space-2}
		d_2(f(\bm{x}), f(\bm{y})) \geq k d_1(\bm{x}, \bm{y}), ~\forall \bm{x}, \bm{y} \in \mathcal{H}_1~\text{with}~ d_1(\bm{x}, \bm{y}) \leq r.
	\end{equation}
Then their covering numbers obey
\begin{equation}\label{eqn:lem-cov-num-two-space-3}
	\mathcal{N}(\mathcal{H}_1, 2\epsilon /k, d_1 ) \leq \mathcal{N}(\mathcal{H}_2,  \epsilon, d_2   ) \leq  \mathcal{N}(\mathcal{H}_1, \epsilon /K, d_1 ),
\end{equation}
where the left inequality requires $\epsilon \leq kr/2$.
\end{lem}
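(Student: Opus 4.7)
The plan is to prove the two inequalities in (\ref{eqn:lem-cov-num-two-space-3}) separately by pushing covers forward along $f$ and pulling them back along $f^{-1}$, invoking the global upper Lipschitz constant $K$ for the right inequality and the local lower Lipschitz constant $k$ for the left inequality. For the right inequality $\mathcal{N}(\mathcal{H}_2, \epsilon, d_2) \leq \mathcal{N}(\mathcal{H}_1, \epsilon/K, d_1)$, I would pick an optimal $(\epsilon/K)$-cover $\{\bm{x}_i\}_{i=1}^M$ of $\mathcal{H}_1$ with $M = \mathcal{N}(\mathcal{H}_1, \epsilon/K, d_1)$ and verify that its image $\{f(\bm{x}_i)\}_{i=1}^M$ covers $\mathcal{H}_2 = f(\mathcal{H}_1)$ at scale $\epsilon$: for any $\bm{y} = f(\bm{x}) \in \mathcal{H}_2$, choose $\bm{x}_i$ with $d_1(\bm{x}, \bm{x}_i) \leq \epsilon/K$; the global bound (\ref{eqn:lem-cov-num-two-space-1}) then gives $d_2(\bm{y}, f(\bm{x}_i)) \leq K \cdot \epsilon/K = \epsilon$. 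This direction is unconditional on the size of $\epsilon$ because (\ref{eqn:lem-cov-num-two-space-1}) is global.

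For the left inequality $\mathcal{N}(\mathcal{H}_1, 2\epsilon/k, d_1) \leq \mathcal{N}(\mathcal{H}_2, \epsilon, d_2)$, I would dualize: start from an optimal $\epsilon$-cover $\{\bm{y}_j\}_{j=1}^N$ of $\mathcal{H}_2$ with $N = \mathcal{N}(\mathcal{H}_2, \epsilon, d_2)$, lift it to $\{\bm{x}_j\}_{j=1}^N \subset \mathcal{H}_1$ by selecting $\bm{x}_j \in f^{-1}(\bm{y}_j)$, and argue that $\{\bm{x}_j\}$ is a $(2\epsilon/k)$-cover of $\mathcal{H}_1$. Given any $\bm{x} \in \mathcal{H}_1$, there is an index $j$ with $d_2(f(\bm{x}), \bm{y}_j) \leq \epsilon$; if the local hypothesis $d_1(\bm{x}, \bm{x}_j) \leq r$ happens to hold, then (\ref{eqn:lem-cov-num-two-space-2}) immediately yields $d_1(\bm{x}, \bm{x}_j) \leq d_2(f(\bm{x}), f(\bm{x}_j))/k \leq \epsilon/k \leq 2\epsilon/k$, closing the argument.

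The main obstacle is certifying that admissibility condition $d_1(\bm{x}, \bm{x}_j) \leq r$ for the chosen pair, since a distant preimage matching would make (\ref{eqn:lem-cov-num-two-space-2}) inapplicable and leave the $d_1$-distance uncontrolled. The hypothesis $\epsilon \leq kr/2$ (equivalently $2\epsilon/k \leq r$) is calibrated to handle exactly this, and my plan is to convert it into a short packing-style argument that absorbs the factor of $2$: take a maximal $(2\epsilon/k)$-packing $\{p_\ell\}$ of $\mathcal{H}_1$, note that by maximality any $\bm{x} \in \mathcal{H}_1$ lies within $d_1$-distance $2\epsilon/k \leq r$ of some $p_\ell$ (so the local lower bound always applies at the packing scale), and then push $\{p_\ell\}$ through $f$ to show it is $\epsilon$-separated in $\mathcal{H}_2$ via (\ref{eqn:lem-cov-num-two-space-2}). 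Combining this with the standard packing–covering inequality $\mathcal{N}(\mathcal{H}_1, 2\epsilon/k, d_1) \leq \mathcal{P}(\mathcal{H}_1, 2\epsilon/k, d_1)$ --- where a maximal $(2\epsilon/k)$-packing is itself a $(2\epsilon/k)$-cover --- then translates the packing bound into the desired covering-number bound. The triangle-inequality slack introduced by routing through the packing is exactly what the factor $2$ in $2\epsilon/k$ is there to cover.
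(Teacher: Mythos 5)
The paper does not prove this lemma: it is imported verbatim as Lemma 5 of \cite{Barthel2018fundamental}, and in the surrounding text only the right-hand inequality of (\ref{eqn:lem-cov-num-two-space-3}) is ever exploited (and is re-derived by hand inside the proof of Lemma \ref{lem:cov-T2}). So your attempt has to stand on its own. Your argument for the right inequality does: pushing an optimal $(\epsilon/K)$-cover of $\mathcal{H}_1$ forward through $f$ and invoking the global bound (\ref{eqn:lem-cov-num-two-space-1}) is exactly the standard pushforward argument, modulo the implicit (and intended) assumption that $f$ maps onto $\mathcal{H}_2$.

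The left inequality is where there is a genuine gap, and your packing device does not close it. To conclude that the images $\{f(p_\ell)\}$ of a maximal $(2\epsilon/k)$-packing are $d_2$-separated, you must apply (\ref{eqn:lem-cov-num-two-space-2}) to the pair $(p_\ell, p_{\ell'})$, and that inequality is only available when $d_1(p_\ell, p_{\ell'}) \leq r$. Packing points are guaranteed to be \emph{at least} $2\epsilon/k$ apart; nothing caps their pairwise distance at $r$. The hypothesis $\epsilon \leq kr/2$ controls the packing \emph{scale}, and your parenthetical ``the local lower bound always applies at the packing scale'' refers to the distance from an arbitrary point to its nearest packing center (the covering property of a maximal packing) --- a different pair of points from the one you need. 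So the obstruction you correctly diagnosed in the direct cover-lifting approach (a distant preimage match making (\ref{eqn:lem-cov-num-two-space-2}) inapplicable) reappears unchanged for pairs of packing centers with $d_1(p_\ell,p_{\ell'})>r$. In fact, under the literal hypotheses the left inequality can fail: take $\mathcal{H}_1=\{0,10\}\subset\mathbb{R}$ with the usual metric, $f$ constant, $r=1$, $k=K=1$; then (\ref{eqn:lem-cov-num-two-space-2}) is vacuous, $\mathcal{N}(\mathcal{H}_2,\epsilon,d_2)=1$, yet $\mathcal{N}(\mathcal{H}_1,2\epsilon/k,d_1)=2$ for $\epsilon\leq 1/2$ (perturbing $f(10)$ to a tiny nonzero value shows injectivity alone does not rescue it). Closing the argument requires the lower Lipschitz bound to hold for all pairs --- i.e., a genuinely bi-Lipschitz injection --- in which case your first, direct lift of an $\epsilon$-cover already works and in fact yields the sharper radius $\epsilon/k$ rather than $2\epsilon/k$.
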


\begin{lem}[Lemma 2, \cite{du2021efficient}]\label{lem:cov-cirt-ansa} 
Define the operator group as
\begin{equation}\label{eqn:def-opt-group}
	\mathcal{H}_{circ}:=\left\{\hat{U}(\bm{\theta})\Pi_j\hat{U}(\bm{\theta})^{\dagger}| \bm{\theta}\in \Theta \right\}.
\end{equation}
Suppose that the employed encoding Ansatz $\hat{U}(\bm{\theta})$ containing in total $N_g$ gates, each gate $\hat{u}_i(\bm{\theta})$ acting on at most $k$ qudits, and $N_{gt}\leq N_g$ gates in $\hat{U}(\bm{\theta})$ are trainable. The $\epsilon$-covering number for the operator  group $\mathcal{H}_{circ}$  with respect to the operator-norm distance obeys
\begin{equation}
  \mathcal{N}(\mathcal{H}_{circ}, \epsilon , \|\cdot\|) \leq \left(\frac{7 N_{gt}  \|\Pi_j\| }{\epsilon} \right)^{d^{2k}N_{gt}},
\end{equation}
where $\|\Pi_j\|$ denotes the operator norm of $\Pi_j$.
\end{lem}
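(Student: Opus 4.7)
The plan is to bound $\mathcal{N}(\mathcal{H}_{circ},\epsilon,\|\cdot\|)$ by first covering the underlying unitary $\hat{U}(\btheta)$ in operator norm and then pushing that cover through the conjugation map $U\mapsto U\Pi_j U^\dagger$. Both steps rely only on Lipschitz estimates for ordinary matrix products on the unitary group, so the proof is essentially a reduction to a single-gate covering bound plus bookkeeping.

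First I would establish the covering number of an individual trainable gate. Each trainable $\hat{u}_i(\btheta)\in\mathcal{U}(d^k)$ lives in a compact real manifold of real dimension $d^{2k}$ sitting inside the operator-norm unit ball. A standard volume/packing argument, for instance covering the Lie algebra and transferring back via the exponential map, yields
\[
\mathcal{N}(\mathcal{U}(d^k),\delta,\|\cdot\|)\leq \left(\frac{C_0}{\delta}\right)^{d^{2k}}
\]
for a universal constant $C_0\leq 7/2$. This is the only analytic input; everything after it is algebraic.

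Next I would combine the per-gate nets into a net for the full Ansatz. Replacing the $\ell$-th trainable factor $\hat{u}_{i_\ell}$ by a net element $\tilde{u}_{i_\ell}$ with $\|\hat{u}_{i_\ell}-\tilde{u}_{i_\ell}\|\leq\delta$, and exploiting that every factor in $\hat{U}(\btheta)=\prod_{l=1}^{N_g}\hat{U}_l(\btheta)$ has operator norm one, a standard telescoping identity gives $\|\hat{U}(\btheta)-\tilde{U}(\btheta)\|\leq N_{gt}\delta$; the $N_g-N_{gt}$ non-trainable factors cancel identically and therefore do not enter the exponent. To translate this into a cover of $\mathcal{H}_{circ}$ I would use
\[
\|U\Pi_j U^\dagger-V\Pi_j V^\dagger\|\leq \|(U-V)\Pi_j U^\dagger\|+\|V\Pi_j(U^\dagger-V^\dagger)\|\leq 2\|\Pi_j\|\|U-V\|.
\]
Setting $\delta=\epsilon/(2\|\Pi_j\|N_{gt})$ then converts the Cartesian product of the $N_{gt}$ individual covers into an $\epsilon$-cover of $\mathcal{H}_{circ}$ whose cardinality is at most
\[
\Big[(C_0/\delta)^{d^{2k}}\Big]^{N_{gt}}=\Big(\frac{2C_0 N_{gt}\|\Pi_j\|}{\epsilon}\Big)^{d^{2k}N_{gt}},
\]
and choosing $C_0=7/2$ reproduces the stated constant $7$.

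The main obstacle is the single-gate bound in the first step: one must obtain the intrinsic-dimension exponent $d^{2k}$ rather than the naive $2d^{2k}$ that would follow from viewing $\mathcal{U}(d^k)$ as a subset of $\mathbb{C}^{d^k\times d^k}$. This forces one to parameterize the group either through its Lie algebra or through the Haar volume and to track constants so that $C_0\leq 7/2$ survives. Once this intrinsic bound is in hand, the telescoping estimate over the product and the two-line Lipschitz estimate for conjugation by $\Pi_j$ are mechanical, and they yield the claimed bound with no further loss.
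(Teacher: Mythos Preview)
Your proposal is correct and follows the standard route for such covering-number bounds: cover each trainable $k$-qudit gate in $\mathcal{U}(d^k)$ at scale $\delta$ with $(C_0/\delta)^{d^{2k}}$ elements, telescope through the product $\hat{U}(\btheta)=\prod_l\hat{U}_l(\btheta)$ using that all factors are unitary to pick up the factor $N_{gt}$, and then pass to $U\Pi_j U^\dagger$ via the two-term Lipschitz estimate for conjugation. This is exactly the argument of the cited reference \cite{du2021efficient}; the present paper does not reprove the lemma but simply imports it, so there is no alternative in-paper proof to contrast with.

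One small remark: your concern about the exponent $d^{2k}$ versus $2d^{2k}$ is the right place to be careful. The reference obtains the intrinsic dimension by covering the real Lie algebra $\mathfrak{u}(d^k)$ (real dimension $d^{2k}$) in a bounded ball and pushing forward through the exponential, which is $1$-Lipschitz in operator norm; this is precisely the parameterization you suggest, and it is what fixes the constant $C_0\leq 7/2$.
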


\begin{lem}\label{append:lem_cover_input_state}
Define the input state group as 
	$\mathcal{B}:=\left\{\rho_{\bm{z}}:=\hat{U}(\bm{z})^{\dagger}(\ket{\bm{0}}\bra{\bm{0}})\hat{U}(\bm{z}) \Big| \bm{z}\in \mathcal{Z} \right\}$. Suppose that the employed quantum circuit $\hat{U}(\bm{z})$ containing in total $N_{ge}$ parameterized gates to load $\bm{z}$ and each gate $\hat{u}_i(\bm{z})$ acting on at most $k$ qudits. The $\epsilon$-covering number for $\mathcal{B}$  with respect to the operator-norm distance obeys
\begin{equation}\label{append:eqn:encode-cover}
  \mathcal{N}(\mathcal{B}, \epsilon , \|\cdot\|) \leq \left(\frac{7 N_{ge} }{\epsilon} \right)^{d^{2k}N_{ge}}.
\end{equation}	
\end{lem}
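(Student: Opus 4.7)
The plan is to recognize Lemma~\ref{append:lem_cover_input_state} as an almost-immediate consequence of Lemma~\ref{lem:cov-cirt-ansa}. First, I would recast each element of $\mathcal{B}$ in the operator-group form of Eq.~(\ref{eqn:def-opt-group}): setting $\hat{V}(\bm{z}) := \hat{U}(\bm{z})^{\dagger}$, we have $\rho_{\bm{z}} = \hat{V}(\bm{z}) \Pi_0 \hat{V}(\bm{z})^{\dagger}$ with the fixed projector $\Pi_0 = (\ket{0}\bra{0})^{\otimes N}$. Since $\hat{U}(\bm{z})$ is a product of $N_{ge}$ parameterized gates, each acting on at most $k$ qudits, the same is true of $\hat{V}(\bm{z})$ after reversing the order of factors and replacing each gate by its inverse. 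Hence $\hat{V}(\bm{z})$ satisfies the structural hypothesis of Lemma~\ref{lem:cov-cirt-ansa}. Applying that lemma with the parameter identifications $N_{gt} \to N_{ge}$ (all encoding gates depend on $\bm{z}$) and $\|\Pi_j\| = \|\Pi_0\| = 1$ yields the claimed bound $\mathcal{N}(\mathcal{B}, \epsilon, \|\cdot\|) \leq (7 N_{ge}/\epsilon)^{d^{2k} N_{ge}}$.

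For a self-contained derivation one would retrace the standard covering-number argument used in the proof of Lemma~\ref{lem:cov-cirt-ansa}: (i) cover the local unitary group $\mathcal{U}(d^k)$ at scale $\eta$ in operator norm by at most $(C/\eta)^{d^{2k}}$ balls via a volumetric (Haar) argument; (ii) telescope $\hat{U}(\bm{z}) - \hat{U}(\bm{z}')$ across its $N_{ge}$ factors and invoke unitary invariance to obtain $\|\hat{U}(\bm{z}) - \hat{U}(\bm{z}')\| \leq \sum_l \|\hat{u}_l(\bm{z}) - \hat{u}_l(\bm{z}')\|$; (iii) apply the elementary bound $\|A \rho A^{\dagger} - B \rho B^{\dagger}\| \leq 2 \|A - B\|$ for unitaries $A, B$ and any state $\rho$, which lifts a net on unitaries to a net on $\mathcal{B}$. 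Choosing $\eta = \epsilon / (2 N_{ge})$ and taking the Cartesian product of the local nets yields a net of size $(C' N_{ge}/\epsilon)^{d^{2k} N_{ge}}$, with the absolute constant $C'$ absorbed into the value $7$ in the final bound.

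The only mild obstacle is bookkeeping: matching the constant $7$ in Lemma~\ref{lem:cov-cirt-ansa} requires using the same volumetric net on $\mathcal{U}(d^k)$ as the cited proof and carefully propagating the factor of $2$ from step (iii) when converting a unitary-level $\eta$-net into a state-level $\epsilon$-net. No new analytic input is needed: the statement is a direct transcription of the existing covering-number machinery with $\Pi_j$ specialised to the all-zero projector and the role of ``trainable gates'' played by the $N_{ge}$ data-dependent encoding gates.
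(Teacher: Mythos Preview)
Your proposal is correct and matches the paper's approach: the paper's own proof simply states that the argument is identical to that of Lemma~\ref{lem:cov-cirt-ansa} (Lemma~2 of Ref.~\cite{du2021efficient}), which is precisely the reduction you describe with $\Pi_j\to\Pi_0=(\ket{0}\bra{0})^{\otimes N}$ and $N_{gt}\to N_{ge}$. Your additional self-contained sketch of the underlying covering-number argument is a faithful outline of that cited proof.
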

\begin{proof}[Proof of Lemma \ref{append:lem_cover_input_state}]
The proof is identical to that presented in Lemma 2 of Ref.~\cite{du2021efficient}. 
\end{proof}

\medskip
We are now ready to prove Lemma \ref{lem:cov-T2}.
\begin{proof}[Proof of Lemma \ref{lem:cov-T2}] Recall the aim of Lemma \ref{lem:cov-T2} is to obtain the upper bound of $\EE(T2)$. In conjunction with Lemmas  \ref{lem:rad-cov-dudley} and \ref{lem:T2-T3-rademacher}, we obtain
\begin{equation}\label{eqn:proof-lem4-0-0}
	\EE(T2)\leq \EE\frac{2}{\sqrt{n-1}} \inf_{\alpha>0} \left(4\alpha + \frac{12}{\sqrt{n-1}}\int_{\alpha}^{1}\sqrt{\ln(\mathcal{N}((\mathcal{G}_+)_{|\mathcal{S}}, \epsilon, \|\cdot \|_2)}  d\epsilon\right),
\end{equation}
where $\mathcal{G}_+=\{k(G_{\bm{\theta}}(\bm{z}), G_{\bm{\theta}}(\cdot))|\bm{\theta}\in \bm{\Theta}, \bm{z}\in \mathcal{Z}\}$ and $\mathcal{S}$ denotes the set $\{\bm{z}^{(1)},...,\bm{z}^{(n-1)}\}$ sampled from the prior distribution $\PP_{\mathcal{Z}}$, and $(\mathcal{G}_+)_{|\mathcal{S}}=\{[k(G_{\bm{\theta}}(\bm{z}), G_{\bm{\theta}}(\bm{z}^{(i)})]_{i=1:n-1}: \btheta\in \bm{\Theta}, \bm{z}\in\mathcal{Z}\}$ denotes the set of vectors formed by the hypothesis with $\mathcal{S}$. In other words, the upper bound of $\EE(T2)$ is quantified by the covering number $\mathcal{N}((\mathcal{G}_+)_{|\mathcal{S}}, \epsilon, \|\cdot \|_2)$.

We next follow the definition of covering number to quantify how $\mathcal{N}((\mathcal{G}_+)_{|\mathcal{S}}, \epsilon, \|\cdot \|_2)$ depends on the structure information of the employed Ansatz and the input quantum states. Denote $\mathcal{Q}_{\epsilon_1}$ as an $\epsilon_1$-covering of the set $\mathcal{Q}_1=\{G_{\btheta}(\bm{z})|\btheta\in \bm{\Theta}\}$ and $\mathcal{Q}_{\epsilon_3}$ as an $\epsilon_3$-covering of the set $\mathcal{Q}=\{G_{\btheta}(\bm{z})|\bm{z}\in \bm{\mathcal{Z}}\}$. Then, the covering number $\mathcal{N}((\mathcal{G}_+)_{|\mathcal{S}}, \epsilon, \|\cdot \|_2)$ can be upper bounded by $\mathcal{N}((\mathcal{Q}_1)_{|\mathcal{S}}, \epsilon_1, \|\cdot \|_2)$ and $\mathcal{N}((\mathcal{Q}_3)_{|\mathcal{S}}, \epsilon_3, \|\cdot \|_3)$.  Mathematically, according to the explicit expression of $(\mathcal{G}_+)_{|\mathcal{S}}$,  we have for any $(\bm{\theta},\bm{z})$ and $(\bm{\theta}',\bm{z}')$  
\begin{eqnarray}\label{append:eqn:lem4-1}
	 && \Big\| [k(G_{\bm{\theta}}(\bm{z}), G_{\bm{\theta}}(\bm{z}^{(i)}))]_{i=1:n-1}  - [k(G_{\bm{\theta}'}(\bm{z}'), G_{\bm{\theta}'}(\bm{z}^{(i)}))]_{i=1:n-1} \Big\|_2   \nonumber\\
	  = &&  \Big\|  [k(G_{\bm{\theta}}(\bm{z}), G_{\bm{\theta}}(\bm{z}^{(i)}))]_{i=1:n-1} - [k(G_{\bm{\theta}'}(\bm{z}), G_{\bm{\theta}}(\bm{z}^{(i)}))]_{i=1:n-1} +  [k(G_{\bm{\theta}'}(\bm{z}), G_{\bm{\theta}}(\bm{z}^{(i)}))]_{i=1:n-1}  \nonumber \\
	  && - [k(G_{\bm{\theta}'}(\bm{z}), G_{\bm{\theta}'}(\bm{z}^{(i)}))]_{i=1:n-1} + [k(G_{\bm{\theta}'}(\bm{z}), G_{\bm{\theta}'}(\bm{z}^{(i)}))]_{i=1:n-1} - [k(G_{\bm{\theta}'}(\bm{z}'), G_{\bm{\theta}'}(\bm{z}^{(i)}))]_{i=1:n-1}\Big\|_2 \nonumber\\
	  \leq &&  \Big\|  [k(G_{\bm{\theta}}(\bm{z}), G_{\bm{\theta}}(\bm{z}^{(i)}))]_{i=1:n-1} - [k(G_{\bm{\theta}'}(\bm{z}), G_{\bm{\theta}}(\bm{z}^{(i)}))]_{i=1:n-1}\Big\|_2 +  \Big\|[k(G_{\bm{\theta}'}(\bm{z}), G_{\bm{\theta}}(\bm{z}^{(i)}))]_{i=1:n-1}  \nonumber \\
	  && - [k(G_{\bm{\theta}'}(\bm{z}), G_{\bm{\theta}'}(\bm{z}^{(i)}))]_{i=1:n-1} \Big\|_2+ \Big\|[k(G_{\bm{\theta}'}(\bm{z}), G_{\bm{\theta}'}(\bm{z}^{(i)}))]_{i=1:n-1} - [k(G_{\bm{\theta}'}(\bm{z}'), G_{\bm{\theta}'}(\bm{z}^{(i)}))]_{i=1:n-1}\Big\|_2 \nonumber\\
	  \leq &&C_3\left(\sqrt{n-1}\big\|G_{\bm{\theta}}(\bm{z})- G_{\bm{\theta}'}(\bm{z})\big\|+\Big\| \big[\big\|G_{\bm{\theta}'}(\bm{z}^{(i)}) - G_{\bm{\theta}}(\bm{z}^{(i)})\big\|\big]_{i=1:n-1}\Big\|_2+\sqrt{n-1}\big\|G_{\bm{\theta}'}(\bm{z})- G_{\bm{\theta}'}(\bm{z}')\big\|\right), 
\end{eqnarray}
where the first inequality uses the triangle inequality and the last inequality exploits $C_3$-Lipschitz property of the kernel.  Following the definition of covering number, the above relationship indicates that if for any $\btheta$ there exists $\btheta'$ such that $\big\|G_{\bm{\theta}}(\bm{z})- G_{\bm{\theta}'}(\bm{z})\big\|\leq \epsilon_1$ holds for every $\bm{z}$, and for any $\bm{z}$ there exists $\bm{z}'$ such that $\big\|G_{\bm{\theta}}(\bm{z})- G_{\bm{\theta}}(\bm{z}')\big\|\leq \epsilon_3$ holds for every $\btheta$, the composition of the covering sets $\mathcal{Q}_{\epsilon_1}$ and $\mathcal{Q}_{\epsilon_3}$ forms the covering set of $(\mathcal{G}_+)_{|\mathcal{S}}$. That is, the covering number of  $(\mathcal{G}_+)_{|\mathcal{S}}$ is upper bounded by
\begin{equation}\label{append:eqn:cov_g+_all}
	\mathcal{N}((\mathcal{G}_+)_{|\mathcal{S}}, C_3\sqrt{n-1}(2\epsilon_1+\epsilon_3), \|\cdot \|_2) \leq \mathcal{N}(\mathcal{Q}_1, \epsilon_1, \|\cdot \|_2) \times \mathcal{N}(\mathcal{Q}_3, \epsilon_3, \|\cdot \|_2).
\end{equation}

In other words, to quantify the $\epsilon$-covering   of $(\mathcal{G}_+)_{|\mathcal{S}}$, it is equivalent to deriving the upper bound of $\mathcal{N}(\mathcal{Q}_1, \epsilon/(3C_3\sqrt{n-1}), \|\cdot \|_2)$ and $\mathcal{N}(\mathcal{Q}_3, \epsilon/(3C_3\sqrt{n-1}), \|\cdot \|_2)$, respectively. We next separately derive these two quantities.

\medskip
\textit{\underline{The upper bound of $\mathcal{N}(\mathcal{Q}_1, \epsilon/(3C_3\sqrt{n-1}), \|\cdot \|_2)$.}}  Let $\mathcal{Q}_4$ be an $\frac{\epsilon}{3C_3 d^{N}\sqrt{n-1} }$-cover of $\mathcal{H}_{circ}$ in Eq.~(\ref{eqn:def-opt-group}). Then, for any $\btheta$, there exists $\btheta'$ such that $\|\hat{U}(\btheta)\Pi_j\hat{U}(\btheta)- \hat{U}(\btheta')\Pi_j\hat{U}(\btheta')\|\leq \frac{\epsilon}{3C_3 d^{N}\sqrt{n-1} }$ for every $j$ with $\hat{U}(\btheta')\Pi_j\hat{U}(\btheta')\in \mathcal{Q}_4$. This leads that for any $\bm{z}$, we have
\begin{eqnarray}
	&& \Big\|  G_{\bm{\theta}}(\bm{z})- G_{\bm{\theta}'}(\bm{z})\Big\|_2 \nonumber\\
	= && \left\| [ \Tr(\hat{U}(\btheta) \Pi_j \hat{U}(\btheta)^{\dagger} \rho_{\bm{z}}) -  \Tr(\hat{U}(\btheta') \Pi_j \hat{U}(\btheta')^{\dagger} \rho_{\bm{z}})]_{j=1:d^N}\right\|_2 \nonumber\\
	\leq  && \left\|  [\LARGE\|  \hat{U}(\btheta) \Pi_j \hat{U}(\btheta)^{\dagger}  - \hat{U}(\btheta') \Pi_j \hat{U}(\btheta')^{\dagger} \LARGE\|]_{j=1:d^N}\right\|_2 \nonumber\\
	\leq && d^{N}  \frac{\epsilon}{3C_3 d^{N}\sqrt{n-1} } \nonumber\\
	=  && \frac{\epsilon}{3C_3\sqrt{n-1}},
\end{eqnarray}
where the first inequality comes from the Cauchy-Schwartz inequality and the second inequality follows the definition of covering number.  

The above observation means that the covering set of $\mathcal{N}(\mathcal{Q}_1, \epsilon/(3C_3\sqrt{n-1}), \|\cdot \|_2)$ is independent with $\bm{z}$ and its covering number is upper bound by $\mathcal{N}(\mathcal{H}_{circ}, \frac{\epsilon}{3C_3 d^{N}\sqrt{n-1} }, \|\cdot\|_2)$. Then, by leveraging the results in Lemma \ref{lem:cov-cirt-ansa}, we obtain 
\begin{equation}\label{eqn:proof-lem4-1}
\mathcal{N}\left(\mathcal{Q}_1, \frac{\epsilon}{3C_3\sqrt{n-1}}, \|\cdot \|_2\right) \leq \mathcal{N}\left(\mathcal{H}_{circ}, \frac{\epsilon}{3C_3 d^{N}\sqrt{n-1} }, \|\cdot\|_2 \right)	\leq  \left(\frac{21C_3 d^{N}\sqrt{n-1} N_{gt}  \|\Pi_j\| }{\epsilon} \right)^{d^{2k}N_{gt}}.\end{equation}

\medskip 
\textit{\underline{The upper bound of $\mathcal{N}(\mathcal{Q}_3, \epsilon/(3C_3\sqrt{n-1}), \|\cdot \|_2)$.}} Let $\mathcal{Q}_5$ be an $\frac{\epsilon}{3C_3 d^{N} \sqrt{n-1}}$-cover of $\mathcal{B}$ in Eq.~(\ref{append:eqn:encode-cover}). Then, for any encoding state $\rho_{\bm{z}}\in \mathcal{B}$, there exists $\rho_{\bm{z}'}\in \mathcal{Q}_{5}$ with $\|\rho_{\bm{z}}- \rho_{\bm{z}'}\|\leq \frac{\epsilon}{3C_3 d^{N} \sqrt{n-1}}$.  By expanding the  term $ \big\|G_{\bm{\theta}'}(\bm{z})- G_{\bm{\theta}'}(\bm{z}')\big\|$, we obtain the following result, i.e., for any $\btheta'$,
\begin{eqnarray}
&&	\big\|G_{\bm{\theta}'}(\bm{z})- G_{\bm{\theta}'}(\bm{z}')\big\|    \nonumber\\
=  &&\big\|[ \Tr(\Pi_j \hat{U}(\btheta') \rho_{\bm{z}} \hat{U}(\btheta')^{\dagger} ) -  \Tr(\Pi_j \hat{U}(\btheta') \rho_{\bm{z}'}  \hat{U}(\btheta')^{\dagger} )]_{j=1:d^N}\big\| \nonumber\\ 
=  &&\big\|[ \Tr(\hat{U}(\btheta')^{\dagger}\Pi_j \hat{U}(\btheta') (\rho_{\bm{z}}- \rho_{\bm{z}'}))]_{j=1:d^N}\big\| \nonumber\\
\leq && \Big\|[\big\| \rho_{\bm{z}}   - \rho_{\bm{z}'} \big\|]_{j=1:d^N}\Big\|  \nonumber\\
\leq && d^{N}  \frac{\epsilon}{3C_3 d^{N} \sqrt{n-1}} \nonumber\\
= && \frac{\epsilon}{3C_3\sqrt{n-1}},
\end{eqnarray}
 where the first inequality uses $\Tr(AB)\leq \Tr(A)\|B\|$ when $0\preceq A$ and $\Tr(\hat{U}(\btheta')^{\dagger}\Pi_j \hat{U}(\btheta'))=\Tr(\Pi_j)=1$ for $\forall j\in[d^N]$,  and the last inequality follows the definition of covering number. The achieved relation means that the covering set of $\mathcal{N}(\mathcal{Q}_3, \epsilon/(3C_3\sqrt{n-1}), \|\cdot \|_2)$ does not depend on $\btheta$ and its covering number is upper bounded by $\mathcal{N}(\mathcal{B}, \frac{\epsilon}{3C_3 d^{N} \sqrt{n-1}}, \|\cdot\|_2)$.

Then, based on the results in Lemma \ref{append:lem_cover_input_state},  we have
\begin{equation}\label{eqn:proof-lem4-3}
	\mathcal{N}\left(\mathcal{Q}_3, \frac{\epsilon}{3C_3\sqrt{n-1}}, \|\cdot \|_2\right)\leq  \mathcal{N}\left(\mathcal{B}, \frac{\epsilon}{3C_3 d^{N} \sqrt{n-1}}, \|\cdot\|_2\right)\leq \left(\frac{21C_3 d^{N}\sqrt{n-1} N_{ge} }{\epsilon} \right)^{d^{2k}N_{ge}}.
\end{equation}

\medskip
Combining Eqs.~(\ref{eqn:proof-lem4-1}) and (\ref{eqn:proof-lem4-3}), the covering number $ \mathcal{N}((\mathcal{G}_+)_{|\mathcal{S}}, \epsilon, \|\cdot \|_2)$ in Eqn.~(\ref{append:eqn:cov_g+_all}) is upper bounded by
\begin{eqnarray}
	 &&  \mathcal{N}\left((\mathcal{G}_+)_{|\mathcal{S}}, \epsilon, \|\cdot \|_2\right) \nonumber\\
	  \leq  && \mathcal{N}\left(\mathcal{H}_{circ}, \frac{\epsilon}{3C_3 2^{N-1}\sqrt{n-1} }, \|\cdot\|_2 \right) \times  \mathcal{N}\left(\mathcal{B}, \frac{\epsilon}{3C_3 2^{N-1} \sqrt{n-1}}, \|\cdot\|_2\right) \nonumber\\
	 \leq && \left(\frac{21C_3 d^{N}\sqrt{n-1} N_{gt}  \|\Pi_j\| }{\epsilon} \right)^{d^{2k}N_{gt}}\left(\frac{21C_3 d^{N}\sqrt{n-1} N_{ge} }{\epsilon} \right)^{d^{2k}N_{ge}} \nonumber\\
	 = && (21C_3 d^{N}\sqrt{n-1} N_{gt})^{d^{2k}N_{gt}}(21C_3 d^{N}\sqrt{n-1} N_{ge})^{d^{2k}N_{ge}}\left(\frac{1}{\epsilon}\right)^{d^{2k}(N_{ge}+N_{gt})}.
\end{eqnarray}

Denote $C_5 = (21C_3 d^{N}\sqrt{n-1} N_{gt})^{\frac{d^{2k}N_{gt}}{d^{2k}(N_{ge}+N_{gt})}}(21C_3 d^{N}\sqrt{n-1} N_{ge})^{\frac{d^{2k}N_{ge}}{d^{2k}(N_{ge}+N_{gt})}}$. Using Lemma \ref{lem:rad-cov-dudley}, we obtain 
\begin{eqnarray}\label{eqn:proof-lem4-0-2}
	\EE(T2)\leq && \frac{2}{\sqrt{n-1}} \inf_{\alpha>0} \left(4\alpha + \frac{12}{\sqrt{n-1}}\int_{\alpha}^{1}\sqrt{\ln\left(\left(\frac{C_5}{\epsilon}\right)^{d^{2k}(N_{ge}+N_{gt})}\right)} d\epsilon \right) \nonumber\\
	= && \frac{2}{\sqrt{n-1}} \inf_{\alpha>0} \left(4\alpha + \frac{12\sqrt{d^{2k}(N_{ge}+N_{gt})}}{\sqrt{n-1}}\int_{\alpha}^{1}\sqrt{\ln\left(\frac{C_5}{\epsilon}\right) } d\epsilon \right) \nonumber\\
	\leq && \frac{2}{\sqrt{n-1}} \inf_{\alpha>0} \left(4\alpha + \frac{12\sqrt{d^{2k}(N_{ge}+N_{gt})}}{\sqrt{n-1}}\left(\epsilon + \epsilon\ln\left(\frac{C_5}{\epsilon}\right) \right)\Big|_{\epsilon=\alpha}^1 \right).
\end{eqnarray}
 
For simplicity, we set $\alpha=1/\sqrt{n-1}$ in Eq.~(\ref{eqn:proof-lem4-0-2}) and then $\EE(T2)$  is upper bounded by
\begin{eqnarray}
	\EE(T2) && \leq \frac{2}{\sqrt{n-1}} \left( \frac{4}{\sqrt{n-1}} + \frac{12\sqrt{d^{2k}(N_{ge}+N_{gt})}}{\sqrt{n-1}}\left(\epsilon + \epsilon\ln\left(\frac{C_5}{\epsilon}\right) \right)\Big|_{\epsilon=\alpha}^1 \right) \nonumber\\
	&& \leq   \frac{8}{n-1} + \frac{24\sqrt{d^{2k}(N_{ge}+N_{gt})}}{n-1}\left(1+  \ln C_5 \right).   
\end{eqnarray}
Since the two exponent terms in $C_5$ are no larger than $1$, we have $C_5\leq (21C_3 d^{N}\sqrt{n-1} N_{gt})(21C_3 d^{N}\sqrt{n-1} N_{ge})$. This relation further simplifies the upper bound $\EE(T2)$ as
\begin{equation}
	\EE(T2) \leq \frac{8}{n-1} + \frac{24\sqrt{d^{2k}(N_{ge}+N_{gt})}}{n-1}\left(1+  N\ln (441dC_3^2(n-1)N_{ge}N_{gt}) \right).
\end{equation}
 
\end{proof}

\subsection{Proof of Lemma \ref{lem:cov-T3}}\label{appendix:lemcov-T3}
\begin{proof}[Proof of Lemma \ref{lem:cov-T3}]
	The proof of Lemma   \ref{lem:cov-T3} is very similar to the one of Lemma \ref{lem:cov-T2} and thus we skip it here.
\end{proof}

\section{ QGLMs in parameterized Hamiltonian learning}\label{append:strm}
Here we first explain how QGLMs advance GLMs in the task of parameterized Hamiltonian learning (PHL) from the theoretical view. Then we conduct numerical simulations to apply QGANs to tackle parameterized Hamiltonian learning problems.  Recall an $N$-qubit parameterized Hamiltonian is defined as $H(\bm{a})$, where $\bm{a}$ is the interaction parameter (e.g., the strength of the transverse magnetic field)  sampled from a prior  distribution $\mathbb{D}$ (e.g., uniform distribution). Denote $\ket{\phi(\bm{a})}$ as the ground state of $H(\bm{a})$.  PHL aims to use $m$ training samples $\{\bm{a}^{(i)}, \ket{\phi(\bm{a}^{(i)})}\}_{i=1}^m$ to approximate the distribution  of the ground states for $H(\bm{a})$ with $\bm{a}\sim \mathbb{D}$, i.e., $\ket{\phi(\bm{a})}\sim \QQ$. If a trained learning model can well approximate $\QQ$, then it can prepare the ground state of $H(\bm{a}')$ for an unseen parameter $\bm{a}'\sim \mathbb{D}$.

\subsection{Proof of Lemma \ref{lem:append:11}}
The proof of Lemma \ref{lem:append:11} is established on the results of quantum random circuits, which is widely believed to be classically computationally hard and in turn can be used to demonstrate quantum advantages on NISQ devices \cite{boixo2018characterizing,zhu2021quantum}.  The construction of a  random quantum circuit is as follows.  Denote  $\Hcal(N,s)$ as  the distribution over the quantum circuit $\Ccal$ under 2D-lattice $(\sqrt{N}\times \sqrt{N})$ structure, where $\Ccal$ is composed of $s$ two-qubit gates, each of them is drawn from the $2$-qubit Haar random distribution, and $s$ is required to be greater than the number of qubits $N$. For simplicity, here we choose $s = 2N^2$ to guarantee the hardness of simulating the distribution $\Hcal(N,s)$. The operating rule for the $i$-th quantum gate satisfies: 
 \begin{itemize}
     \item If $i\leq N$, the first qubit of the $i$-th gate is specified to be  the $i$-th qubit and the second is selected randomly from its neighbors; \;
     \item If $i> N$, the first qubit is randomly selected from $\{1,2,...,N\}$ and randomly select the second qubit from its neighbors.
 \end{itemize}
Following the same routine,  Ref.~\cite{aaronson2017complexity} proposed a Heavy Output Generation (HOG) problem detailed below to separate the power between classical and quantum computers on the distribution of the output quantum state after performing a circuit $\Ccal$ sampled from $\Hcal(N,s)$ on the initial state $\ket{0}^{\otimes N}$. 
\begin{definition}[HOG, \cite{aaronson2017complexity}]Given a random quantum circuit $\Ccal\sim \Hcal(N,s)$ for $s\geq N^2$, generate $k$ binary strings $z_1,z_2,\cdots, z_k$ in $\{0,1\}^N$ such that at least 2/3 fraction of $z_i$'s are greater than the median of all probabilities of the circuit outputs $\Ccal\ket{0}^{\otimes N}$.
 \end{definition}
Concisely, under the quantum threshold assumption, there do not exist classical samples that can spoof  the $k$ samples output by a random quantum circuit with success probability at least $0.99$ \cite{aaronson2017complexity}. In addition, they prove that   quantum computers can solve the HOG problem with high success probability. For completeness, we introduce the quantum threshold assumption as follows.
 
 \begin{assumption}
 [quantum threshold assumption,~\cite{aaronson2017complexity}] \label{assump:1}
 There is no polynomial-time classical algorithm that takes a random quantum circuit $\Ccal$ as input with $s\geq N^2$ and decides whether $0^n$ is greater than the median of all probabilities of $\Ccal$ with success probability $1/2 + \Omega\pbra{2^{-N}}$.
 \end{assumption}

We are now ready to prove Lemma \ref{lem:append:11}. 
\begin{proof}[Proof of Lemma \ref{lem:append:11}]
The core idea of the proof is to show that there exists a ground state $\ket{\phi(\bm a)}$ of a Hamiltonian $H(\bm a)$, which can be efficiently prepared by quantum computers but is computationally hard for classical algorithms. To achieve this goal, we connect $\ket{\phi(\bm a)}$  with the output state of random quantum circuits.

With the quantum threshold assumption in Assumption \ref{assump:1}, Aarsonson and Chen \cite{aaronson2017complexity} proved that there exists a quantum state $\ket{\Phi}$ generated from a random circuit $\Ccal$ sampling from $\Hcal\pbra{N, 2N^2}$ such that HOG problem on this instance is  classically hard,  with success probability at least $0.99$. Conversely, $\ket{\Phi}$ can be efficiently prepared by the parameterized quantum circuit $\hat{U}(\btheta^*)$ whose topology is identical to $\Ccal$. Moreover, due to the fact that quantum adiabatic algorithms with 2-local Hamiltonians can implement universal quantum computational tasks \cite{kempe2006complexity}, the quantum state $\ket{\Phi}\equiv \ket{\phi(\bm{a}^*)}= \hat{U}(\btheta^*)\ket{0}^{\otimes N}$ must correspond to the ground state of a certain Hamiltonian $H(\bm{a}^*)$. 

We now leverage the above result to design a parameterized Hamiltonian learning task that separates the power of classical and quantum machines. In particular, we restrict the target distribution $\QQ$, or equivalently $\mathbb{D}$, as the delta distribution, where the probability of sampling $\ket{\Phi}\equiv \ket{\phi(\bm{a}^*)}=\hat{U}(\btheta^*)\ket{0}^{\otimes N}$ equals to one and the probability of sampling other ground states of $H(\bm{a}')$ with $\bm{a}'\neq \bm{a}^*$ is zero. In this way, the Hamiltonian learning task is reduced to using QGLM or GLM to prepare the quantum state $\ket{\Phi}$. This task can be efficiently achieved by QGML but is computationally hard for GLMs.
 \end{proof}

\subsection{Numerical simulation details}
We apply QGANs introduced in the main text to study the parameterized Hamiltonian learning problem. In particular, the parameterized Hamiltonian is specified as the XXZ spin chain, i.e.,
\begin{equation}
	H(\bm{a}) = \sum_{i=1}^N (X_iX_{i+1}+Y_iY_{i+1} + \bm{a}Z_iZ_{i+1})+\eta\sum_{i=1}^N Z_i. 
\end{equation}   
In all numerical simulations, we set $N=2$ and $\eta=0.25$. The distribution $\mathbb{D}$ for the parameter $\bm{a}$ is uniform ranging from $-0.2$ to $0.2$. In the preprocessing stage, to collect $m$ referenced samples $\{\bm{a}^{(j)}, \ket{\phi(\bm{a}^{(j)})}\}_{j=1}^m$, we first uniformly sample $\{\bm{a}^{(j)}\}_{j=1}^m$ points from $\mathbb{D}$ and calculate the corresponding eigenstates of $\{H(\bm{a}^{(j)})\}_{j=1}^m$ via the exact diagonalization. 

\begin{figure*}
	\centering
\includegraphics[width=0.95\textwidth]{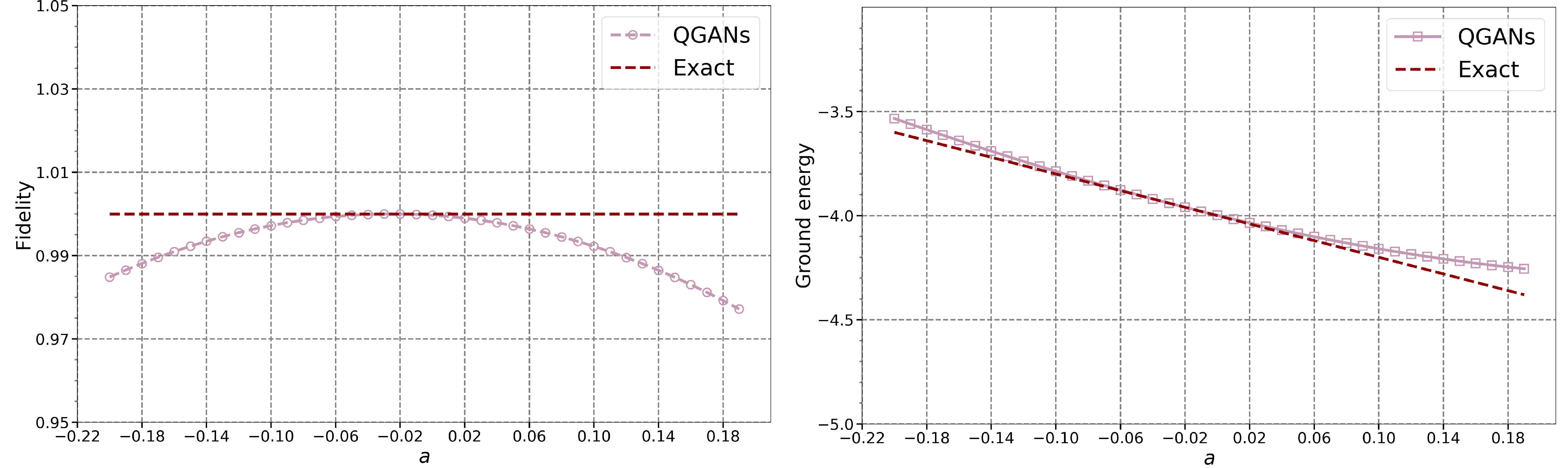}
\caption{\small{\textbf{Simulation results of QGANs in parameterized Hamiltonian learning.} The left panel shows the fidelity of the approximated ground states and the exact ground states of Hamiltonian $H(\bm{a})$ with varied $\bm{a}$. The right panel exhibits the estimated and exact ground energies of a class of Hamiltonians $H(\bm{a})$. }}
	\label{fig:append:phl}
\end{figure*}

The setup of QGAN is as follows. The prior distribution $\PP_{\mathcal{Z}}$ is set as $\mathbb{D}$. The encoding unitary is $U(\bm{a})=\text{CNOT}(\text{RZ}(\bm{a})\text{RY}(\bm{a}))\otimes (\text{RZ}(\bm{a})\text{RY}(\bm{a}))$. The hardware-efficient Ansatz is used to implement $U(\bm{\theta})=\prod_{l=1}^L U_l(\bm{\theta})$ with $U_l(\bm{\theta})=\text{CNOT}(\text{RZ}(\bm{\theta}_{l,1})\text{RY}(\bm{\theta}_{l,2}))\otimes (\text{RZ}(\bm{\theta}_{l,3})\text{RY}(\bm{\theta}_{l,4}))$. The number of blocks is $L=4$. The number of training and referenced examples is set as $n=m=9$. The Adagrad optimizer is used to update $\btheta$. The total number of iterations is set as $T=80$. We employ $5$ different random seeds to collect the statistical results. To evaluate the performance of the trained QGAN, we apply it to generate in total $41$ ground states of $H(\bm{a})$ with $\bm{a}\in [-0.2, 0.2]$ and compute the fidelity with the exact ground states.
 
The simulation results are shown in Fig.~\ref{fig:append:phl}. Specifically, for all settings of $\bm{a}$,  the fidelity between the approximated ground state output by QGANs and the exact ground state is above 0.97. Moreover, when $\bm{a}\in [-0.06, 0.06]$, the fidelity is near to $1$. The right panel depicts the estimated ground energy using the output of QGANs, where the maximum estimation error is $0.125$ when $\bm{a}=0.2$. These observations verify the ability of QGANs in estimating the ground states of parameterized Hamiltonians.

\section{More details of numerical simulations}\label{appendix:sim}

\subsection{Hyper-para and metrics}

\textbf{RBF kernel.}  The explicit expression of the radial basis function (RBF) kernel is $k(\rx, \ry) = \exp (-\frac{||\rx-\ry||^2}{\sigma^2})$, where $\sigma$ refers to the bandwidth. In all simulations, we set $ \sigma^{-2}$ as $\{0.25, 4\}$ for QCBMs and $\{-0.001,1, 10\}$ for QGANs, respectively.

\textbf{KL divergence}. We use the KL divergence to measure the similarity between the generated distribution $\PP_{\btheta}$ and true distribution $\QQ$. In the discrete setting, its mathematical expression is $\text{KL}(\PP_{\btheta}||\QQ)=\sum_i \PP_{\btheta}(i)\log(\QQ(i)/\PP_{\btheta}(i))\in [0, \infty)$. In the continuous setting, $\text{KL}(\PP_{\btheta}||\QQ)=\int \PP_{\btheta}(d\rx)\log(\QQ(d\rx)/\PP_{\btheta}(d\rx)) d\rx \in [0, \infty)$. When the two distributions are exactly matched with $\PP=\QQ$, we have $\text{KL}(\PP||\QQ)=0$.

\textbf{State fidelity.} Suppose that the generated state is $\ket{\Psi(\btheta)}$ and the target state is $\ket{\Psi^*}$. The state fidelity \cite{nielsen2010quantum} for pure states is defined as $F=|\braket{\Psi(\btheta)|\Psi^*}|^2$. 

\textbf{Optimizer.} For QCBMs, the classical optimizer is assigned as L-BFGS-B algorithm and the  tolerance for termination is set as $10^{-12}$. For QGANs, the classical optimizer is assigned as Adam with default parameters.

\textbf{Hardware parameters}. All simulation results in this study are completed by the classical device with Intel(R) Xeon(R) Gold 6267C CPU  @ 2.60GHz and 128 GB memory.

 \textbf{Data and code availability} 
 The source code for conducting all numerical simulations will be available in Github repository \url{https://github.com/yuxuan-du/QGLM-Theory}.

\subsection{Simulation results related to the task of discrete Gaussian approximation}

\textbf{Training loss of QCBMs.}  Fig.~\ref{fig:append_sim_QCBM_gau}(a) plots the last iteration  training loss of QCBMs. All hyper-parameter settings are identical to those introduced in the main text. The $x$-axis stands for the setting of $n$ used to compute $\MMD_U$ in Eq.~(\ref{append:eqn:unbia-MMD}). The simulation results indicate that the performance QCBM with RBM kernel is steadily enhanced with the increased $n$. When $n\rightarrow \infty$, its performance approaches to the QCBM with quantum kernel. These phenomenons accord with Theorem \ref{thm:Gene-QCBM}.

\begin{figure*}
	\centering
\includegraphics[width=0.98\textwidth]{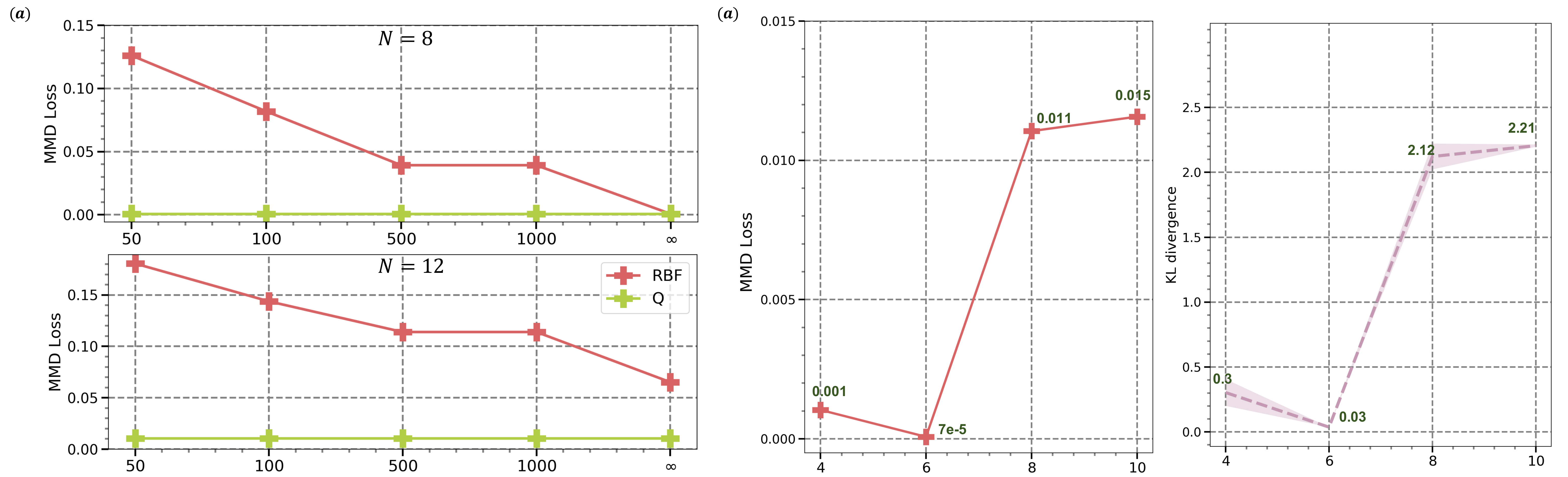}
	\caption{\small{\textbf{Generating discrete Gaussian with varied settings of QCBMs.} (a) The simulation results with the varied number samples $n$ (corresponding to $x$-axis). The upper and lower panels demonstrate the achieved MMD loss for $N=8, 12$, respectively. (b) The simulation results of QCBM  with $N=12$, the quantum kernel, and  $n\rightarrow\infty$ for the varied circuit depth   (corresponding to $x$-axis). The left and right panels separately show the achieved MMD loss and the KL divergence between the generated and the target distributions. }}
	\label{fig:append_sim_QCBM_gau}
\end{figure*}

\textbf{Effect of circuit depth.}
We explore the performance of QCBMs with quantum kernels by varying the employed Ansatz. Specifically, we consider the case of $N=12$ and set $L_1$ in Fig.~\ref{fig:dist_main_QCBM}(a) as $4, 6, 8, 10$. The collected simulation results are shown in Fig.~\ref{fig:append_sim_QCBM_gau}(b). In conduction with Fig.~\ref{fig:dist_main_QCBM}(c), QCBM with $L_1=6$ attains the best performance over all settings, where the achieved MMD loss is $7\times 10^{-5}$ and the KL divergence is $0.03$. This observation implies that properly controlling the expressivity of Ansatz, which effects the term $C_1$ in Theorem \ref{thm:Gene-QCBM}, contributes to improve the learning performance of QCBM.

\subsection{More simulation results related to the task of GHZ state approximation}

 The approximated GHZ states of QGANs with different random seeds discussed in Fig.~\ref{fig:dist_main_QCBM} are depicted in Fig.~\ref{fig:tomo-ghz}. Specifically, the difference between the approximated state and the target GHZ state becomes apparent with the decreased number of examples $n$ and the increased number of qubits $N$. These observations echo with the statement of Theorem \ref{thm:Gene-QCBM}. 

\begin{figure*}[!htb]
	\centering
\includegraphics[width=0.98\textwidth]{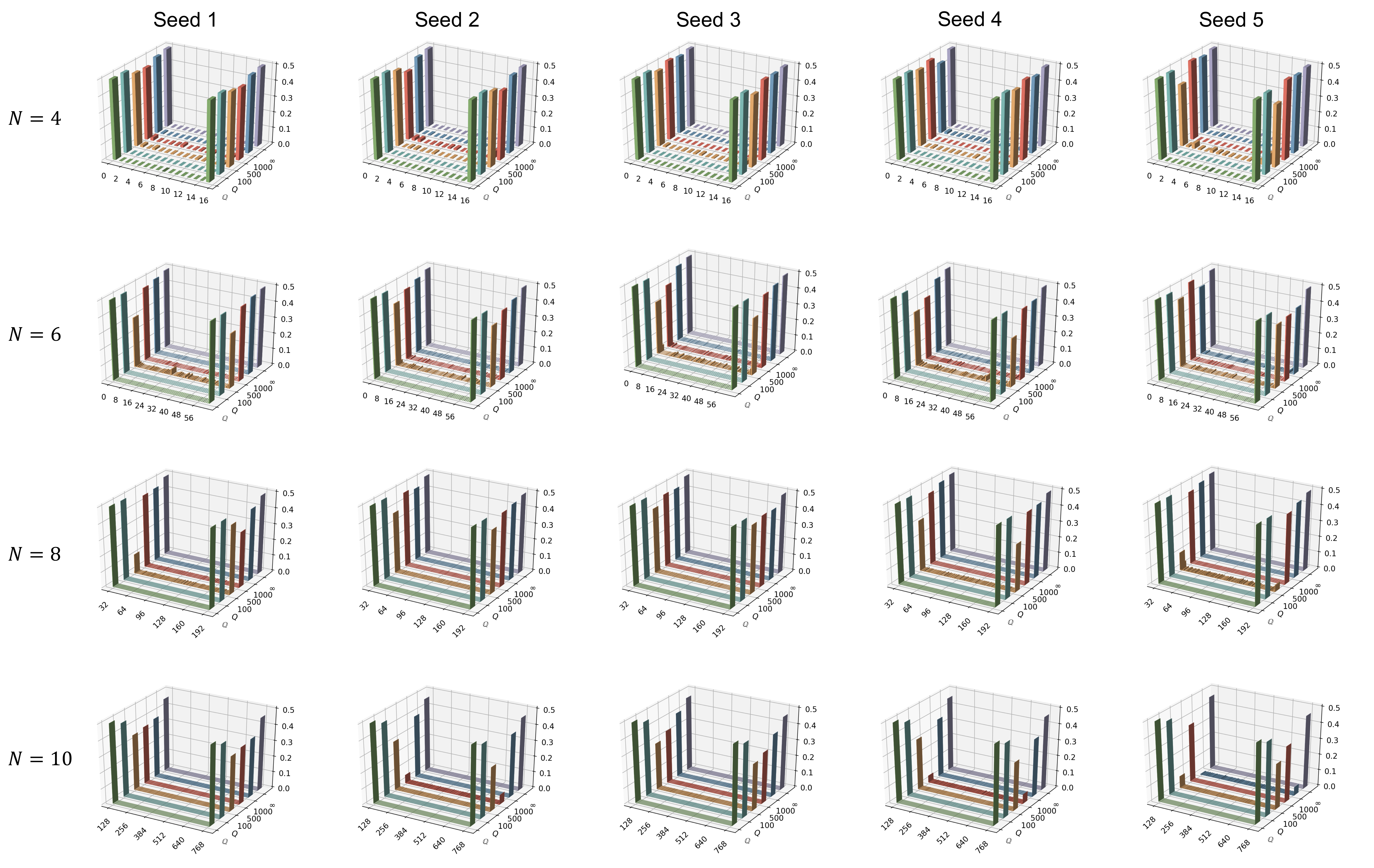}
\caption{\small{\textbf{The approximated GHZ state with the varied number of qubits.}  The label follows the same meaning explained in Fig.~\ref{fig:dist_main_QCBM}.  }}
\label{fig:tomo-ghz}
\end{figure*}

\subsection{More simulation results related to the task of 3D Gaussian approximation}

\subsubsection{Implementation of the modified style-QGAN} 
The construction details  of the modified style-QGAN, especially for  $U(\bm{z})$ and $U(\btheta)$, are illustrated in Fig.~\ref{fig:style-QGAN}. Particularly, the circuit layout of  $U(\bm{z})$ and the $l$-th layer of $\hat{U}(\btheta)$ is the same. Mathematically,  $U(\bm{z})=U_E(\bm{\gamma}_4)(\otimes_{i=1}^3  U(\bm{\gamma}_i))$, where $U(\bm{\gamma}_i))=\RZ(\bm{z}_{3})\RY(\bm{z}_{2})\RZ(\bm{z}_{2})\RY(\bm{z}_{1}), \forall i \in [3]$ and $U_E(\bm{\gamma}_4)=(\mathbb{I}_2 \otimes \CRY(\bm{z}_2))(\CRY(\bm{z}_1)\otimes \mathbb{I}_2)$ refers to the entanglement layer. Similarly, for the $l$-th layer of $\hat{U}(\btheta)$, its mathematical expression is $\hat{U}_l(\btheta)=U_E(\bm{\gamma}_4)(\otimes_{i=1}^3  U(\bm{\gamma}_i))$, where $U(\bm{\gamma}_i))=\RZ(\btheta_{3})\RY(\btheta_{2})\RZ(\btheta_{2})\RY(\btheta_{1}), \forall i \in [3]$. When $l$ is odd, the entanglement layer takes the form $U_E(\bm{\gamma}_4)=(\mathbb{I}_2 \otimes \CRY(\btheta_2))(\CRY(\btheta_1)\otimes \mathbb{I}_2)$; otherwise, its implementation is shown in the lower right panel of Fig.~\ref{fig:style-QGAN}.

\begin{figure}[h!]
	\centering
\includegraphics[width=0.48\textwidth]{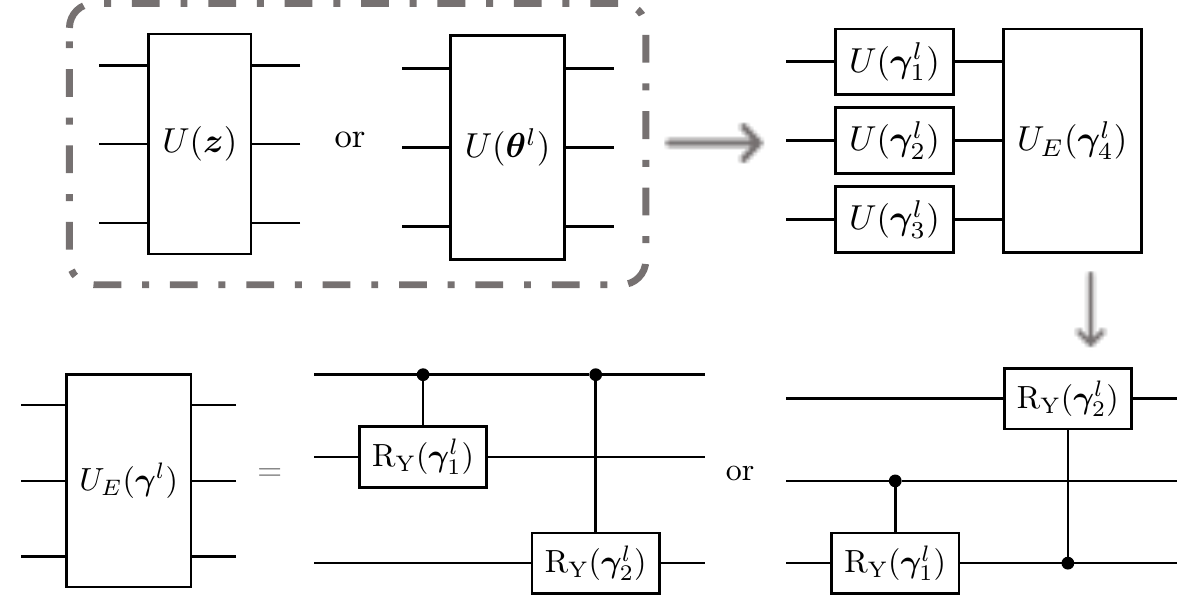}
\caption{\small{\textbf{The implementation detail of the modified style-QGANs.}     The implementation of QGANs when the number of qubits is $N=3$. $U(\bm{z})$ refers to the encoding circuit to load the example $\bm{z}$. The meaning of `$L_1-1$' is identical to the one explained in Fig.~\ref{fig:3D-gau}. The gate $U(\btheta_i^l)$ refers to the $\RY\RZ$ gates applied on the $i$-th qubit in the $l$-th  layer of $\hat{U}(\btheta)$.   The circuit architecture of  $U(\bm{z})$ and the $l$-th layer of $\hat{U}(\btheta)$ is identical, which is depicted in the upper right panel. The lower panel plots the construction of the entanglement layer $U_E(\bm{\gamma^l})$. }}
\label{fig:style-QGAN}
\end{figure}

  The optimization of the modified style-QGAN follows an iterative manner. At each iteration, a classical optimizer leverages the batch gradient descent method  to update the trainable parameters $\btheta$ minimizing  MMD loss. After $T$ iterations, the optimized parameters are output as the estimation of the optimal results. The Pseudo code of the modified style-QGAN is summarized in Alg.~\ref{alg:QGAN}.

\begin{algorithm}[H]
  \SetAlgoLined
  \KwData{Training set $\{\ry^{i}\}_{i=1}^m$,    number of examples $n$,  learning rate $\eta$,  iterations $T$, MMD loss;}
  \KwResult{Output the optimized parameters.}
Randomly divide   $\{\ry^{(i)}\}_{i=1}^m$ into $m_{\text{mini}}$ mini batches with batch size $b$\;
  Initialize parameters $\btheta$\;
  \While{$T>0$}{
    Regenerate noise inputs $\{\bm{z}^{(i)}\}_{i=1}^n$ every $r$ iterations\;
    \For{$j\leftarrow 1, m_{\text{mini}}$}{
    Generate $\{\rx^{(i)}\}_{i=1}^n$ with $\rx^{(i)}=G_{\btheta}(\bm{z}^{(i)})$\;
    Compute the $b$'th minibatch’s gradient $\nabla \MMD_U^2(\PP_{\btheta}^{n}||\QQ^{b})$ \;
     $\btheta \gets \btheta  - \eta \nabla \MMD_U^2(\PP_{\btheta}^{n}||\QQ^{b}) $ \;
    }
    $T\leftarrow T -1$\; 
    }
  \caption{The modified style-QGAN}
  \label{alg:QGAN}
\end{algorithm}

\subsubsection{More simulation results}
We next examine how the number of examples $m$ and the number of trainable gates $N_g$ effect the generalization of QGANs. The experimental setup is identical to those introduced in the main text. To attain a varied number of $N_g$, the circuit depth of Ansatz in Fig.~\ref{fig:3D-gau}(a) is set as $L=2, 4, 6, 8$. Other hyper-parameters are fixed with $T=800$, $n=m=5000$, batch size $b=64$. We repeat each setting with $5$ times to collect the simulation results. 

 \textbf{Effect of the number of examples.} Let us first focus on the setting $L=2$ and $m=5000$. In conjunction with the simulation results in the main text (i.e., $L=2$ and $m=2, 10, 200$), the simulation results in Fig.~\ref{fig:append-sim-2dgau} indicate that an increased number of $n$ and $m$ contribute to a better generalization property. Specifically, at $t=120$, the averaged  empirical $\MMD$ loss ($\MMD_U$) of QGANs is $0.0086$, which is comparable with other settings discussed in the main text. The averaged expected $\MMD$ loss is $0.0045$, which is similar to the setting with $m=200$. In other words, when the number of examples $m$ and $n$ exceeds a certain threshold, the generalization error of QGANs is dominated by other factors instead of $m$ and $n$.

\textbf{Effect of the number of trainable gates.} We next study how the number of trainable gates effects the generalization error of QGANs. Following the structure of the employed Ansatz shown in Fig.~\ref{fig:3D-gau}(a), varying the number of trainable gates amounts to varying the number of blocks $L_1$. The results of QGANs with varied $L_1$ are illustrated in Fig.~\ref{fig:append-sim-2dgau}. For all setting of QGANs, their empirical $\MMD$ loss fast converges after $40$ iterations. Nevertheless, their  expected $\MMD$ loss is distinct, where a larger $L_1$ (or equivalently, a larger number of trainable gates $N_g$) implies a higher expected $\MMD$ loss and leads to a worse generalization. These observations accord with the result of Theorem \ref{thm:Gene-QGAN} in the sense that an Ansatz with the overwhelming expressivity may incur a poor generalization ability of QGANs.

\begin{figure*}
	\centering
\includegraphics[width=0.98\textwidth]{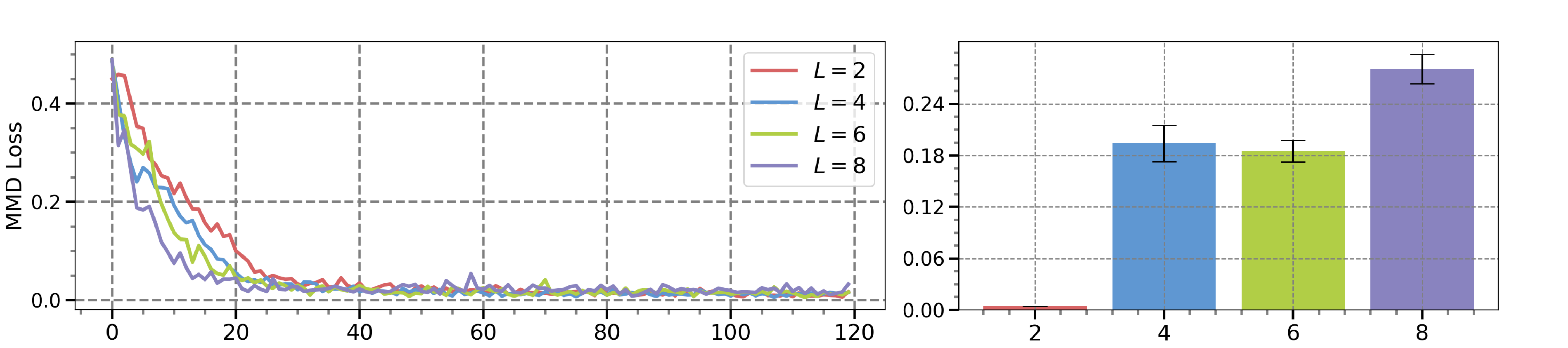}
\caption{\small{\textbf{The simulation results of QGANs with the varied number of quantum gates}. The left panel shows the MMD loss of QGANs during $120$ iterations. The label `$L=a$' refers to set the block number as $L_1=a+1$. The right panel evaluates the generalization property of trained QGANs by calculating the expected MMD loss. The x-axis refers to $L$.} }
\label{fig:append-sim-2dgau}
\end{figure*}

\end{document}